\newcommand{\SSR}{\ensuremath{\textsc{Subset Sum Ratio}}}
\newcommand{\kSSR}{\ensuremath{k\textsc{-Subset Sum Ratio}}}
\newcommand{\SubS}{\ensuremath{\textsc{Subset Sum}}}
\newcommand{\ESS}{\textsc{Equal Subset Sum}}
\DeclareMathOperator*{\argmax}{arg\,max}
\DeclareMathOperator*{\argmin}{arg\,min}
\newcommand{\mP}{\ensuremath{\allowbreak \textsc{Multiway}\allowbreak \ \textsc{Number}\  \allowbreak \textsc{Partitioning}}}
\newcommand{\kPR}{\ensuremath{k\textsc{-way Number Partitioning Ratio}}}
\newcommand{\kPRshort}{\ensuremath{k\text{-}\mathrm{PART}}}
\newcommand{\kSSRshort}{\ensuremath{ k\text{-}\mathrm{SSR}}}
\newcommand{\kPRRshort}{\ensuremath{k\text{-}\mathrm{PART_R}}}
\newcommand{\kSSRLshort}{\ensuremath{k\text{-}\mathrm{SSR_L}}}
\newcommand{\kSSRRshort}{\ensuremath{k\text{-}\mathrm{SSR_R}}}
\theoremstyle{thmstyleone}%
\newtheorem{theorem}{Theorem}
\newtheorem{definition}[theorem]{Definition}
\newtheorem{observation}[theorem]{Observation}
\newtheorem{lemma}[theorem]{Lemma}
\begin{document}

\title[Approximation Schemes for k-Subset Sum Ratio and k-way Number Partitioning Ratio]{Approximation Schemes for k-Subset Sum Ratio and k-way Number Partitioning Ratio}











\author*[1,2]{\fnm{Sotiris} \sur{Kanellopoulos}
}\email{s.kanellopoulos@athenarc.gr}

\author*[1,2,3]{\fnm{Giorgos} \sur{Mitropoulos}}\email{georgios.mitropoulos@lip6.fr}

\author[1]{\fnm{Antonis} \sur{Antonopoulos}}

\author[1]{\fnm{Nikos} \sur{Leonardos}}

\author[1,2]{\fnm{Aris} \sur{Pagourtzis}}

\author[1,2]{\fnm{Christos} \sur{Pergaminelis}}

\author[1]{\fnm{Stavros} \sur{Petsalakis}}

\author[1]{\fnm{Kanellos} \sur{Tsitouras}}

\affil[1]{
\orgname{National Technical University of Athens}, \orgaddress{
\country{Greece}}}

\affil[2]{\orgdiv{Archimedes, \orgname{Athena Research Center}, \orgaddress{
\country{Greece}}}}

\affil[3]{\orgdiv{LIP6}, \orgname{Sorbonne Université, CNRS}, \orgaddress{
\city{Paris}, \postcode{F-75005}, 
\country{France}}}

\abstract{The \textsc{Subset Sum Ratio} problem (SSR) asks, given a multiset $A$ of $n$ positive integers, to find two disjoint subsets of $A$ such that the \emph{largest-to-smallest} ratio of their sums is minimized.
In this paper we study the $k$-version of SSR, namely $k$-\textsc{Subset Sum Ratio} ($k$-SSR), which asks to minimize the \emph{largest-to-smallest} ratio of sums of $k$ disjoint subsets of $A$. We develop approximation schemes for $k$-SSR running in $O({n^{2k}}/{\varepsilon^{k-1}})$ and $\widetilde{O}(n/{\varepsilon^{3k-1}})$ time, by expanding known techniques for SSR and combining them with novel ideas to address the multi-subset setting. Our second FPTAS employs carefully designed calls to the first one, being much faster than the first one when $n\gg 1/\varepsilon$. To the best of our knowledge, these are the first FPTASs for $k$-SSR for fixed $k>2$.

We also study the variant of \textsc{Multiway Number Partitioning} in which the objective is to minimize the largest-to-smallest sum ratio of $k$ subsets. We extend our first FPTAS for $k$-SSR to this problem through additional arguments to comply with the partitioning constraint, also achieving $O({n^{2k}}/{\varepsilon^{k-1}})$ time complexity. This represents an improvement over the best known bound of $O(n^{4k^2+1}/\varepsilon^{2k^2})$ for this problem, which stems from Nguyen and Rothe's FPTAS~\cite{FD_NR} for \textsc{Minimum Envy-Ratio}, although that FPTAS concerns a more general problem.


}

\keywords{Fully polynomial-time approximation schemes, Subset Sum Ratio, Number Partitioning, Fair division, Envy minimization, Pseudo-polynomial time algorithms}



\maketitle

\section{Introduction}

The \textsc{Subset Sum} and \textsc{Partition} problems are fundamental in combinatorial optimization, with numerous applications in fair resource allocation, cryptography and scheduling. The $\ESS$ (ESS) problem~\cite{SSR_INTRODUCED} asks for two disjoint subsets of a given set with equal sums, while its optimization counterpart, $\SSR$ (SSR)~\cite{SSR_INTRODUCED,SSR_Baz}, asks to minimize the ratio of sums of two disjoint subsets. These NP-hard problems arise in bioinformatics~\cite{PD_Ciel2003,PD_Ciel2004}, game theory and economics~\cite{FD_LM, DFR2025}, and cryptography~\cite{ZVVH24}, among others. A special case of ESS belongs to PPP~\cite{PPP_Pap}, which is a subclass of TFNP, further adding to the ongoing interest for these problems.

In this paper we consider variations involving multiple subsets, namely $k$-\textsc{Subset Sum Ratio} ($\kSSRshort$) and $k$-\textsc{way Number Partitioning Ratio} ($k$-\textsc{PART})\footnote{This problem is usually referred to as "multiway", but we call it "$k$-way" to emphasize that $k$ is fixed.}. In $\kSSRshort$, the goal is to find $k$ disjoint subsets such that the \emph{largest-to-smallest} ratio of their sums is minimized,
 while $\kPRshort$ further requires these subsets to form a partition of the input. Since these problems are NP-hard, we consider approximation algorithms. In particular, we present fully polynomial-time approximation schemes (FPTASs) for both problems.

 Our results apply, among others, to scheduling tasks across $k$ processors (e.g.,~\cite{MUL_Sah,SCH_Br,HOCH_APPROX,kVisits}), as well as to the \emph{fair division of indivisible goods}, in particular the study of \emph{envy} in discrete settings (e.g.,~\cite{FD_LM,FD_NR,FD_KPW} and more recently \cite{EFX_Amanat,EFX_Haj,EFX_Zhou,EFX_param}). Specifically, $\kPRshort$ is equivalent to the \textsc{Minimum Envy-Ratio} problem~\cite{FD_LM} with identical valuation functions, for which we achieve a significant improvement over (more general) earlier work~\cite{FD_NR}.

\paragraph*{Related work} 
Since Bellman’s seminal work on \textsc{Subset Sum}~\cite{Bel}, no major improvement had occurred until a recent spike in interest culminated in various faster pseudo-polynomial algorithms~\cite{SS_KX,SS_Bri,SS_JW,SS_CLMZ} and FPTASs \cite{SS_Kel,SS_Gens}.

The ESS and SSR problems have seen recent advances, including FPTASs for SSR~\cite{SSR_Nan,SSR_MP,SSR_Alo,SSR_Bri} and an improved exact algorithm for ESS~\cite{ESS_Muc}. Notably, Bringmann~\cite{SSR_Bri} provided an SSR FPTAS running in time $O(n/\varepsilon^{0.9386})$, which is faster than known \textsc{Subset Sum} lower bounds. He achieved this in part by considering two cases based on input density. If the input is dense, a pigeonhole argument guarantees two subsets with sums within $1+\varepsilon$ and removing their intersection yields an approximate solution. If the input is sparse, keeping only the $\mathrm{polylog}(1/\varepsilon)$ largest elements is proven to be sufficient for the approximation.
Regarding variations of (\textsc{Equal}) $\SubS$ involving $k\geq2$ subsets, Antonopoulos et al.~\cite{KSS} extend the techniques of~\cite{SS_KX,SS_Bri} to provide both deterministic and randomized pseudo-polynomial algorithms.

For \textsc{Partition}, advances include a number of (very) recent FPTASs \cite{PAR_Muc,PAR_BN,PAR_Lin}, with the latest work by Chen et al. \cite{PAR_NEW} achieving an FPTAS running in time $\widetilde{O}(n + 1/\varepsilon)$, which is near optimal under the Strong Exponential Time Hypothesis. In $\mP$~\cite{MUL_Bis,MUL_SKM,MUL_Kor2,MUL_Rec}, different objective functions define various approaches to distributing elements among subsets, such as minimizing the maximum sum, maximizing the minimum sum, minimizing the \emph{largest-to-smallest} difference of sums, or minimizing the \emph{largest-to-smallest} ratio of sums. 
These objectives are not comparable: there are instances showing that the corresponding optimal solutions differ; we discuss this in detail in  Subsection~\ref{subsec_objfunc}, along with applications of each variation.

Bazgan, Santha and Tuza~\cite{SSR_Baz} provided the first FPTAS for SSR and demonstrated that \textsc{Subset Sum Difference} does not admit an FPTAS, unless $\mathsf{P} = \mathsf{NP}$. Similar arguments that apply to \textsc{Partition} have been further studied in \cite{FPTAS_Woe}.
Additionally, various related problems such as \textsc{3-Partition} and \textsc{Bin Packing} are strongly NP-hard~\cite{BOOK_Gar}, implying that they do not admit an FPTAS or a pseudo-polynomial time algorithm unless $\mathsf{P}=\mathsf{NP}$. However, these problems involve a non-constant number of subsets. For our problems, we consider a \emph{fixed} number $k$ of disjoint subsets. This restriction is quite natural in applications, as it is more likely to distribute a great amount of items to a small number of agents.

Sahni presents an FPTAS for $\mP$ in \cite{MUL_Sah} that minimizes the largest sum. However, this technique does not seem applicable to minimize the ratio.
Lipton, Markakis, Mossel and Saberi~\cite{FD_LM} show a PTAS for \textsc{Minimum Envy-Ratio} with identical valuations and non-constant number $k$ of agents, while claiming the existence of an FPTAS when $k$ is constant. 
Furthermore, Nguyen and Rothe~\cite{FD_NR} show an FPTAS running in $O(n^{4k^2+1}/\varepsilon^{2k^2})$ time for the same problem with fixed $k$ and general additive valuation functions.

\paragraph*{Our contribution}

In this paper, we present two FPTASs for $\kSSRshort$ achieving running times\footnote{$\widetilde{O}$ hides $\mathrm{polylog}(1/\varepsilon)$ factors.} of $O(n^{2k}/\varepsilon^{k-1})$ and $\widetilde{O}(n/{\varepsilon^{3k-1}})$, and an FPTAS for $\kPRshort$ running in time $O(n^{2k}/\varepsilon^{k-1})$. The latter represents a considerable improvement when compared to the FPTAS of~\cite{FD_NR}, which is the best known bound for $\kPRshort$, although it concerns the more general setting of \emph{non-identical additive} valuations. Note that the largest-to-smallest ratio of sums is the only objective function considered in the literature that both admits an FPTAS and is of interest to the study of \emph{envy}. Our approach builds upon methods established for SSR~\cite{SSR_Nan,SSR_MP,SSR_Bri}, while introducing novel strategies to address the multi-subset setting and to comply with the partitioning constraint.

We present our first FPTAS for $\kSSRshort$ in Section~\ref{sec:FPTAS_kSSR}; to the best of our knowledge, this is the first ever FPTAS for this problem. To this end, we define a restricted version of $\kSSRshort$ (analogous to that of Melissinos and Pagourtzis~\cite{SSR_MP} for SSR), for which we present a pseudo-polynomial time algorithm. Our approach relies on proving that the optimal solution of the restricted problem satisfies certain properties, while carefully handling edge cases involving large singleton sets.

We then extend our first FPTAS to $\kPRshort$ in Section~\ref{sec:FPTAS_kPR}.
The partitioning constraint renders the application of our techniques more involved; we overcome this by identifying a \emph{perfect} restriction parameter that ensures the existence of a \emph{well-behaved} optimal solution. Interestingly, although this does not yield a pseudo-polynomial time algorithm for the respective restricted case, combining the aforementioned property with standard approximation techniques yields an FPTAS for $\kPRshort$.

Lastly, building on Bringmann's work \cite{SSR_Bri}, we find that applying density arguments to $\kSSRshort$ encounters an obstacle in the dense case: obtaining $k$ subsets with approximately equal sums is unhelpful, since removing their intersection does not guarantee a feasible $\kSSRshort$ solution. We resolve this in Section~\ref{sec:FPTAS_kSSR_alternative} by restricting our density argument to singletons and subsequently using our FPTAS from Section~\ref{sec:FPTAS_kSSR} as a subroutine on reduced instances. 

\paragraph*{Prior work}

This work is an extension of the conference version presented at ISAAC 2025~\cite{ISAAC_version}. Compared to that, this version contains the proof of Theorem~\ref{thrm:approximation_ratio} (Subsection~\ref{subsec:proof_new}), which establishes that our first FPTAS for \kSSRshort\ is a $(1+\varepsilon)$-approximation. This is also necessary to bound the approximation ratio of our other two FPTASs.
Additionally, we provide the omitted pseudocodes that are necessary for understanding our algorithms in depth and make several revisions to improve clarity.
Lastly, we include an in-depth discussion of previous work on SSR and a comparison with our results (Subsection~\ref{subsec:previous_work_SSR}), as well as a conclusion (Section~\ref{sec:conclusion}) discussing several open questions and directions for future work.



\section{Preliminaries and problem definitions}\label{sec2}

\subsection{Overview of existing FPTASs for Subset Sum Ratio}
\label{subsec:previous_work_SSR}

In this subsection we briefly discuss the existing FPTASs for SSR ($k=2$) and compare them to our own FPTASs for \kSSRshort.

Table~\ref{tab:SSR} summarizes the running times of known FPTASs for SSR. The FPTAS of Nanongkai~\cite{SSR_Nan} is omitted, as its focus was providing a simpler algorithm compared to~\cite{SSR_Baz}, without improving over the running time. Note that all of these FPTASs are only known to work for $2$ subsets; none of them seem to generalize directly for $k>2$ without major modifications.

\renewcommand{\arraystretch}{1.3}
\begin{table}[h]
    \centering
    \begin{tabular}{|c|c|}
        \hline
        \textbf{Paper} & \textbf{FPTAS time} \\
        \hline
         Bazgan et al. '02~\cite{SSR_Baz} & $O(n^5/\varepsilon^3)$ \\
        \hline
         Melissinos, Pagourtzis '18~\cite{SSR_MP} & $O(n^4/\varepsilon)$ \\
        \hline
         Alonistiotis et al. '24~\cite{SSR_Alo} & $\widetilde{O}(n^{2.3}/\varepsilon^{2.6})$ \\
         & $\widetilde{O}(n^2/\varepsilon^3)$ \\
        \hline
         Bringmann '24~\cite{SSR_Bri} & $\widetilde{O}(n/\varepsilon)$ \\
         & $O(n/\varepsilon^{0.9386})$ \\
        \hline
    \end{tabular}
    \caption{History of FPTASs for SSR.}
    \label{tab:SSR}
\end{table}

The FPTASs of~\cite{SSR_Baz,SSR_Nan,SSR_MP} all rely on dynamic programming (DP). The FPTASs of Alonistiotis et al.~\cite{SSR_Alo} employ calls to exact and approximate algorithms for \textsc{Partition}, while the FPTASs of Bringmann~\cite{SSR_Bri} utilize density arguments based on obtaining approximately equal sums via the pigeonhole principle on rounded instances. Note that the first FPTAS of Bringmann~\cite{SSR_Bri} uses the FPTAS of Melissinos and Pagourtzis~\cite{SSR_MP} as a subroutine on reduced instances.

Among the FPTASs that are based on DP, the fastest is the one by Melissinos and Pagourtzis~\cite{SSR_MP}. However, its techniques are limited to the $k=2$ case: for example, the way the authors handle the validity of restricted solutions during the DP is specifically designed for two subsets, while the $k=2$ case also allows them to treat large singletons as a trivial edge case. In Section~\ref{sec:FPTAS_kSSR}, we expand on the techniques of~\cite{SSR_Nan, SSR_MP} by using a simple Boolean vector to handle the validity of restricted solutions during DP and by carefully crafting a \emph{well-behavedness} theorem (Theorem~\ref{thrm:optimal_sol}) that dictates how to handle large singletons. Conflict resolution in DP becomes significantly more complicated compared to the $k=2$ version, due to the existence of large singletons not participating in DP (Lemma~\ref{lem:DP_conflicts}). Regardless of these obstacles, our slowdown compared to~\cite{SSR_MP} only includes functions of $k$; in particular, our asymptotic complexity $\bigl(O({n^{2k}}/{\varepsilon^{k-1}})\bigr)$ matches that of~\cite{SSR_MP} for $k=2$.

Unfortunately, the ideas leading to the current state-of-the-art FPTAS for SSR by Bringmann~\cite{SSR_Bri} do not seem applicable for $k>2$ subsets. Although one could use a generalized pigeonhole argument on a rounded instance to obtain $k$ subsets with approximately equal sums, the pigeonhole principle does not guarantee \emph{disjointness}, while for $k=2$ one can simply remove the intersection to obtain disjoint subsets (cf.~\cite{SSR_Bri}).
In Section~\ref{sec:FPTAS_kSSR_alternative}, we overcome this obstacle by using a density argument restricted to singletons, guaranteeing disjointness. Using our first FPTAS as subroutine on reduced instances, we still manage to obtain an FPTAS for \kSSRshort\ running in time linear in $n$ $\bigl(\widetilde{O}(n/{\varepsilon^{3k-1}})\bigr)$, although its dependency on $\varepsilon$ is worse than Bringmann's FPTAS, since our density argument for $k>2$ is weaker.

\subsection{Objective functions for Multiway Number Partitioning}
\label{subsec_objfunc}

There are four objective functions commonly used for $\mP$, each one having applications in different fields.
\begin{enumerate}
    \item Minimizing the largest sum. This is also known as makespan minimization and is used for the Minimum Finish Time problem in the context of identical machine scheduling \cite{MUL_Sah,HOCH_APPROX}. It is also used in problems related to bin packing \cite{BIN_Leu}. \label{objfunc1} 
    \item Maximizing the smallest sum. This objective function has been studied in works related to scheduling \cite{SCH_Br} and bin packing \cite{BIN_Leu}. It has also been studied in works related to the \emph{maximin share} \cite{FD_TH,FD_KPW,FD_EB}, which is a criterion for fair item allocation. Note that algorithms for objective function (\ref{objfunc2}) can be modified to work for (\ref{objfunc1}) and vice versa \cite{MUL_Kor2}. 
    \label{objfunc2}
    \item Minimizing the difference between the largest and the smallest sum. This objective function is less common, but it has been studied in works related to $\mP$, such as \cite{MUL_SM,MUL_Kor2}. No FPTAS exists for this objective~\cite{SSR_Baz,FPTAS_Woe}. \label{objfunc3}
    \item Minimizing the ratio between the largest and the smallest sum (e.g. \cite{KSS}). Although this objective function is often overlooked in works regarding $\mP$ (e.g. \cite{MUL_Kor2,MUL_SKM}), it has been studied in the field of fair division, in the form of the \textsc{Minimum Envy-Ratio} problem \cite{FD_LM,FD_NR}, as well as in the context of scheduling~\cite{SCH_Coff}. \label{objfunc4}
\end{enumerate}

In this paper, we consider (\ref{objfunc4}) as an objective function for $k$-way Number Partitioning. Observe that all four objectives are equivalent when $k=2$; this does not hold for $k\geq 3$. This is well known for the first three objectives \cite{MUL_Kor2}, but, for the sake of completeness, we will provide some counterexamples that prove that (\ref{objfunc4}) is not equivalent to any of the other three, for $k\geq 3$.

It is easy to find counterexamples for the first two. Consider input $\{1,2,3,10\}$ for $k=3$. The partition $(\{1 \},\{2,3 \},\{10 \})$ minimizes the largest sum, but does not have optimal ratio. Similarly, consider $\{5,5,5,10\}$ for $k=3$. The partition $(\{5 \},\{5 \},\{5, 10 \})$ maximizes the smallest sum, but does not have optimal ratio. 

Finally, consider input $\{16,16,18,20,24,27,29,40 \}$ for $k=4$. The following table compares the solution that minimizes the difference with the solution that minimizes the ratio.

\renewcommand{\arraystretch}{1}
\begin{table}[h]
    \centering
    \begin{tabular}{|c|c|c|c|}
        \hline
        \textbf{Solution} & \textbf{Sums} & \textbf{Difference} & \textbf{Ratio} \\
        \hline
        \{40\}, \{16, 16, 18\}, \{24, 27\}, \{29, 20\}  & \{40,50,51,49\} &\ \textbf{11} & 1.275 \\
        \hline
        \{40, 16\}, \{24, 20\}, \{16, 29\}, \{18, 27\}  & \{56,44,45,45\} &\ 12 &\ \textbf{1.27273} \\
        \hline
    \end{tabular}
    \caption{Comparison of $\mP$ solutions.}
    \label{tab:partition_results}
\end{table}

\subsection{Notation and problem definitions}

Let $[n]=\{1,\dots, n\}$ for a positive integer $n$. 
We assume that the input $A=\{a_1,\ldots,a_n\}$ of our problems is sorted\footnote{If the input is not sorted, an additional $O(n\log n)$ time would be required. This does not affect the time complexity of any of our algorithms, assuming that $\widetilde{O}$ also hides $\mathrm{polylog}(n)$ factors.}, i.e. $a_1 \leq \ldots \leq a_n$. 

\begin{definition}[Sum of a subset of indices]
\label{def:index_sum}
Given a multiset $A=\{a_1,\ldots,a_n\}$ of positive integers and a set of indices $S \subseteq [n]$ we define:
$$\Sigma(S,A)=\sum_{i \in S} a_i$$
\end{definition}

\begin{definition}[Max and Min of $k$ subsets]
\label{def:max_min}
Let $A=\{a_1,\ldots,a_n\}$ be a multiset of positive integers and $S_1,\ldots,S_k$ be $k$ disjoint subsets of $[n]$. We define the maximum and minimum sums obtained by these sets on $A$ as:
$$M(S_1,\ldots,S_k,A) = \max_{i \in [k]}\Sigma(S_i,A) \quad \textnormal{and} \quad m(S_1,\ldots,S_k,A) = \min_{i \in [k]}\Sigma(S_i,A)$$
\end{definition}

\begin{definition}[Largest-to-smallest ratio of $k$ subsets]
\label{def:k-ratio}
Let $A=\{a_1,\ldots,a_n\}$ be a multiset of positive integers and $S_1,\ldots,S_k$ be $k$ disjoint subsets of $[n]$. We define the \emph{largest-to-smallest ratio} of these $k$ subsets on $A$ as:
$$\mathcal{R}(S_1,\ldots,S_k,A)=
\begin{cases}
\frac{M(S_1,\ldots,S_k,A)}{m(S_1,\ldots,S_k,A)} & \text{if } m(S_1,\ldots,S_k,A)>0\\
+\infty  & \text{if } m(S_1,\ldots,S_k,A)=0
\end{cases}$$
\end{definition}
Throughout the paper, we refer to the largest-to-smallest ratio of $k$ subsets simply as \textit{ratio}. 

We define the $\kSSR$ ($\kSSRshort$) problem as follows.

\begin{definition}[$\kSSRshort$]
\label{def:kSSR}
    Given a multiset $A=\{a_1,\ldots,a_n\}$ of positive integers, find $k$ disjoint subsets $S_1,\ldots,S_k$ of $[n]$, such that $\mathcal{R}(S_1,\ldots,S_k,A)$ is minimized.
\end{definition}

\begin{observation}
\label{obs:ratio_change}
Only the maximum and minimum sums affect the ratio function. If the remaining sets are altered without a sum becoming greater than the maximum or smaller than the minimum, the ratio remains unaffected. Additionally, if the minimum sum increases or the maximum sum decreases (while the other remains unchanged), the ratio decreases.
\end{observation}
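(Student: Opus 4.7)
The plan is to verify each of the three assertions directly from the definition, relying only on the elementary monotonicity of the map $(x,y)\mapsto x/y$ on the positive reals. Write $M=M(S_1,\ldots,S_k,A)$ and $m=m(S_1,\ldots,S_k,A)$. The degenerate case $m=0$ gives $\mathcal{R}=+\infty$ by definition and can be dispatched separately, so the focus is on $m>0$, where $\mathcal{R}=M/m$. This closed-form expression manifestly depends only on the pair $(M,m)$ and not on any of the other $k-2$ subset sums, which is exactly the first assertion.

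For the second assertion, suppose we modify $S_1,\ldots,S_k$ into new disjoint subsets $S_1',\ldots,S_k'$ in such a way that the two subsets attaining $M$ and $m$ are left untouched, while every other sum $\Sigma(S_i',A)$ lies in the interval $[m,M]$. By construction the new maximum is still $M$ and the new minimum is still $m$, so $\mathcal{R}(S_1',\ldots,S_k',A)=M/m=\mathcal{R}(S_1,\ldots,S_k,A)$. For the third assertion, note that $x/y$ is strictly increasing in $x>0$ and strictly decreasing in $y>0$; hence replacing $m$ by some $m'>m$ while keeping $M$ fixed yields $M/m'<M/m$, and replacing $M$ by some $M'<M$ while keeping $m$ fixed yields $M'/m<M/m$. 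In either case $\mathcal{R}$ strictly decreases, and positivity of the new minimum is preserved by hypothesis so we do not leave the regime $m>0$.

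The only point that deserves care — and the closest thing to an obstacle here — is the informal phrasing of the hypothesis in the second assertion. Read literally, the phrase \emph{``without a sum becoming greater than the maximum or smaller than the minimum''} should be understood as guaranteeing that the extreme values $M$ and $m$ themselves are preserved by the modification (for instance by leaving whichever subsets attain them untouched); otherwise one could in principle strictly lower $M$ or strictly raise $m$ while still staying inside $[m,M]$, and the ratio would strictly decrease rather than remain unchanged, which is precisely the content of the third assertion rather than the second. Under this natural reading, the observation reduces to the two monotonicity facts above and requires no further ingredients.
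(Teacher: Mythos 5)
Your proof is correct and simply makes explicit the elementary monotonicity argument that the paper leaves implicit, since Observation~\ref{obs:ratio_change} is stated without proof as an immediate consequence of Definition~\ref{def:k-ratio}. Your remark that the second assertion must be read as preserving the extreme values $M$ and $m$ themselves is a fair clarification of the intended (and, in the paper's applications, actual) usage.
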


Note that multisets are allowed as input, since they are not a trivial case for $\kSSRshort$ (in contrast to regular SSR), unless there is a number with multiplicity $k$ or more. Throughout the paper, when referring to a solution $S=(S_1,\ldots,S_k)$, we will use the simplified notations $\mathcal{R}(S,A),M(S,A),m(S,A)$ to denote ratio, maximum and minimum sums respectively.
\smallskip

We similarly define the $\kPR$ ($\kPRshort$) problem.

\begin{definition}[$\kPRshort$]
\label{def:kPR}
    Given a multiset $A=\{a_1,\ldots,a_n\}$ of positive integers, find $k$ disjoint subsets $S_1,\ldots,S_k$ of $[n]$ with $\bigcup_{i=1}^k S_i = [n]$, such that $\mathcal{R}(S_1,\ldots,S_k,A)$ is minimized.
\end{definition}

\section{An FPTAS for \textit{k}-Subset Sum Ratio}
\label{sec:FPTAS_kSSR}

We define a restricted version of $\kSSRshort$, called $\kSSRRshort$, in which the largest element\footnote{Throughout the paper, the term \emph{element} of a set refers to the index $1\leq i\leq n$ instead of the value $a_i$, unless explicitly stated otherwise.} of the first set ($S_1$) is forced to be the smallest among the maxima of all $k$ sets and equal to a given number $p$. This definition generalizes the \textit{Semi-Restricted Subset-Sums Ratio} of \cite{SSR_MP}.

\begin{definition}[$\kSSRRshort$]
\label{def:kSSRR}
Given a sorted multiset $A=\{a_1,\ldots,a_n\}$ of positive integers and an integer $p,1 \leq p \leq n-k+1$, find $k$ disjoint subsets $S_1,\ldots,S_k$ of $[n]$ with $\max(S_1)=p$ and $\max(S_i)>p$ for $1<i \leq k$, such that $\mathcal{R}(S_1,\ldots,S_k,A)$ is minimized.\footnote{The case $p>n-k+1$ would result in the instance having no feasible solution.} 
\end{definition}

Most of this section is dedicated to producing an FPTAS for the restricted version $\kSSRRshort$, which is subsequently used as a subroutine to obtain an FPTAS for $\kSSRshort$, by iterating over all possible values of $p$. Similar reductions to a version with restricted largest elements have been used in approximation schemes for $\SSR$ \cite{SSR_Nan,SSR_MP,SSR_Bri}. In our paper, this technique is used to guarantee that each set $S_i$ contains a sufficiently large element, which is critical to ensure that the algorithm is an FPTAS. 

\subsection{Properties of optimal solutions to \texorpdfstring{\textit{k}-SSR\textsubscript{R}}{k-SSRR} instances}
\label{subsec:opt}

First, we want to guarantee that there is always an optimal solution to $\kSSRRshort$ that satisfies a few important properties. Recall that the input is sorted. Define $Q=\sum_{i=1}^p a_i$ and $q=\max\{i\mid a_i\leq Q\}$. Since $Q=\Sigma([p],A)$ and $S_1\subseteq [p]$, the following is immediate by Def.~\ref{def:max_min}.

\begin{observation}
\label{obs:smallest_set}
    For all feasible solutions $(S_1,\ldots,S_k)$ to a $\kSSRRshort$ instance $(A,p)$ $$m(S_1,\ldots,S_k,A) \leq \Sigma(S_1, A) \leq Q.$$
\end{observation}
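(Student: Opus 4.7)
The plan is to verify the two inequalities separately, since they follow from essentially independent observations about the structure of feasible solutions to $\kSSRRshort$.

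For the upper bound $\Sigma(S_1,A)\leq Q$, I would first use the restriction $\max(S_1)=p$ from Def.~\ref{def:kSSRR}, which forces every index in $S_1$ to lie in $[p]$, i.e.\ $S_1\subseteq [p]$. Because the input consists of positive integers, Def.~\ref{def:index_sum} immediately gives
\[
\Sigma(S_1,A)=\sum_{i\in S_1}a_i \;\leq\; \sum_{i\in[p]}a_i \;=\; \Sigma([p],A) \;=\; Q,
\]
where the last two equalities are just the definition of $Q$ given in the text preceding the observation.

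For the lower bound $m(S_1,\ldots,S_k,A)\leq \Sigma(S_1,A)$, I would invoke Def.~\ref{def:max_min} directly: the quantity $m(S_1,\ldots,S_k,A)$ is defined as $\min_{i\in[k]}\Sigma(S_i,A)$, i.e.\ the minimum over all $k$ set-sums, so it is trivially no larger than the particular sum $\Sigma(S_1,A)$ obtained by taking $i=1$. Chaining the two inequalities yields the statement.

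There is essentially no obstacle here: both parts are direct unpackings of definitions, and the authors themselves flag the claim as immediate. The only subtle point worth highlighting in the write-up is that the restriction $\max(S_1)=p$ is what makes the first inequality meaningful (without it, $\Sigma(S_1,A)$ could of course exceed $Q$), so I would be explicit in citing Def.~\ref{def:kSSRR} when justifying $S_1\subseteq [p]$.
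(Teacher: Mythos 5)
Your proof is correct and matches the paper's reasoning exactly: the paper also derives the upper bound from $S_1\subseteq[p]$ (a consequence of $\max(S_1)=p$) together with $Q=\Sigma([p],A)$, and the lower bound directly from the definition of $m$ as a minimum over the $k$ set-sums. Your write-up merely spells out the steps the paper declares immediate.
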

Observations~\ref{obs:ratio_change} and~\ref{obs:smallest_set} are used in the proof of the following theorem to transform solutions without increasing their ratio.

\begin{theorem}
\label{thrm:optimal_sol}
    For any $\kSSRRshort$ instance $(A,p)$ there exists an optimal solution whose sets satisfy the following:
    \begin{enumerate}
        \item For all sets $S_i$ containing only elements $j$ with $a_j \leq Q$ it holds that $\Sigma(S_i, A) < 2Q$. \label{prop1}
        \item 
        Every set containing an element $j$ with $a_j > Q$ is a singleton. \label{prop2}
        \item         
        The union of singleton sets $\{j\}$ s.t. $a_j > Q$ is $\bigcup_{i=1}^x \{q+i\}$, where $x \geq 0$ is the number of these singleton sets. 
        \label{prop3}
    \end{enumerate}
\end{theorem}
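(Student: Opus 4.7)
The plan is to start from an arbitrary optimal solution $(S_1^*,\ldots,S_k^*)$ and apply a sequence of local modifications --- one phase per property --- each of which can only decrease the ratio by Observation~\ref{obs:ratio_change}. The key lever is that whenever a single set's sum is reduced but stays above the current minimum, the minimum is unaffected and the maximum cannot grow. Observation~\ref{obs:smallest_set} will be invoked repeatedly to bound $m(S,A)\leq Q$, which makes it easy to verify that a modified set stays above the minimum. A second standing fact is that $a_p\leq Q$ (since $Q=\sum_{i=1}^p a_i$ and all terms are positive), hence $q\geq p$, which will be used to keep indices of the form $q+\ell$ disjoint from $S_1\subseteq[p]$.

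I would establish Property~2 first. For any set $S_i$ with $|S_i|\geq 2$ containing an index $j$ with $a_j>Q$, note that $i\neq 1$ since $S_1\subseteq[p]$ has all its values $\leq a_p\leq Q$. Replacing $S_i$ by $\{j\}$ gives a new sum $a_j>Q\geq m(S,A)$, so the minimum is preserved, while $\Sigma(S_i,A)$ only shrinks, so the maximum cannot increase. The $\kSSRRshort$ constraint $\max(S_i)>p$ is maintained because $a_j>Q\geq a_p$ forces $j>p$. Next, for Property~1, I would process each set $S_i$ whose elements all have value $\leq Q$ and whose sum is at least $2Q$. Since each element has value $\leq Q$, one removal drops the sum by at most $Q$; starting from $\Sigma(S_i,A)\geq 2Q$, this keeps the sum $\geq Q\geq m(S,A)$ until it first falls below $2Q$. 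Iterating preserves feasibility and, by Observation~\ref{obs:ratio_change}, does not increase the ratio. Sets from the previous phase contain an element of value $>Q$ and so are untouched here.

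For Property~3, after the previous phases the singleton sets $\{j\}$ with $a_j>Q$ occupy indices $j_1<\cdots<j_x$, all strictly greater than $q$ by definition of $q$. If $j_\ell>q+\ell$ for the smallest such $\ell$, then $q+\ell$ is unused in the current solution: it is not in $S_1$ (since $q+\ell>p$), not in any non-singleton set (by Property~2, since $a_{q+\ell}>Q$), and not among the first $\ell-1$ singletons (by minimality of $\ell$). Swapping $\{j_\ell\}$ for $\{q+\ell\}$ yields a new sum $a_{q+\ell}\leq a_{j_\ell}$ that still exceeds $Q\geq m(S,A)$, so the minimum is preserved and the maximum cannot increase. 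Iterating this swap gives the desired singletons $\{q+1\},\ldots,\{q+x\}$.

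The main obstacle I expect is ensuring that the three phases do not interfere --- a later modification must not undo an earlier property. The chosen order (Property~2, then~1, then~3) sidesteps this: Property~2 isolates large-element sets as singletons, Property~1 only touches sets of purely small elements, and Property~3 only permutes among the large-element singletons, so no phase disturbs what earlier phases have established.
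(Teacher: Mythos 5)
Your proof is correct and follows essentially the same exchange-based argument as the paper's (the paper runs the phases in the order 1--2--3 rather than your 2--1--3, but both orders avoid interference for the reasons you give). The one detail to tighten is in your Property~1 phase: you must remove a \emph{non-maximal} element of $S_i$ (e.g.\ the smallest, which exists since $\Sigma(S_i,A)\geq 2Q$ with all values $\leq Q$ forces $|S_i|\geq 2$), as the paper does, so that $\max(S_i)>p$ is preserved and the asserted feasibility actually holds.
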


\begin{proof}
Let $S=(S_1,\ldots,S_k)$ be an arbitrary feasible solution. If it violates property~\ref{prop1}, i.e. there exists a set $S_i$ in $S$ that only contains elements $j$ with $a_j \leq Q$ and has sum $\Sigma(S_i, A) \geq 2Q$, we transform it as follows. For all $j\in S_i$, we have
$$m(S,A)\leq \Sigma(S_1, A) \leq Q \leq \Sigma(S_i \setminus \{j\}, A) < \Sigma(S_i, A) \leq M(S,A).$$
Thus, if we remove an element $j$ from $S_i$, the ratio cannot increase. We remove the smallest element from $S_i$. As long as the solution still has a set $S_i$ violating property~\ref{prop1}, we can apply the same process repeatedly, until there are no more such sets. Since we are only removing elements, no new sets violating property~\ref{prop1} may appear during this process. Note that the largest element of every set remains intact, therefore the $\kSSRRshort$ restrictions $max(S_1) = p$ and $\max(S_i)>p$ for $1<i \leq k$ are satisfied.

If the new solution $S'$ violates property~\ref{prop2}, i.e. it contains a set $S_i$ with an element $j$ s.t. $a_j > Q$ and $|S_i|>1$, we apply the following transformation. It holds that
$$m(S',A)\leq \Sigma(S_1, A) \leq Q < \Sigma(\{j\}, A) < \Sigma(S_i, A) \leq M(S',A).$$
As such, replacing set $S_i$ with set $\{j\}$ will result in equal or smaller ratio. We repeat this for every such set $S_i$, thus yielding a solution in which every such set is a singleton. Note that the derived solution is a feasible $\kSSRRshort$ solution and it still satisfies property~\ref{prop1}, since we did not modify sets that contain only elements $j$ with $a_j \leq Q$.

Finally, if the new solution $S''$ violates property~\ref{prop3}, i.e. it contains a singleton set $\{u\}$ s.t. $a_u > Q$ and there is an unselected\footnote{An element that is not in any set $S_i$ of the solution.} element $v$ such that $q<v<u$, we do the following. It holds that
$$m(S'',A)\leq \Sigma(S_1, A) \leq Q < \Sigma(\{v\}, A) \leq \Sigma(\{u\}, A) \leq M(S'',A).$$
Thus, selecting $v$ instead of $u$ does not increase the ratio. Repeating this for all $u$ and $v$ satisfying the above mentioned property forces the union of these singleton sets to contain the smallest indices possible, yielding a feasible solution that satisfies all three properties.

In conclusion, for any feasible solution we can find another one that satisfies all properties~\ref{prop1},~\ref{prop2},~\ref{prop3} and has equal or smaller ratio. Thus, applying this to an arbitrary optimal solution proves the theorem.
\end{proof}

\subsection{A pseudo-polynomial time algorithm for \texorpdfstring{\textit{k}-SSR\textsubscript{R}}{k-SSRR}}
\label{subsec:DP}

In this subsection we present an exact algorithm for $\kSSRRshort$, which runs in $O(nQ^{k-1})$ time and returns an optimal solution satisfying the properties of Theorem~\ref{thrm:optimal_sol}. We present the algorithm in two parts for simplicity. Algorithm~\ref{alg:exact_k-SSRR} picks certain singleton sets with large elements and calls Algorithm~\ref{alg:DP_k-SSRR} to obtain candidate partial solutions for the remaining sets. We describe each algorithm in detail before providing its respective pseudocode.

Algorithm~\ref{alg:exact_k-SSRR} iterates over all possible values for the number $x$ of singleton sets $\{j\}$, with $a_j > Q$, specifically $x\in \{0,\ldots,\min \{k-1, n-q\} \}$. To construct these $x$ sets, it picks the smallest elements of $A$ that are greater than $Q$, in accordance to property~\ref{prop3} of Theorem~\ref{thrm:optimal_sol}. For each~$x$, Algorithm~\ref{alg:exact_k-SSRR} then calls Algorithm~\ref{alg:DP_k-SSRR} as a subroutine; the latter uses dynamic programming (DP) to output partial solutions for $A' = \{a_i \in A \mid a_i \leq Q\}$ and $k'=k-x$. Note that Algorithm~\ref{alg:DP_k-SSRR} is only called in cases where $p \leq |A'| - k' +1 = q - k+x +1$; otherwise, there is no feasible $\kSSRRshort$ solution for that value of $x$ (see Definition~\ref{def:kSSRR}).
 
 Algorithm~\ref{alg:exact_k-SSRR} appends the $x$ precalculated singletons to each partial solution returned by Algorithm~\ref{alg:DP_k-SSRR} and compares the ratio of each solution in order to find the best one. The $\kSSRRshort$ solution returned by Algorithm~\ref{alg:exact_k-SSRR} is optimal, as will be shown in Theorem~\ref{thrm:kSSRR_exact_correctness}.

\begin{algorithm}[H] 
\caption{$\texttt{Exact\textunderscore $\kSSRRshort$}(A,p)$}
\label{alg:exact_k-SSRR}
\begin{algorithmic}[1]

\Require{A sorted multiset $A=\{a_1,\ldots,a_n\}, a_i\in\mathbb{Z}^+$ and an integer $p, 1 \leq p \leq n-k+1$.}

\Ensure{Disjoint subsets $(S_1,\ldots,S_k)$ of $[n]$ with $\max(S_1)=p$, $\max(S_i)>p$ for $1<i \leq k$, minimizing $\mathcal{R}(S_1,\ldots,S_k,A)$.}

\State $best\_ratio \leftarrow \infty$, $best\_solution \leftarrow (\emptyset,\ldots,\emptyset)$

\State    $Q \leftarrow \sum_{i=1}^p a_i$,         
    $q \leftarrow \max\{i\mid a_i \leq Q\}$, 
    $x_{max} \leftarrow \min \{k-1, n-q\}$
    
    \For{$x \leftarrow 0, \ldots, x_{max}$}

        \State \textbf{for} $y\leftarrow 1, \ldots, x$ \textbf{do}
        $S_{k-x+y} \leftarrow \{q+y\}$\Comment{$x$ singletons}
        \EndFor

        \State $A' \leftarrow \{a_1,\ldots,a_q\}$, $k' \leftarrow k-x$
\algstore{Exact-SSRR}
\end{algorithmic}
\end{algorithm}

\begin{algorithm}[H]
\begin{algorithmic}[1]
\algrestore{Exact-SSRR}

        \If{$p \leq q-k'+1$}
            
            \State $DP\_solutions \leftarrow \texttt{DP\textunderscore $\kSSRRshort$}(A',k',p)$  \Comment{Call Alg.~\ref{alg:DP_k-SSRR} for partial solutions}        
        
        \For{\textnormal{\textbf{all}} $(S_1,\ldots,S_{k'})$ \textnormal{\textbf{in}} $DP\_solutions$}
            \State $current\_ratio \leftarrow \mathcal{R}(S_1,\ldots,S_k,A)$ \Comment{Ratio of all $k=k'+x$ sets}

            \If{$current\_ratio < best\_ratio$}
                \State $best\_solution \leftarrow (S_1,\ldots,S_k)$, 
                $best\_ratio \leftarrow current\_ratio$
            \EndIf
        \EndFor
        \EndIf
\State \Return $best\_solution$

\end{algorithmic}
\end{algorithm}

Algorithm~\ref{alg:DP_k-SSRR} draws from techniques of \cite{SSR_MP} and \cite{SSR_Nan}, while employing a more involved DP formulation to address the multi-subset setting. 
A DP table $T$ is used to systematically construct partial solution sets, 
while ensuring that the constraints of $\kSSRRshort$ are satisfied. 

Each cell of the DP table stores a tuple $(S_1,\dots,S_k,sum_1)$, where $S_j$ ($1 \leq j \leq k $) are the sets associated with said cell and $sum_1 = \Sigma(S_1,A)$. The coordinates of a cell $T_i[D][V]$ consist of the following components:

\begin{enumerate}
    \item An index $i$, indicating the number of elements examined so far.
    \item A \emph{difference vector} $D =  [d_2,d_3,\dots,d_k]$ (sorted in nondecreasing order) that encodes the differences between the sum of $S_1$ and the sums of the other sets, i.e. $d_j = \Sigma(S_1,A) - \Sigma(S_j,A)$. 
    \item A Boolean \emph{validity vector} $V =  [v_2,v_3,\dots,v_k]$, where $v_j$ is \texttt{true} iff $\max(S_j)>p$.
\end{enumerate}

The differences are used as coordinates in the DP table instead of the sums of the sets, in order to decrease the complexity of the algorithm by an order of $Q$ (see Lemma~\ref{lem:kSSRR_exact_complexity}). We also introduce $V$ as a coordinate, which is crucial to ensure optimality without increasing the complexity of the algorithm; essentially, the ratio of two tuples can be compared only if all of their validity Booleans are equal (see Lemma~\ref{lem:DP_conflicts}).

We now describe the DP process in a bottom-up manner. All cells $T_i[D][V]$ are initialized to empty tuples, except the cell $T_0[a_p,\ldots,a_p] [\texttt{false},\ldots,\texttt{false}]$, which is initialized to $(\{p\}, \emptyset, \ldots, \emptyset, a_p)$, thus forcing $p$ to be contained in $S_1$. Every element $i\in [q]$ is processed in increasing order. Assuming row $T_{i-1}$ contains tuples constructed by using elements up to $i-1$ (along with $p$, which is always in $S_1$), the next row $T_i$ is filled as follows. For each non-empty tuple $C=(S_1,\dots,S_k,sum_1)$ contained in some cell $T_{i-1}[D][V]$, $D=[d_2,\dots,d_k]$, $V=[v_2,\dots,v_k]$, we sequentially consider the cases below:
\begin{itemize}
    \item Element $i$ is not added to any set of $C$. In this case, $C$ is copied into $T_i[D][V]$.
    \item Element $i$ is added to $S_1$ (only if $i<p$). This produces a new tuple $C'=(S_1 \cup \{i\},\dots,S_k,sum_1+a_i)$ to be inserted into $T_i[D'][V]$, where $D'=[d_2',\dots,d_k']$ is the appropriately updated difference vector, i.e. $d_j'=d_j + a_i$, $1<j\leq k$.
    \item Element $i\neq p$ is added to each set $S_j$ ($1<j\leq k$), one at a time. This produces a new tuple $(S_1,\dots,S_j \cup \{i\},\ldots,S_k,sum_1)$, a difference vector $[d_2',\dots,d_k']$ and a validity vector $[v_2',\dots,v_k']$. Specifically, $v_j'$ is set to \texttt{true} if $i>p$ and $d_j'$ is set to $d_j - a_i$, while $v_u'=v_u$ and $d_u'=d_u$ for all $u\neq j$. This difference vector is then sorted in nondecreasing order, while the sets and the validity vector are rearranged accordingly, thus producing a tuple $C'$ and vectors $D',V'$. The tuple $C'$ is then inserted into $T_i[D'][V']$.
\end{itemize}

Algorithm~\ref{alg:DP_k-SSRR} prevents the new tuple from being inserted into a cell in the following cases:
\begin{enumerate}
    \item The cell already contains another tuple with equal or larger $sum_1$ (see Lemma~\ref{lem:DP_conflicts}).
    \item The updated vector $D'$ contains a difference smaller than or equal to $-2Q$. Such a tuple can be ignored, due to property~$\ref{prop1}$ of Theorem~\ref{thrm:optimal_sol}.
\end{enumerate}

In the end, Algorithm~\ref{alg:DP_k-SSRR} returns all non-empty tuples contained in cells $T_q[D][\textnormal{\texttt{true},\ldots,\texttt{true}}]$ (for all $D$), thus ensuring that every partial solution returned to Algorithm~\ref{alg:exact_k-SSRR} complies with the $\kSSRRshort$ restrictions. This concludes the description of Algorithm~\ref{alg:DP_k-SSRR}.

\begin{algorithm}[H] 
\caption{$\texttt{DP\textunderscore $\kSSRRshort$}(A,k,p)$}
\label{alg:DP_k-SSRR}
\begin{algorithmic}[1]

\Require{A sorted multiset $A=\{a_1,\ldots,a_q\}, a_i\in\mathbb{Z}^+$, an integer $k \geq 1$ and an integer $p,1 \leq p \leq q-k+1$, with the restriction $a_q \leq \sum_{i=1}^p a_i$.}

\Ensure{A set $solutions$ containing tuples of $k$ disjoint subsets $(S_1,\ldots,S_k)$ of $[q]$, with $\max(S_1)=p$, $\max(S_i)>p$ for $1<i \leq k$.}

\State $Q \leftarrow \sum_{i=1}^p a_i$

\State $T_i[D][V] \leftarrow \emptyset$ \textbf{for all} $1 \leq i \leq q$, $D=[d_2,\ldots,d_k]$, $V=[v_2,\ldots,v_k]$ with $d_2 \leq \ldots \leq d_k$

\State $T_0[a_p,\ldots,a_p] [\texttt{false},\ldots,\texttt{false}] \leftarrow (\{p\}, \emptyset, \ldots, \emptyset, a_p)$

\Comment{$S_1=\{p\}, S_j = \emptyset$, $1 < j \leq k$, $sum_1=a_p$}

\For{$i \leftarrow 1,\ldots,q$}

     \If{$i=p$}\Comment{$p$ is already in $S_1$, just copy $T_{i-1}$}
        \State $T_i[D][V] \leftarrow T_{i-1}[D][V]$ \textbf{for all} $T_{i-1}[D][V] \neq \emptyset$
        
        \State \textbf{continue} to next $i$
    \EndIf

    \For{\textnormal{\textbf{all}} $T_{i-1}[D][V] \neq \emptyset$}

        \If{$T_{i}[D][V] = \emptyset$ $\vee$ $T_{i-1}[D][V].sum_1>T_{i}[D][V].sum_1$}
        \State $T_i[D][V] \leftarrow T_{i-1}[D][V]$ \Comment{do not use element $i$}
        \EndIf

        \If{$i<p$} \Comment{add $i$ to $S_1$ only if $i<p$}
        \State $(S_1, \ldots, S_k, sum_1'), D', V' \leftarrow \texttt{update}(a_i,i,1,D,V,T_{i-1}[D][V])$
                
            \If{$T_{i}[D'][V'] = \emptyset$ $\vee$ $sum_1'>T_{i}[D'][V'].sum_1$}
                    \State $T_{i}[D'][V'] \leftarrow (S_1, \ldots, S_k, sum_1')$
            \EndIf
        \EndIf
        
        \For{$j \leftarrow 2, \ldots, k$} \Comment{add $i$ to $S_j$ $(j>1)$}
            \If{$d_j - a_i > -2Q$} \algorithmiccomment{prune if $d_j$ would become $\leq -2Q$}
            
                \State \hbox{$(S_1, \ldots, S_k, sum_1'), D', V' \leftarrow \texttt{update}(a_i,i,j,D,V,T_{i-1}[D][V])$}
                
                \If{$T_{i}[D'][V'] = \emptyset$ $\vee$ $sum_1'>T_{i}[D'][V'].sum_1$}
                \State $T_{i}[D'][V'] \leftarrow (S_1, \ldots, S_k, sum_1')$
                \EndIf
            \EndIf
        \EndFor    
    \EndFor    
\EndFor

\algstore{DP_k-SSRR}
\end{algorithmic}
\end{algorithm}

\begin{algorithm}[H]
\begin{algorithmic}[1]
\algrestore{DP_k-SSRR}
\State $solutions \leftarrow \emptyset$

\For{\textnormal{\textbf{all}} $T_q[D][\textnormal{\texttt{true},\ldots,\texttt{true}}] \neq \emptyset$}
    \Comment{consider only $T_q$ with $v_j = \texttt{true}$, $\forall j$}
    \State $(S_1, \ldots, S_k, sum_1) \leftarrow T_q[D][\texttt{true},\ldots,\texttt{true}]$
        
    \State $solutions \leftarrow solutions \cup \{(S_1, \ldots, S_k)\}$
\EndFor
\State \Return $solutions$
\end{algorithmic}
\end{algorithm}

Note that in cases where Algorithm~\ref{alg:DP_k-SSRR} is called with input $k=1$, the difference and validity vectors are empty, so each $T_i$ consists of just one DP cell.

It remains to define the \texttt{update} function used by Algorithm~\ref{alg:DP_k-SSRR}. This function is tasked with adding element $i$ to $S_j$ and updating all fields affected by this addition. Specifically, $D$ is modified by adding $a_i$ to all differences (if $j=1$) or subtracting $a_i$ from $d_j$ (if $j \neq 1)$. Next, $sum_1$ is updated (if $j=1$) and $v_j$ is set to \texttt{true} if $i>p$ and $j>1$. Lastly, $D$ is sorted in ascending order, while the sets and validity values associated with each difference are rearranged accordingly. This function (Algorithm~\ref{alg:update}) runs in time $O(k)$, since (at most) one $d_j$ needs to be rearranged. 

\begin{algorithm}[H] 
\caption{\texttt{update}$(a_i,i,j,D,V,S_1,\ldots,S_k,sum_1)$}
\label{alg:update}
\begin{algorithmic}[1]

\Require{An element $a_i$ and its index $i$, an index $j$ ($1\leq j \leq k$), a sorted difference vector $D=[d_2,\ldots,d_k]$, a Boolean validity vector $V=[v_2,\ldots,v_k]$, $k$ disjoint sets $S_j$ containing elements in $[i-1]$ and the sum $sum_1$ of $S_1$.}

\Ensure{$k$ updated sets $S_j$, the updated sum $sum_1$ of $S_1$, the updated (sorted) difference vector $D$ and the updated validity vector $V$.}



\If{$j = 1$}
    \State $S_1 \leftarrow S_1 \cup \{i\}, \; sum_1 \leftarrow sum_1 + a_i$

    
    \For{\textbf{all} $d_j$ in $D$}
        \State $d_j \leftarrow d_j+a_i$
    \EndFor
    
    \State \Return $(S_1,\dots,S_k,sum_1),\;  D,\;  V$
\EndIf

\State $S_j \leftarrow S_j \cup \{ i\}, \; d_j \leftarrow d_j-a_i$

\If{$i > p \;$}
    \State $v_j\leftarrow\texttt{true}$
\EndIf

\While{$d_{j-1} > d_j$}
\Comment{Swap the sets and their respective $d$, $v$}
\State \texttt{swap}$(S_j,S_{j-1})$,
\texttt{swap}$(d_j,d_{j-1})$, \texttt{swap}$(v_j,v_{j-1})$
    


    \State $j \leftarrow j-1$
    
\EndWhile

\State \Return $(S_1,\dots, S_k,sum_1), \; D, \; V$

\end{algorithmic}
\end{algorithm}



We say that a \emph{conflict} occurs when a tuple is to be stored in a cell $T_i[D][V]$ which is already occupied by another (non-empty) tuple. Note that conflicting tuples must have the same difference and validity vectors and only use elements up to $i$, by definition. Since the DP process must minimize the overall ratio, including potential singletons not participating in said process, conflict resolution becomes more involved for $k>2$.

\begin{lemma}[Conflict resolution]
\label{lem:DP_conflicts}  
    Let $C_1 = (S_1,\ldots,S_k,sum_1)$ and $C_2 = (S_1',\ldots,S_k',sum_1')$ be two tuples that result in a conflict in a cell $T_i[D][V]$ of the DP table of Algorithm $\ref{alg:DP_k-SSRR}$, such that $sum_1 < sum_1'$. No optimal $\kSSRRshort$ solution may use\footnote{In our dynamic programming framework, we say that a solution $S$ \emph{uses} a tuple $C=(S_1,\ldots,S_k,sum_1)$ if $C$ appears in an intermediate step of the construction of $S$.} $C_1$. 
\end{lemma}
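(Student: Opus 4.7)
My plan is to show that for any $\kSSRRshort$ solution $S$ that uses $C_1$, there exists a solution $S^*$ that uses $C_2$ with $\mathcal{R}(S^*,A)\leq\mathcal{R}(S,A)$, which will justify discarding $C_1$. Set $\delta := sum_1'-sum_1 > 0$. Since $C_1$ and $C_2$ lie in the same cell, they share $D$ and $V$, so positionally $\Sigma(S_j',A)=\Sigma(S_j,A)+\delta$ for every index $j$ of their DP sets; moreover, because the DP transitions depend only on $D$, $V$, and the index of the element being processed, any sequence of extension decisions starting at $C_1$ is also a valid sequence starting at $C_2$, producing the same chain of cells and hence the same final $D$ and $V$.

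Given an optimal $S=(T_1,\ldots,T_k)$ using $C_1$, whose last $x$ sets are the singletons $\{q+1\},\ldots,\{q+x\}$ attached by Algorithm~\ref{alg:exact_k-SSRR}, I would define $S^*=(T_1^*,\ldots,T_k^*)$ by applying the replayed extension to $C_2$ and retaining the same singletons. This yields $\Sigma(T_j^*,A)=\Sigma(T_j,A)+\delta$ for the first $k-x$ indices and $\Sigma(T_j^*,A)=\Sigma(T_j,A)$ for the last $x$. Feasibility is routine: extension elements lie in $\{i+1,\ldots,q\}$ and are disjoint from the sets of $C_2$; singletons come from indices above $q$; and since $V$ is updated identically in the replay, the constraints $\max(T_1^*)=p$ and $\max(T_j^*)>p$ for $j>1$ are preserved. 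For the ratio comparison, let $M_D, m_D$ denote the maximum and minimum sums among the first $k-x$ sets of $S$ and let $M_S$ denote the largest singleton sum. Because $T_1^*\subseteq[p]$ implies $\Sigma(T_1^*,A)\leq Q$, and property~\ref{prop3} of Theorem~\ref{thrm:optimal_sol} forces every singleton value to strictly exceed $Q$, the overall minimum of $S^*$ is still attained by a DP set and equals $m(S,A)+\delta$. A short case analysis on the location of the maximum of $S^*$ then completes the proof: if the maximum is a DP set then $\mathcal{R}(S^*,A)=(M_D+\delta)/(m_D+\delta)\leq M_D/m_D\leq\mathcal{R}(S,A)$ (using $M_D\geq m_D$ and $M_D\leq M(S,A)$); if it is a singleton then $M(S^*,A)\leq M(S,A)$ and $m(S^*,A)\geq m(S,A)$ immediately yield the inequality.

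The main obstacle will be establishing that no singleton becomes the new overall minimum after the $\delta$-shift, i.e.\ that $m(S^*,A)=m(S,A)+\delta$ rather than some singleton sum. This depends crucially on combining the singleton lower bound from Theorem~\ref{thrm:optimal_sol} with the bound $\Sigma(T_1^*,A)\leq Q$ coming from $T_1^*\subseteq[p]$. Without these structural properties, a singleton of value only slightly larger than $Q$ could undercut the shifted DP sums and break the ratio comparison; fortunately, because Algorithm~\ref{alg:exact_k-SSRR} only generates solutions consistent with Theorem~\ref{thrm:optimal_sol}, both bounds are available throughout the analysis.
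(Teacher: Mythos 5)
Your proposal follows essentially the same route as the paper's proof: replay the extension of $C_1$ on $C_2$ (justified by the shared $D$ and $V$ vectors), observe that every DP-set sum shifts up by $\delta = sum_1'-sum_1$, use $\Sigma(S_1,A)\le Q$ against the singleton values exceeding $Q$ to keep the overall minimum inside a DP set, and finish with a case analysis on whether the overall maximum is a DP set or the largest singleton --- exactly the paper's three cases, merely reorganized. The one discrepancy is that you conclude $\mathcal{R}(S^*,A)\le\mathcal{R}(S,A)$ while the lemma's literal claim (no \emph{optimal} solution uses $C_1$) needs strictness; your cases do in fact give strict inequality except when $M_D=m_D$ and $a_t\le M_D$, a degenerate situation that the paper's key inequality $M_s/m_s>M_s'/m_s'$ also silently excludes, so you match the paper's level of rigor here.
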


\begin{proof}
    Let $S$ be a $\kSSRRshort$ solution. Split $S$ in two: $S_{small}$, containing the $k'$ sets returned by Algorithm~\ref{alg:DP_k-SSRR}, and $S_{large}$, containing the singletons $\{j\}$ s.t. $a_j > Q$. Assume $S_{small}$ uses~$C_1$.
    
    By assumption, $C_1$ and $C_2$ have identical $D$ and $V$ vectors and their sets involve only elements from $[i]$. This implies that any combination of elements larger than $i$ that can later be added to sets of $C_1$ to construct a partial\footnote{We denote as \emph{partial} a solution for which $S_{large}$ will be appended to obtain a feasible $\kSSRRshort$ solution.} solution with vectors $D',V'$ can also be added to the corresponding sets of $C_2$ to construct a partial solution with the same vectors $D',V'$.
    Let $S_{small}'$ be the partial solution obtained by using $C_2$ instead of $C_1$ and adding the same combination of elements that would have been used to obtain $S_{small}$ from $C_1$. 
    Consider the solution $S'$ that is obtained by appending $S_{large}$ to $S_{small}'$. Note that $S_{small}$ and $S_{small}'$ have the same validity vector, hence $S'$ is also a feasible $\kSSRRshort$ solution.

    Let $M_s,m_s$ be the maximum and minimum sum of $S_{small}$ respectively and $M_s',m_s'$ those of $S_{small}'$. Since $sum_1 < sum_1'$ and the difference vector $D$ is the same for both solutions, it follows that $M_s<M_s'$, $m_s<m_s'$ and $M_s-m_s=M_s'-m_s'$. From these, we obtain
\begin{equation}
\label{ineq:ratio_comparison}
    \frac{M_s}{m_s}>\frac{M_s'}{m_s'}.
\end{equation}

    Let $\{t\} \in S_{large}$ be the singleton set with the largest element. For all $\{i\}\in S_{large}$ such that $i\neq t$, we have 
    $$m_s\leq \Sigma(S_1,A) \leq Q < \Sigma(\{i\},A) \leq \Sigma(\{t\},A).$$ 
    This implies that $\{i\}$ ($i\neq t$) does not affect the ratio and $\{t\}$ only affects the ratio if $a_t > M_s$. Thus, all the sets in $S_{large}$ can be ignored when calculating $\mathcal{R}(S,A)$, except for $\{t\}$ if $a_t > M_s$. The same holds for $\mathcal{R}(S',A)$ and $M_s'$. Consider three cases:

    \begin{enumerate}
        \item $M_s<M_s'<a_t$. In this case, $\mathcal{R}(S,A)=a_t/m_s$ and $\mathcal{R}(S',A)=a_t/m_s'$. Since $m_s<m_s'$: $\mathcal{R}(S,A)>\mathcal{R}(S',A)$.
        \item $M_s<a_t \leq M_s'$. In this case, $\mathcal{R}(S,A)=a_t/m_s>M_s/m_s$ and $\mathcal{R}(S',A)=M_s'/m_s'$. By inequality \eqref{ineq:ratio_comparison}: $\mathcal{R}(S,A)>\mathcal{R}(S',A)$.
        \item $a_t \leq M_s<M_s'$. In this case, $\mathcal{R}(S,A)=M_s/m_s$ and $\mathcal{R}(S',A)=M_s'/m_s'$. By inequality~\eqref{ineq:ratio_comparison}: $\mathcal{R}(S,A)>\mathcal{R}(S',A)$.
    \end{enumerate}
    
    In all cases $\mathcal{R}(S,A)>\mathcal{R}(S',A)$, therefore $S$ cannot be optimal.
\end{proof}

\begin{lemma}[Feasibility]
\label{lem:kSSRR_feasibility}
    Every $k$-tuple of sets considered by Algorithm $\ref{alg:exact_k-SSRR}$ is a feasible $\kSSRRshort$ solution. 
\end{lemma}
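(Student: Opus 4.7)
The plan is to verify the three requirements of Definition~\ref{def:kSSRR} directly by tracing the construction performed by Algorithm~\ref{alg:exact_k-SSRR}: (i) the returned $k$-tuple consists of disjoint subsets of $[n]$, (ii) $\max(S_1) = p$, and (iii) $\max(S_i) > p$ for every $1 < i \leq k$.

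First I would record the shape of any tuple considered. Algorithm~\ref{alg:exact_k-SSRR} iterates over $x \in \{0, \ldots, \min\{k-1, n-q\}\}$, fixes the $x$ precomputed singletons $\{q+1\}, \ldots, \{q+x\}$ (the smallest $x$ indices with value $> Q$, in accordance with property~\ref{prop3} of Theorem~\ref{thrm:optimal_sol}), and concatenates them with the $k-x$ sets returned by $\texttt{DP\textunderscore}\kSSRRshort(A', k-x, p)$ invoked on the prefix $A' = \{a_1, \ldots, a_q\}$. This call is issued only when $p \leq q - (k-x) + 1$, i.e.\ when the precondition of Algorithm~\ref{alg:DP_k-SSRR} is satisfied, so the subroutine produces well-defined output. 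All tuples built outside these parameter ranges are skipped and need no consideration.

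For disjointness, I would observe that the $x$ singletons use indices strictly greater than $q$, while the $k-x$ DP sets are built from indices in $[q]$, so the two groups cannot overlap. The $x$ singletons are pairwise distinct by construction. Within the DP output, each $i \in [q]$ is processed exactly once and the transitions insert it into at most one of $S_1, \ldots, S_{k-x}$; thus pairwise disjointness is maintained as an invariant along the DP updates, and one only needs to note that none of the transitions ever moves an element out of the set it was placed in.

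For condition (ii), the only non-empty cell used to initialize the DP is $T_0[a_p,\ldots,a_p][\texttt{false},\ldots,\texttt{false}]$ with $S_1 = \{p\}$, and a transition may add an index $i$ to $S_1$ only when $i < p$; hence $\max(S_1) = p$ is preserved. For condition (iii), the precomputed singletons $\{q+i\}$ satisfy $q+i > q \geq p$, where the inequality $p \leq q$ follows from $a_p \leq Q$ and the definition of $q$. The remaining $k-x-1$ DP sets come from cells whose validity vector equals $[\texttt{true},\ldots,\texttt{true}]$; by the definition of $V$ this means precisely that every such $S_j$ contains an index greater than $p$, so $\max(S_j) > p$.

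The one subtlety worth nailing down is the bookkeeping when assembling the combined $k$-tuple: the DP's distinguished first set (carrying the anchor $p$) must end up in slot $S_1$, while each remaining slot receives either a DP set whose $v_j$ flag is \texttt{true} or one of the $\{q+i\}$ singletons. Once this identification is made explicit, the three feasibility conditions reduce to the invariants above and the lemma follows.
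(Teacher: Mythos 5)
Your proof is correct and follows essentially the same route as the paper's: disjointness via the separation of indices ($>q$ for the singletons, $\leq q$ for the DP sets) plus the DP invariant that each element enters at most one set, $\max(S_1)=p$ from the initialization and the $i<p$ guard, and $\max(S_i)>p$ from the all-\texttt{true} validity vector and the fact that $q\geq p$. Your version is somewhat more detailed (e.g.\ explicitly justifying $p\leq q$ and the slot bookkeeping), but the substance is identical.
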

\begin{proof}
The $x$ singletons constructed by Algorithm~\ref{alg:exact_k-SSRR} are disjoint and contain elements larger than $q$. In Algorithm~\ref{alg:DP_k-SSRR}, every element \(i\leq q\) is processed in increasing order and added to at most one set at a time. As such, all sets are disjoint. Note that \(p\) is contained in \(S_1\) and $S_1$ cannot receive a larger element.  Recall that $v_i$ is set to \texttt{true} when $S_i$ receives an element $j>p$ and Algorithm~\ref{alg:DP_k-SSRR} only returns solutions with \(V=[\texttt{true},\dots,\texttt{true}]\). Thus, all $\kSSRRshort$ restrictions regarding the largest element of each set are satisfied.
\end{proof}

The proof of the following theorem uses Lemmas~\ref{lem:DP_conflicts} and~\ref{lem:kSSRR_feasibility}. The main idea is to show that Algorithm~\ref{alg:exact_k-SSRR} considers and therefore finds the optimal solution whose existence is guaranteed by Theorem~\ref{thrm:optimal_sol}. 

\begin{theorem}[Optimality]
\label{thrm:kSSRR_exact_correctness}
    Algorithm $\ref{alg:exact_k-SSRR}$ returns an optimal solution for $\kSSRRshort$.
\end{theorem}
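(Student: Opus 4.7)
The plan is to show that Algorithm~\ref{alg:exact_k-SSRR} considers a solution whose ratio is at most that of a well-behaved optimal solution, and hence outputs an optimal one. By Theorem~\ref{thrm:optimal_sol}, there exists an optimal $\kSSRRshort$ solution $S^* = (S_1^*, \ldots, S_k^*)$ satisfying properties \ref{prop1}, \ref{prop2}, \ref{prop3}. Let $x^*$ denote the number of singleton sets $\{j\} \in S^*$ with $a_j > Q$; by property \ref{prop3}, these singletons are exactly $\{q+1\}, \ldots, \{q+x^*\}$, and $0 \le x^* \le \min\{k-1, n-q\}$ (since $S_1^* \subseteq [p]$ is never such a singleton). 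Denote by $S^*_{\text{small}}$ the remaining $k' = k - x^*$ sets, which by property \ref{prop2} contain only elements $j$ with $a_j \le Q$, i.e.\ indices in $[q]$, and by property \ref{prop1} have sum strictly less than $2Q$.

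First I would note that Algorithm~\ref{alg:exact_k-SSRR} explicitly iterates over every admissible $x$, in particular over $x = x^*$, and constructs exactly the singletons $\{q+1\}, \ldots, \{q+x^*\}$ picked by $S^*$. For this value of $x$, Algorithm~\ref{alg:DP_k-SSRR} is invoked on $A' = \{a_i \mid a_i \le Q\}$ and $k'$, and the condition $p \le q - k + x^* + 1$ is met since $S^*$ itself is feasible. It therefore suffices to show that the DP produces a $k'$-tuple whose combination with the $x^*$ fixed singletons yields a solution with ratio at most $\mathcal{R}(S^*, A)$.

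Next I would track $S^*_{\text{small}}$ through the DP. Starting from the initialized tuple at $T_0$, one may simulate the construction of $S^*_{\text{small}}$ by processing elements $1, 2, \ldots, q$ in order and, for each element, choosing the branch that places it into the set of $S^*_{\text{small}}$ it belongs to (or into no set). Two things must be checked along this simulation: (i) no intermediate tuple is discarded for violating the $-2Q$ threshold on the difference vector $D'$, which holds because every set in $S^*_{\text{small}}$ has final sum strictly less than $2Q$ and sums grow monotonically, bounding every intermediate partial difference in $(-2Q, 2Q)$; and (ii) the relabelling induced by the nondecreasing reordering of $D$ does not affect feasibility, since the $\kSSRRshort$ constraint distinguishes only $S_1$ from the rest, and the sorting is performed on coordinates $2, \ldots, k$ together with the validity vector.

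The main obstacle is that conflicts along the way may prevent the exact tuple corresponding to $S^*_{\text{small}}$ from ever being stored. Here I would apply Lemma~\ref{lem:DP_conflicts}: whenever the tuple representing (a prefix of) $S^*_{\text{small}}$ loses a conflict to some $C$ with strictly larger $sum_1$, the lemma guarantees that replacing the losing tuple by $C$ and continuing with the same future choices yields a complete solution whose ratio (after appending the $x^*$ singletons) is no larger than $\mathcal{R}(S^*, A)$. Iterating this substitution over all conflicts encountered throughout the simulation gives a tuple that actually survives the DP, ends up in some cell $T_q[D][\texttt{true}, \ldots, \texttt{true}]$, and hence is returned to Algorithm~\ref{alg:exact_k-SSRR}. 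By Lemma~\ref{lem:kSSRR_feasibility} the final $k$-tuple formed after appending the singletons is a feasible $\kSSRRshort$ solution, and its ratio is at most $\mathcal{R}(S^*, A)$. Since Algorithm~\ref{alg:exact_k-SSRR} outputs the candidate of minimum ratio among all considered ones, its output is optimal.
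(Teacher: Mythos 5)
Your proof follows essentially the same route as the paper's: take the well-behaved optimal solution $S^*$ from Theorem~\ref{thrm:optimal_sol}, observe that the iteration over $x$ hits the correct set of large singletons, trace $S^*_{\text{small}}$ through the DP, rule out the $-2Q$ pruning via property~\ref{prop1}, and resolve conflicts via Lemma~\ref{lem:DP_conflicts}. The setup, the feasibility argument, and the handling of the reordering of $D$ are all fine.

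There is, however, one missing case in your conflict analysis. The DP overwrites an occupied cell only when the incoming tuple has \emph{strictly} larger $sum_1$; consequently, the prefix of $S^*_{\text{small}}$ can also be discarded when it arrives at a cell already occupied by a tuple with \emph{equal} $sum_1$ (but different sets). Lemma~\ref{lem:DP_conflicts} is stated only for $sum_1 < sum_1'$ and does not apply here, so your substitution argument does not cover this situation. The fix is the observation the paper makes explicitly: two conflicting tuples with identical $D$, $V$ and equal $sum_1$ have identical sums in \emph{all} coordinates, and since both use only elements of $[i]$, any future element can be added to either; hence the surviving tuple can be extended by the same future choices to a partial solution with exactly the same sums as $S^*_{\text{small}}$, which after appending the singletons is again optimal. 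With that case added, your argument is complete.
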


\begin{proof}

Lemma~\ref{lem:kSSRR_feasibility} guarantees that Algorithm~\ref{alg:exact_k-SSRR} returns a feasible $\kSSRRshort$ solution, so we only need to prove that its ratio is optimal. Let $S^*$ be the optimal $\kSSRRshort$ solution guaranteed by Theorem~\ref{thrm:optimal_sol}. We split $S^*$ in two: $S_{large}^*$, which contains the singleton sets $\{j\}$ s.t. $a_j > Q$, and $S_{small}^*$, which contains the rest of the sets. Note that the number $x$ of these singletons is bounded by $k-1$ or by the amount of sufficiently large elements, i.e. $n-q$. Thus, Algorithm~\ref{alg:exact_k-SSRR} considers every $S_{large}$ that satisfies properties~\ref{prop2} and~\ref{prop3} of Theorem~\ref{thrm:optimal_sol}, including $S_{large}^*$.

Recall that Algorithm~\ref{alg:exact_k-SSRR} uses Algorithm~\ref{alg:DP_k-SSRR} as a subroutine in order to obtain candidate partial solutions, to which $S_{large}^*$ will be appended. We would like $S_{small}^*$ to be contained in the solutions returned by Algorithm~\ref{alg:DP_k-SSRR}.

Algorithm~\ref{alg:DP_k-SSRR} prunes a solution if a difference $d_j$ would become lower than or equal to $-2Q$. Any $S_{small}$ satisfying property~\ref{prop1} of Theorem~\ref{thrm:optimal_sol} will not be affected by this. This includes $S_{small}^*$.

When there is a conflict in a cell $T_i[D][V]$ between two tuples with different $sum_1$ values, $S_{small}^*$ cannot use the one with smaller $sum_1$ (by Lemma $\ref{lem:DP_conflicts}$), since $S^*$ is a optimal. When there is a conflict in a cell $T_i[D][V]$ between two tuples $C_1,C_2$ with equal $sum_1$ values, observe that both tuples have exactly the same sums (but different sets), use only elements up to $i$ and have the same validity vector. This implies that any combination of elements greater than $i$ that can later be added to sets of $C_1$ can also be added to the same sets of $C_2$ and vice versa. Thus, any combination of sums that can be reached through $C_1$ can also be reached through $C_2$ and vice versa. This means that if one of these tuples leads to $S_{small}^*$, the other one leads to another partial solution $S_{small}^{**}$, which has exactly the same sums as $S_{small}^*$. Appending $S_{large}^*$ to $S_{small}^{**}$ yields a solution with the same sums as $S^*$, i.e. another optimal solution.

It follows directly from the description of Algorithm~\ref{alg:DP_k-SSRR} that it constructs every possible combination of disjoint sets $S_1,\ldots,S_{k-x}$ with $\max (S_1) = p$, apart from the ones pruned as explained in the previous paragraphs.

Therefore, the partial solutions returned by Algorithm~\ref{alg:DP_k-SSRR} contain $S_{small}^*$ or another partial solution with the same sums, which also leads to an optimal solution when appended with $S_{large}^*$. In either case, Algorithm~\ref{alg:exact_k-SSRR} will find an optimal solution when iterating to find the best ratio.
\end{proof}

\begin{lemma}[Complexity]
\label{lem:kSSRR_exact_complexity}
    Algorithm $\ref{alg:exact_k-SSRR}$ runs in time $O(nQ^{k-1})$.
\end{lemma}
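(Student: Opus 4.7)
The plan is to bound the size of the DP table used by Algorithm~\ref{alg:DP_k-SSRR}, then argue that each cell is processed in constant time for fixed $k$, and finally account for the outer loop in Algorithm~\ref{alg:exact_k-SSRR}.

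First I would bound the range of coordinates of a cell $T_i[D][V]$. The index $i$ ranges over $[q]$, hence takes $O(n)$ values. For the difference vector $D=[d_2,\ldots,d_k]$, recall that $d_j=\Sigma(S_1,A)-\Sigma(S_j,A)$. By Observation~\ref{obs:smallest_set} we always have $\Sigma(S_1,A)\leq Q$, so $d_j\leq Q$. Moreover, the algorithm prunes any tuple for which some $d_j$ would become $\leq -2Q$, which means $\Sigma(S_j,A)\geq \Sigma(S_1,A)+2Q$; thus every surviving entry satisfies $d_j\in(-2Q, Q]$, giving $O(Q)$ integer values per coordinate and $O(Q^{k-1})$ values of $D$ in total. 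The validity vector $V\in\{\textnormal{\texttt{true}},\textnormal{\texttt{false}}\}^{k-1}$ contributes a factor of $2^{k-1}=O(1)$ for fixed $k$. So the DP table contains $O(n\cdot Q^{k-1})$ cells, and by the conflict-resolution rule each cell holds at most one non-empty tuple.

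Next I would bound the cost of the transitions from row $T_{i-1}$ to row $T_i$. For each non-empty tuple in $T_{i-1}$, the algorithm considers only $k+1=O(1)$ options (skip $i$, or add $i$ to one of the $k$ sets). Each option requires updating a difference vector of length $k-1$, re-sorting it, and consulting the destination cell for a conflict, all of which cost $O(k\log k)=O(1)$ for fixed $k$. Hence the total work to fill $T_i$ from $T_{i-1}$ is proportional to the number of non-empty tuples in $T_{i-1}$, which is $O(Q^{k-1})$. Summing over $i\in[q]$ yields $O(n Q^{k-1})$ time for one invocation of Algorithm~\ref{alg:DP_k-SSRR}.

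Finally I would handle Algorithm~\ref{alg:exact_k-SSRR} itself. It iterates over $x\in\{0,\ldots,\min\{k-1,n-q\}\}$, i.e.\ at most $k=O(1)$ iterations; each builds $x$ singletons in $O(k)$ time, invokes Algorithm~\ref{alg:DP_k-SSRR} in $O(n Q^{k-1})$ time, and then scans the at most $O(Q^{k-1})$ returned tuples, computing a ratio in $O(k)$ time for each. The overall running time is therefore $O(k\cdot n Q^{k-1}) = O(n Q^{k-1})$, as claimed. The only subtle point is the bound $d_j>-2Q$: I would emphasize that this is exactly what makes the DP table polynomial in $Q$ rather than depending on $\Sigma(A)$, and that the pruning is safe by property~\ref{prop1} of Theorem~\ref{thrm:optimal_sol} (already used in the proof of Theorem~\ref{thrm:kSSRR_exact_correctness}).
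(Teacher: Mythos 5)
Your proposal is correct and follows essentially the same route as the paper's proof: bound each difference coordinate to the interval $(-2Q,Q]$ via the pruning rule and Observation~\ref{obs:smallest_set}, count $O(nQ^{k-1})$ cells with $O(1)$ work per transition for fixed $k$, and absorb the $O(k)$ outer iterations over $x$ into the constant. The only cosmetic difference is that the paper additionally exploits the sortedness of $D$ to divide by $(k'-1)!$, which is immaterial for constant $k$.
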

\begin{proof}  
    Algorithm~\ref{alg:exact_k-SSRR} calls Algorithm~\ref{alg:DP_k-SSRR} once for each value of $x$ in $\{0,\ldots,\min\{k-1, n-q\}\}$, where $x$ is the number of singleton sets $\{j\}$ s.t. $a_j > Q$. For a fixed $x$, Algorithm~\ref{alg:DP_k-SSRR} is applied to an instance with $k' = k - x$ sets and \(q=O(n)\) elements.

    Due to the pruning done in Algorithm~\ref{alg:DP_k-SSRR}, it holds that $\forall j\in\{2,\ldots,k'\}$: $-2Q < d_j\leq Q$, where the second inequality holds because $\Sigma(S_1,A) \leq Q$. Since the algorithm stores \(D\) in non-decreasing order (i.e., \(d_2 \le d_3 \le \cdots \le d_{k'}\)), there are $O((3Q)^{k'-1}/(k'-1)!)$ distinct difference vectors. Taking into account the validity vector $V$ and all $T_i$'s, we obtain the following bound for the amount of DP cells: $O(n(6Q)^{k'-1}/(k'-1)!)$.

    For constant $k$, this bound becomes $O(nQ^{k'-1})$. Note that all operations of Algorithm~\ref{alg:DP_k-SSRR} on DP cells (such as sorting the difference vector after adding an element) run in time $O(k')$, so the time complexity of Algorithm~\ref{alg:DP_k-SSRR} is $O(nQ^{k'-1})=O(nQ^{k-x-1})$. Hence, the time complexity of Algorithm~\ref{alg:exact_k-SSRR} is bounded by:
    $$\sum_{x=0}^{k-1} nQ^{k-x-1}=O(nQ^{k-1})$$
\end{proof}

\subsection{FPTASs for \texorpdfstring{$\kSSRRshort$}{k-SSRR} and \texorpdfstring{$\kSSRshort$}{k-SSR}}
\label{subsec:kSSRR_FPTAS}

By Theorem~\ref{thrm:kSSRR_exact_correctness} and Lemma~\ref{lem:kSSRR_exact_complexity}, Algorithm~\ref{alg:exact_k-SSRR} solves $\kSSRRshort$ in pseudo-polynomial time. To obtain an FPTAS for $\kSSRRshort$, we scale (and round down) the input set by a factor $\delta = \frac{\varepsilon \cdot a_p}{3\cdot n}$~(cf. \cite{SSR_MP,SSR_Nan}).

This leads to Algorithm \ref{alg:FPTAS_k-SSRR}, which calls Algorithm~\ref{alg:exact_k-SSRR} in order to find an optimal solution for the rounded input $A_r$, yielding a $(1+\varepsilon)$-approximation of the (largest-to-smallest) ratio of an optimal solution $(S^*_1,\ldots,S^*_k)$ of the original $\kSSRRshort$ instance.

\begin{algorithm}[H] 
\caption{$\texttt{FPTAS\textunderscore $\kSSRRshort$}(A,p,\varepsilon)$}
\label{alg:FPTAS_k-SSRR}
\begin{algorithmic}[1]

\Require{A sorted multiset $A=\{a_1,\ldots,a_n\}, a_i\in\mathbb{Z}^+$, an integer $p$ such that $1 \leq p \leq n-k+1$ and an error parameter $\varepsilon \in \left(0, 1\right)$.}

\Ensure{Disjoint subsets $(S_1,\ldots,S_k)$ of $[n]$ with $\max(S_1)=p$, $\max(S_i)>p$ for $1<i \leq k$ and $\mathcal{R}(S_1,\dots,S_k,A) \leq (1+\varepsilon)\cdot \mathcal{R}(S^*_1,\ldots,S^*_k,A)$.}

\State $\delta \leftarrow \frac{\varepsilon \cdot a_p}{3\cdot n}$, $A_r\leftarrow \emptyset$

\State \textbf{for} $i \leftarrow 1,\ldots,n$ \textbf{do}
     $a^r_i \leftarrow \lfloor \frac{a_i}{\delta} \rfloor$,   
    $A_r\leftarrow A_r\cup \{ a^r_i \}$
\State $(S_1,\dots,S_k) \leftarrow$ \texttt{Exact\textunderscore $\kSSRRshort$}$(A_r,p)$ \Comment{Call Alg. \ref{alg:exact_k-SSRR} for rounded instance}

\State \Return $S_1, \dots S_k$
\end{algorithmic}
\end{algorithm}

\begin{theorem}[$\kSSRRshort$ approximation]
\label{thrm:approximation_ratio}
Let $(S_1,\dots,S_k)$ be the sets returned by Algorithm~$\ref{alg:FPTAS_k-SSRR}$ for a $\kSSRRshort$ instance $(A=\{a_1,\ldots,a_n\}$, $p)$ with error parameter $\varepsilon$. Let $(S^*_1,\dots,S^*_k)$ be an optimal solution for the same $\kSSRRshort$ instance. Then:
$$\mathcal{R}(S_1, \dots, S_k,A) 
\leq (1+\varepsilon)\cdot \mathcal{R}(S^*_1, \ldots, S^*_k,A)$$
\end{theorem}

The proof of Theorem~\ref{thrm:approximation_ratio} is analogous to the respective proofs for $k=2$~\cite{SSR_Nan,SSR_MP}, with modifications to work for more than two subsets. Due to its length, we defer it to Subsection~\ref{subsec:proof_new}.

By Lemma~\ref{lem:kSSRR_exact_complexity}, Algorithm~\ref{alg:FPTAS_k-SSRR} solves $\kSSRRshort$ in $O(n Q^{k-1})$ 
(where $Q= \sum_{i=1}^p a^r_i$). Since we scaled the input by $\delta$, the values of Q are bounded as follows:
$$Q^{k-1}=\left(\sum_{i=1}^p a^r_i \right)^{k-1} \leq \left(n\cdot a^r_p\right)^{k-1} \leq \left(\frac{n\cdot a_p}{\delta} \right)^{k-1}= \left(\frac{3\cdot n^2}{\varepsilon}\right)^{k-1} = \frac{3^{k-1}\cdot n^{2k-2}}{\varepsilon^{k-1}}$$
Therefore, Algorithm~\ref{alg:FPTAS_k-SSRR} runs in time $O({n^{2k-1}}/{\varepsilon^{k-1}})$. To obtain an FPTAS for the (unrestricted) $\kSSRshort$ problem, we run Algorithm~\ref{alg:FPTAS_k-SSRR} once for each possible value of $p$ ($1 \leq p \leq n-k+1$) and pick the solution with the best ratio. Thus, we obtain the main theorem of this section.

\begin{theorem}
\label{thrm:main}
There is an FPTAS for $\kSSRshort$ that runs in $O({n^{2k}}/{\varepsilon^{k-1}})$ time. 
\end{theorem}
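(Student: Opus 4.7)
The plan is to reduce the unrestricted problem $\kSSRshort$ to a family of instances of the restricted variant $\kSSRRshort$, one for each admissible value of the restriction parameter $p$, and to output the best of the $n-k+1$ candidate solutions produced by Algorithm~\ref{alg:FPTAS_k-SSRR}. Since the per-call complexity is $O(n^{2k-1}/\varepsilon^{k-1})$ and we perform $O(n)$ calls, the total running time is immediately $O(n^{2k}/\varepsilon^{k-1})$, matching the claimed bound.

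The nontrivial step is the correctness argument, i.e.\ verifying that this enumeration over $p$ does not miss the optimum by more than a $(1+\varepsilon)$ factor. Let $(S_1^*,\ldots,S_k^*)$ be an optimal solution to the input $\kSSRshort$ instance. Since the sets are disjoint subsets of $[n]$, their maxima $\max(S_i^*)$ are pairwise distinct, so there is a unique index, say $i_0$, with the smallest maximum; after relabeling we may assume $i_0=1$ and set $p^*:=\max(S_1^*)$. By construction, $\max(S_1^*)=p^*$ and $\max(S_i^*)>p^*$ for $1<i\leq k$, so $(S_1^*,\ldots,S_k^*)$ is a \emph{feasible} (hence optimal) solution of the $\kSSRRshort$ instance $(A,p^*)$. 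Moreover, the existence of $k-1$ disjoint sets with maxima strictly larger than $p^*$ forces $p^*\leq n-k+1$, so $p^*$ lies in the range over which our outer loop iterates.

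For the enumerated value $p=p^*$, Theorem~\ref{thrm:approximation_ratio} guarantees that Algorithm~\ref{alg:FPTAS_k-SSRR} returns sets $(T_1,\ldots,T_k)$ with
\[
\mathcal{R}(T_1,\ldots,T_k,A)\leq (1+\varepsilon)\cdot\mathcal{R}(S_1^*,\ldots,S_k^*,A).
\]
Every solution produced by Algorithm~\ref{alg:FPTAS_k-SSRR} (for any $p$) consists of $k$ disjoint subsets of $[n]$ and is therefore also a feasible solution of the unrestricted $\kSSRshort$ instance. Hence the minimum ratio over all $n-k+1$ candidates is at most the ratio obtained at $p=p^*$, which already satisfies the approximation guarantee.

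The main conceptual obstacle is precisely the observation that the restriction $\max(S_1)=p$ together with $\max(S_i)>p$ for $i>1$ is not an artificial constraint but an exhaustive case analysis: every feasible $\kSSRshort$ solution falls into exactly one class indexed by $p=\min_i\max(S_i)$. Once this is in place, the complexity bound follows by multiplying the per-call cost from Subsection~\ref{subsec:kSSRR_FPTAS} by the $O(n)$ choices of $p$, completing the proof.
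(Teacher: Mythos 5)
Your proof is correct and follows essentially the same route as the paper: iterate Algorithm~\ref{alg:FPTAS_k-SSRR} over all $p\in\{1,\ldots,n-k+1\}$, take the best candidate, and multiply the per-call bound $O(n^{2k-1}/\varepsilon^{k-1})$ by the $O(n)$ choices of $p$. The paper states this in two lines and leaves the exhaustiveness of the case split over $p=\min_i\max(S_i)$ implicit, whereas you spell it out explicitly --- a worthwhile addition, but not a different argument.
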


\subsection{Proof of Theorem~\ref{thrm:approximation_ratio}}
\label{subsec:proof_new}

Let $(A=\{a_1,\dots,a_n\},p)$ be a $\kSSRRshort$ instance and $\varepsilon \in (0,1)$ be an error parameter. We use $\delta = \frac{\varepsilon \cdot a_p}{3\cdot n}$ as a scale factor.

Let $A_r = \{a^r_i =\lfloor 
\frac{a_i}{\delta} \rfloor : for\; all\;  i \in \{1,\ldots,n\} \}$. Let $S_{alg}=(S_1,\dots,S_k)$ be the solution returned by Algorithm~\ref{alg:FPTAS_k-SSRR} and $S_{opt}=(S^*_1,\dots,S^*_k)$ be an optimal solution for the instance $(A,p)$. We define
$$S_M = \argmax_{S_i \in S_{alg}} \Sigma(S_i,A),
\quad
S_m =  \argmin_{S_i \in S_{alg}}\Sigma(S_i,A),$$
$$S_M^* = \argmax_{S^*_i \in S_{opt}} \Sigma(S^*_i,A),
\quad
S_m^* = \argmin_{S^*_i \in S_{opt}} \Sigma(S^*_i,A).$$
We continue by proving some auxiliary lemmas.

\begin{lemma}
\label{lem:ineq_approx_1}
For any $S \in \{ S_M,S_m, S_M^*,S_m^*\}$ it holds that
  $$\sum_{i \in S} a_i - n\cdot \delta \leq \sum_{i \in S} \delta \cdot a^r_i \leq \sum_{i \in S} a_i.$$
\end{lemma}

\begin{proof}
Recall that $a^r_i =\lfloor 
\frac{a_i}{\delta} \rfloor$ for all  
$i \in \{1,\ldots,n\}$. Hence

 $$\frac{a_i}{\delta} -1 \leq a^r_i \leq  \frac{a_i}{\delta}  \Rightarrow 
 a_i -\delta \leq \delta \cdot a^r_i \leq a_i 
 \Rightarrow
 \sum_{i \in S} (a_i -\delta) \leq \sum_{i \in S} \delta \cdot a^r_i \leq \sum_{i \in S} a_i.$$

Since $S \in \{ S_M,S_m,S_M^*,S_m^*\}$, we have $|S|\leq n$, thus proving the lemma. 
\end{proof}

\begin{lemma}
\label{lem:ineq_approx_2}
For any $S \in \{ S_M,S_m, S_M^*,S_m^*\}$ it holds that    
    $$n\cdot \delta \leq \frac{\varepsilon}{3} \cdot \sum_{i \in S} a_i.$$
\end{lemma}

\begin{proof}
    Note that $\max  S \geq p$ for any $S \in \{ S_M,S_m,S_M^*,S_m^*\}$, by definition of $\kSSRRshort$. Since the input is sorted, it holds that

    $$ n\cdot \delta = \frac{\varepsilon \cdot a_p}{3} \leq  \frac{\varepsilon}{3} \cdot \sum_{i \in S }a_i.$$
\end{proof}

\begin{lemma}
\label{lem:Salg_inequality}
It holds that $\mathcal{R}(S_{alg},A) \leq  
\mathcal{R}(S_{alg},A_r)+ \varepsilon/3$.
\end{lemma}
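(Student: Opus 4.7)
The plan is to compare the ratio evaluated on $A$ to the ratio evaluated on the rounded input $A_r$ by tracking how rounding perturbs the sum of each set. Since the ratio function is invariant under positive scaling, it will be convenient to view $\mathcal{R}(S_{alg},A_r)$ in terms of the \emph{rescaled} quantities $\delta\,\Sigma(S_i,A_r)$, so that it can be directly compared with $\Sigma(S_i,A)$.

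Write $M_A=\Sigma(S_M,A)$, $m_A=\Sigma(S_m,A)$, and let
$\widetilde M_r=\max_{i\in[k]}\delta\,\Sigma(S_i,A_r)$ and $\widetilde m_r=\min_{i\in[k]}\delta\,\Sigma(S_i,A_r)$. Then $\mathcal{R}(S_{alg},A)=M_A/m_A$ and $\mathcal{R}(S_{alg},A_r)=\widetilde M_r/\widetilde m_r$. The first step is to show the two inequalities $\widetilde M_r\geq M_A-n\delta$ and $\widetilde m_r\leq m_A$. For the former, apply Lemma~\ref{lem:ineq_approx_1} to $S_M$, obtaining $\delta\,\Sigma(S_M,A_r)\geq M_A-n\delta$; since $\widetilde M_r$ is the maximum over all sets, it dominates $\delta\,\Sigma(S_M,A_r)$. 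For the latter, apply Lemma~\ref{lem:ineq_approx_1} to $S_m$, obtaining $\delta\,\Sigma(S_m,A_r)\leq m_A$; since $\widetilde m_r$ is a minimum, it is at most $\delta\,\Sigma(S_m,A_r)$. Note that one has to be slightly careful here because the argmax/argmin under $A_r$ may differ from $S_M$ and $S_m$, which is exactly why we use the max/min inequality in each direction.

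Now combine the two bounds. From $M_A\leq \widetilde M_r+n\delta$ and $\widetilde m_r\leq m_A$ we obtain
\[
\frac{M_A}{m_A}\ \leq\ \frac{\widetilde M_r+n\delta}{m_A}\ =\ \frac{\widetilde M_r}{m_A}+\frac{n\delta}{m_A}\ \leq\ \frac{\widetilde M_r}{\widetilde m_r}+\frac{n\delta}{m_A},
\]
where the last step uses $\widetilde m_r\leq m_A$ and $\widetilde M_r\geq 0$. Finally, apply Lemma~\ref{lem:ineq_approx_2} to $S=S_m$, which yields $n\delta\leq(\varepsilon/3)\,m_A$, so that $n\delta/m_A\leq \varepsilon/3$. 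Substituting gives $\mathcal{R}(S_{alg},A)\leq \mathcal{R}(S_{alg},A_r)+\varepsilon/3$, as required.

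The only real subtlety is the observation that $S_M$ and $S_m$ need not be the argmax and argmin of the sums on $A_r$; I expect this to be the main (minor) pitfall, but it is handled automatically by using the max/min monotonicity in the correct direction when invoking Lemma~\ref{lem:ineq_approx_1}. The rest of the argument is a routine manipulation of the rounding error bound together with Lemma~\ref{lem:ineq_approx_2}.
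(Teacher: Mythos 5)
Your proof is correct and follows essentially the same route as the paper: bound the numerator and denominator via the rounding error of Lemma~\ref{lem:ineq_approx_1}, then absorb the additive term using Lemma~\ref{lem:ineq_approx_2} applied to $S_m$. You are in fact slightly more careful than the paper's write-up, which silently identifies $\Sigma(S_M,A_r)/\Sigma(S_m,A_r)$ with $\mathcal{R}(S_{alg},A_r)$ even though the argmax/argmin may shift under rounding; your max/min monotonicity step handles this cleanly and in the right direction.
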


\begin{proof}
We can prove that
\begin{align*}
\mathcal{R}(S_{alg},A)  = \frac{\sum_{i \in S_M} a_i}{\sum_{j \in S_m} a_j}
				   & \leq \frac{\sum_{i \in S_M}\delta\cdot a^r_i + \delta \cdot n}{\sum_{j \in S_m}a_j} 
				   & \mbox{[by Lemma~}~\ref{lem:ineq_approx_1}\mbox{]} \\
				   & \leq \frac{\sum_{i \in S_M} a^r_i }{\sum_{j \in S_m} a^r_j} +  
				   \frac{ \delta \cdot n}{\sum_{j \in S_m} a_j}.
				   & \mbox{[by Lemma~}~\ref{lem:ineq_approx_1}\mbox{]} 
\end{align*}
By Lemma~\ref{lem:ineq_approx_2} it holds that $\delta \cdot n \leq (\varepsilon/3)\cdot \sum_{j \in S_m} a_j$. Thus
\begin{align*}
    &\mathcal{R}(S_{alg},A) \leq \frac{\sum_{i \in S_M} a^r_i }{\sum_{j \in S_m} a^r_j}+ \frac{\varepsilon}{3} = \mathcal{R}(S_{alg},A_r) + \frac{\varepsilon}{3}. & \mbox{[by Def.\;}~\ref{def:k-ratio}\mbox{]}
\end{align*}
\end{proof}

\begin{lemma}
\label{lem:Sopt_inequality}
For any $\varepsilon\in (0,1)$, it holds that
$$ \mathcal{R}(S_{opt},A_r) \leq \left(1+\frac{\varepsilon}{2}\right)\cdot \mathcal{R}(S_{opt},A).$$
\end{lemma}

\begin{proof}
We can prove that
\begin{align*}
 \mathcal{R}(S_{opt}&,A_r)   = \frac{\sum_{i \in S_M^*} a^r_i}{\sum_{j \in S_m^*} a^r_j}&\\
 & \leq \frac{\sum_{i \in S_M^*}a_i}{\sum_{j \in S_m^*}a_j - n\cdot \delta} &\mbox{[by Lemma~}\ref{lem:ineq_approx_1}\mbox{]}\\
 & = \frac{\sum_{j \in S_m^*} a_j}{\sum_{j \in S_m^*} a_j - n\cdot \delta} \cdot \frac{\sum_{i \in S_M^*} a_i}{\sum_{j \in S_m^*} a_j}& \\
 & = \left(1 + \frac{n\cdot \delta}{\sum_{j \in S_m^*} a_j - n\cdot \delta} \right) \cdot \frac{\sum_{i \in S_M^*}a_i}{\sum_{j \in S_m^*}a_j}.& \\
 \end{align*}
After using Lemma~\ref{lem:ineq_approx_2} for $S= S_m^*$ and simplifying the fraction, we obtain
 $$\mathcal{R}(S_{opt},A_r) \leq \left(1 + \frac{\varepsilon}{ 3- \varepsilon}\right)\cdot \frac{\sum_{i \in S_M^*}a_i}{\sum_{j \in S_m^*}a_j}.$$
Since $\varepsilon \in (0,1)$, we have
 \begin{align*}
  \mathcal{R}(S_{opt},A_r)&\leq \left(1 + \frac{\varepsilon}{ 2}\right)\cdot \frac{\sum_{i \in S_M^*}a_i} {\sum_{j \in S_m^*}a_j} &\\
 & = \left(1 + \frac{\varepsilon}{ 2}\right)\cdot \mathcal{R}(S_{opt},A). &\mbox{[by Def.\;}~\ref{def:k-ratio}\mbox{]}
\end{align*}
\end{proof}
We are now ready to prove Theorem~\ref{thrm:approximation_ratio}.

\setcounter{theorem}{13}
\begin{theorem}[$\kSSRRshort$ approximation]
Let $(S_1,\dots,S_k)$ be the sets returned by Algorithm~\ref{alg:FPTAS_k-SSRR} for a $\kSSRRshort$ instance $(A=\{a_1,\ldots,a_n\}$, $p)$ with error parameter $\varepsilon$. Let $(S^*_1,\dots,S^*_k)$ be an optimal solution for the same $\kSSRRshort$ instance. Then
$$ \mathcal{R}(S_1, \dots, S_k,A) 
\leq (1+\varepsilon)\cdot \mathcal{R}(S^*_1, \ldots, S^*_k,A). $$
\end{theorem}

\begin{proof}
The proof is a direct consequence of Lemmas~\ref{lem:Salg_inequality} and~\ref{lem:Sopt_inequality}. We use the notations $S_{alg}=(S_1,\dots,S_k)$ and $S_{opt}=(S^*_1,\dots,S^*_k)$ for simplicity.  We also use the fact that $S_{alg}$ is an optimal solution for $(A_r,p)$, according to Theorem~\ref{thrm:kSSRR_exact_correctness}.

\begin{align*}
 \mathcal{R}(S_{alg},A) 
 &\leq \mathcal{R}(S_{alg},A_r)+ \frac{\varepsilon}{3} \leq \mathcal{R}(S_{opt},A_r)+ \frac{\varepsilon}{3}\\
  &\leq \left(1+\frac{\varepsilon}{2}\right)\cdot \mathcal{R}(S_{opt},A) + \frac{\varepsilon}{3}\\
  &\leq \left(1+\varepsilon \right)\cdot \mathcal{R}(S_{opt},A)
\end{align*}
\end{proof}

\setcounter{theorem}{19}

\section{An FPTAS for k-way Number Partitioning Ratio}
\label{sec:FPTAS_kPR}

In this section we present an FPTAS for $\kPR$ ($\kPRshort$) by extending and refining the techniques of Section $\ref{sec:FPTAS_kSSR}$. Recall that in the case of $\kPRshort$, every element needs to be assigned to a set. We define the following restricted version of $\kPRshort$, which is analogous to $\kSSRRshort$.

\begin{definition}[$\kPRRshort$]
\label{def:kPRR}
Given a sorted multiset $A=\{a_1,\ldots,a_n\}$ of positive integers and an integer $p,1 \leq p \leq n-k+1$, find $k$ disjoint subsets $S_1,\ldots,S_k$ of $[n]$ with $\bigcup_{i=1}^k S_i = [n]$, $\max(S_1)=p$ and $\max(S_i)>p$ for $1<i \leq k$, such that $\mathcal{R}(S_1,\ldots,S_k,A)$ is minimized.
\end{definition}

The main challenge in expanding our technique to $\kPRshort$ stems from the fact that there exist instances $(A,p)$ of $\kPRRshort$ where all optimal solutions violate one or more of the properties of Theorem $\ref{thrm:optimal_sol}$. This is a problem, since the time complexity of our $\kSSRRshort$ algorithm relies heavily on Theorem $\ref{thrm:optimal_sol}$.

We overcome this by showing that there exists a  $p^*$ such that the corresponding optimal solution $S^*$ to $(A,p^*)$ is well-behaved (see Theorem~\ref{thrm:perfect_p}). It suffices to guarantee that for $p^*$, the algorithm will consider $S^*$ (or another equivalent solution) as a candidate solution for the rounded $\kPRRshort$ instance. Note that since $S^*$ is not guaranteed to be optimal for the rounded $\kPRRshort$ instance, we obtain \emph{neither an exact algorithm nor an FPTAS} for the restricted problem $\kPRRshort$.

\subsection{Properties of \texorpdfstring{\textit{k}-PART\textsubscript{R}}{k-PARTR} instances} 
\label{subsec:optP}

\begin{definition}[Perfect $p$]
\label{def:optimal_p}
    Let $A=\{a_1,\ldots,a_n\}$ be the input to the $\kPRshort$ problem. We call a number $p:1\leq p \leq n-k+1$ \emph{perfect} for $A$ if the optimal solution(s) for the $\kPRRshort$ instance $(A,p)$ have the same (largest-to-smallest) ratio as the optimal solution(s) for the $\kPRshort$ instance $A$.
\end{definition}

We define $Q=\sum_{i=1}^p a_i$ and $q=\max\{i\mid a_i\leq Q\}$, in the same manner as we did for $\kSSRRshort$. The next theorem is analogous to Theorem~\ref{thrm:optimal_sol}, with a key difference: its properties are only guaranteed for some \emph{perfect} $p$. Its proof uses arguments similar in spirit to the proof of Theorem~\ref{thrm:optimal_sol}. Special attention is required since elements that violate a property are reassigned to other sets rather than being removed entirely from the solution. This may cause some $\kPRRshort$ restriction to be violated, thus rendering the proof significantly more involved.

\begin{theorem}
\label{thrm:perfect_p}
    Given a $\kPRshort$ instance $A$, there exists a perfect $p$ for $A$, for which there is an optimal solution for the $\kPRRshort$ instance $(A,p)$ satisfying the following properties:
    \begin{enumerate}
        \item For all sets $S_i$ containing only elements $j \leq q$ it holds that $\Sigma(S_i, A) < 2Q$.\label{part_prop1}
        \item All elements $j>q$ are contained in singleton sets. \label{part_prop2}
    \end{enumerate}
\end{theorem}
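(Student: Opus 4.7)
My plan is to mirror the strategy of Theorem~\ref{thrm:optimal_sol}, but replace each ``delete'' step with a ``reassign'' step, since in $\kPRshort$ every element must remain in some set. I will first establish that a perfect $p$ always exists, then single out a specific optimal $\kPRshort$ solution that is extremal with respect to a balance ordering, and finally show that this solution, labelled as a $\kPRRshort$ instance, satisfies both properties.

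For the existence part, take any optimal $\kPRshort$ solution $\hat S$ and let $\hat S_1$ denote the set whose largest index is minimal; since the $k$ sets are disjoint their maxima are pairwise distinct, so setting $p:=\max(\hat S_1)$ turns $\hat S$ into a feasible $\kPRRshort$ instance for $(A,p)$. Because every $\kPRRshort$ solution is also a $\kPRshort$ solution, the two optima coincide for this $p$, so $p$ is perfect in the sense of Definition~\ref{def:optimal_p}. The remainder of the argument refines this choice to guarantee the structural properties.

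For the structural part, among all optimal $\kPRshort$ solutions I would pick $\hat S^*$ that is lexicographically most balanced, e.g.\ maximizing $m(\hat S^*)$, then minimizing $M(\hat S^*)$, with further tiebreakers on the sorted vector of set sums. Let $p=\max(\hat S^*_1)$, $Q=\sum_{i=1}^p a_i$, and $q=\max\{i : a_i\le Q\}$. To obtain Property~\ref{part_prop2}, suppose some $S_i^*$ with $|S_i^*|>1$ contains an index $j>q$; then $a_j>Q\ge \Sigma(\hat S^*_1,A)\ge m(\hat S^*)$, so the minimum-sum set $S_{\min}$ of $\hat S^*$ satisfies $\Sigma(S_{\min},A)\le Q<a_j$, whence $S_{\min}\ne S_i^*$ and $S_{\min}$ contains no index exceeding $q$. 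Reassigning $S_i^*\setminus\{j\}$ to $S_{\min}$ leaves $S_i^*=\{j\}$ with sum $a_j\le M(\hat S^*)$ and raises $\Sigma(S_{\min},A)$ to at most $Q+\Sigma(S_i^*,A)-a_j<\Sigma(S_i^*,A)\le M(\hat S^*)$. Thus the ratio cannot increase, and a short case check shows that the sorted sum vector strictly improves in the balance ordering, contradicting the extremal choice of $\hat S^*$. If $S_i^*\setminus\{j\}$ itself contains several large indices, the move is iterated, placing each large index in a distinct set; this is feasible because an optimality argument applied to $\hat S^*$ forces at most $k-1$ indices to exceed $q$ (otherwise a regrouping that makes each large index its own singleton strictly lowers the ratio).

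For Property~\ref{part_prop1}, once Property~\ref{part_prop2} holds, suppose some $S_i^*$ uses only indices $\le q$ yet has $\Sigma(S_i^*,A)\ge 2Q$. Moving the smallest index $u$ of $S_i^*$ into $\hat S^*_1$ yields new sums $\Sigma(\hat S^*_1,A)+a_u\le 2Q\le \Sigma(S_i^*,A)\le M(\hat S^*)$ and $\Sigma(S_i^*,A)-a_u\ge Q\ge m(\hat S^*)$, so the ratio does not increase and the sums strictly improve in the balance ordering; because $a_u\le Q$ no new violation of Property~\ref{part_prop2} is introduced, and extremality is again contradicted. The \emph{main obstacle}, as the authors signal, is that these reassignments may push an index $>p$ into $\hat S^*_1$, breaking the $\kPRRshort$ constraint $\max(S_1)=p$. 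I would handle this by allowing $p$ itself to be re-read off the transformed solution as the new smallest maximum, updating $Q$ and $q$ accordingly, and verifying that each transformation either strictly improves the balance or preserves it while strictly decreasing a secondary potential, so the process terminates at a fixed point that satisfies both properties with respect to its own final $p$. Coordinating this moving target $p$ with the ongoing element moves is precisely the delicate ingredient that makes the proof significantly more involved than that of Theorem~\ref{thrm:optimal_sol}.
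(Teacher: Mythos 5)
Your local moves are essentially the paper's (reassign a violating element to a minimum-sum set, or to $S_1$), but you package them as an extremal argument --- pick the ``most balanced'' optimal solution and derive a contradiction --- whereas the paper runs an explicit iterative transformation in which $p$ is re-read after every move and termination follows because $p$ is non-decreasing and bounded by $n-k+1$. The extremal framing is attractive because, as long as every move \emph{strictly} improves the balance order, the moving-target-$p$ issue disappears entirely. The gap is that this strict improvement is asserted, not proved, and it genuinely fails for the Property~\ref{part_prop1} move: if $\Sigma(S_i^*,A)=2Q$, $\Sigma(\hat S^*_1,A)=Q$ and $a_u=Q$, then moving $u$ into $\hat S^*_1$ merely swaps the two sums, so the multiset of set sums --- and hence \emph{any} balance order defined on it ($m$, $M$, or the sorted sum vector) --- is unchanged, and no contradiction is obtained. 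At exactly this point you retreat to ``a secondary potential'' and ``the process terminates at a fixed point,'' which is precisely the delicate ingredient the paper supplies (in that degenerate case the move strictly increases the minimal maximum $p$, and the paper's whole proof is organized around the monotonicity and boundedness of $p$). Naming and verifying that potential is the content of the proof, not a detail to be deferred.

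Two smaller points. First, your parenthetical justification that ``at most $k-1$ indices exceed $q$, otherwise a regrouping that makes each large index its own singleton strictly lowers the ratio'' is circular and not even executable: with $\geq k$ indices above $q$ and $S_1\subseteq[p]\subseteq[q]$, there are not enough sets to give each large index its own singleton, and the bound $x\le k-1$ is a \emph{consequence} of Property~\ref{part_prop2} for the perfect $p$ (cf.\ Lemma~\ref{lem:pigeonhole}), not a premise. Fortunately this claim is unnecessary in your framing: a single application of the Property~\ref{part_prop2} move already yields a valid partition with no worse ratio and a strictly lexicographically larger sorted sum vector (here the degenerate swap cannot occur, since $m\le Q<a_j$), so one move suffices for the contradiction and no iteration over the remaining large indices is needed. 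Second, your check that fixing Property~\ref{part_prop1} does not re-break Property~\ref{part_prop2} implicitly relies on $q$ not decreasing when $p$ changes; that is true but needs to be said, since the classification of ``large'' elements moves together with $p$.
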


\begin{proof}    
For some arbitrary value of $p$, let $S=(S_1,\ldots,S_k)$ be an arbitrary feasible solution for the $\kPRRshort$ instance $(A,p)$. If $S$ breaks property~\ref{part_prop2}, i.e. there exists a set $S_i$ in $S$ with $|S_i|>1$ containing an element $j > q$, we do the following. Let $S_m$ be a set with minimum sum\footnote{There could be multiple sets with the same (minimum) sum.} and $u$ the smallest element of $S_i$. The following inequalities hold.
    $$m(S,A) = \Sigma(S_m,A) \leq Q < \Sigma(S_i \setminus \{u\},A)$$
    $$\Sigma(S_m \cup \{u\},A) \leq Q+a_u < \Sigma(S_i,A) \leq M(S,A)$$
From these we can infer that, by moving $u$ from $S_i$ to $S_m$, the minimum sum cannot decrease and the maximum cannot increase. Thus, we move $u$ as described, yielding a solution $S'$ with $\mathcal{R}(S',A) \leq \mathcal{R}(S,A)$. However, it might be the case that $S_m$ is $S_1$ and adding $u$ to it breaks the restrictions of $\kPRRshort$, so $S'$ might not be a feasible solution for $(A,p)$. For the new solution $S'$, define $p'$ as the minimum among the maxima of its $k$ sets and call $S_1$ the set that contains $p'$. Note that $S'$ is a feasible solution for the $\kPRRshort$ instance $(A,p')$.

If $S'$ still breaks property~\ref{part_prop2} for $Q'=\sum_{i=1}^{p'} a_i$ and $q'=\max\{i\mid a_i\leq Q'\}$, apply the same process. By doing this repeatedly, $p$ cannot decrease and the ratio of the solution cannot increase. Note that $p$ cannot exceed $n-k+1$ with this process, so at some point $p$ will stop increasing. Let $p_f$ be this final value and define $Q_f,q_f$ accordingly for $p_f$. Repeating the process will at some point yield a solution that satisfies property~\ref{part_prop2} for the $\kPRRshort$ instance $(A,p_f)$.

Let $S''$ be the feasible solution for the $\kPRRshort$ instance $(A,p_f)$, derived from the process described in the previous paragraphs. $S''$ satisfies property~\ref{part_prop2} for values $Q_f,q_f$. If $S''$ breaks property~\ref{part_prop1}, i.e. it contains a set $S_i$ consisting only of elements $j\leq q_f$ and having sum $\Sigma(S_i,A)\geq 2Q_f$, we apply the following transformation. Let $S_m$ be a set with minimum sum. Take the smallest element $j$ from $S_i$ and move it to $S_m$. Observe that $a_j \leq Q_f$. Hence, the following inequalities hold.
    $$m(S'',A)=\Sigma(S_m,A) \leq Q_f \leq \Sigma(S_i \setminus \{j\},A)$$
    $$\Sigma(S_m \cup \{j\},A) \leq 2Q_f \leq \Sigma(S_i,A)\leq M(S'',A)$$
From these we can infer that, by moving $j$ from $S_i$ to $S_m$, the minimum sum cannot decrease and the maximum cannot increase. We repeatedly apply this process until we end up with a solution satisfying property~\ref{part_prop1}, just like we did in the previous paragraphs for property~\ref{part_prop2}, since the same arguments hold regarding the increment of $p$ and the ratio not increasing. Furthermore, this transformation preserves property~\ref{part_prop2}, as $p$ cannot decrease through this process and we only add elements to the set with the smallest sum (which cannot be a set containing an element $j>q$).

In conclusion, for any feasible solution for a $\kPRRshort$ instance $(A,p)$ that violates some property, we can find a feasible solution for another $\kPRRshort$ instance $(A,p_{new})$ that satisfies both properties and has equal or smaller ratio. Suppose we apply this to an optimal solution for a $\kPRRshort$ instance $(A,p)$, with $p$ being \emph{perfect}\footnote{Such a $p$ always exists, by definition.} for the $\kPRshort$ instance $A$. Since the ratio of the solution does not increase throughout the constructions described in this proof, any new $p'$ obtained by the constructions must also be \emph{perfect} and the solution obtained must be optimal for $(A,p')$.
  
\end{proof}

Let $p^*$ be one of the \emph{perfect} elements guaranteed to exist by Theorem~\ref{thrm:perfect_p} for $A$ and $S^*$ be a respective optimal solution for $(A,p^*)$ satisfying properties~\ref{part_prop1} and~\ref{part_prop2}. The following lemmas indicate that we can prune certain values of $p$, without missing $S^*$. We define $x=n-q$. Intuitively, for $(A,p^*)$, $x$ is the number of large elements that must be contained in singleton sets in $S^*$, according to Theorem~\ref{thrm:perfect_p}. 

\begin{lemma}
\label{lem:pigeonhole}
    If for a $\kPRRshort$ instance $(A,p)$ we have $x>k-1$, then $p \neq p^*$.
\end{lemma}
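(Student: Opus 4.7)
The plan is to argue by contrapositive: I would assume $p = p^*$ and deduce $x \le k-1$, which directly contradicts the hypothesis $x > k-1$.

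The starting point is Theorem~\ref{thrm:perfect_p}, which yields an optimal solution $S^* = (S_1, \ldots, S_k)$ for the $\kPRRshort$ instance $(A, p^*)$ in which every element $j > q$ appears in its own singleton set (property~\ref{part_prop2}). Since there are exactly $x = n - q$ such large elements, this immediately produces $x$ pairwise distinct singleton sets among $S_1, \ldots, S_k$.

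Next I would show that $S_1$ is distinct from all $x$ of these singletons. The key observation is that $a_{p^*}$ is itself one of the summands defining $Q = \sum_{i=1}^{p^*} a_i$, so $a_{p^*} \le Q$, which by the definition of $q$ forces $p^* \le q$. Because $\max(S_1) = p^*$, every index in $S_1$ is at most $q$, so $S_1$ contains no element $j > q$ and hence cannot coincide with any of the $x$ large-element singletons.

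Combining the two counts, $S^*$ must comprise at least $x + 1$ distinct sets among its $k$ sets, forcing $x + 1 \le k$, i.e., $x \le k - 1$, which contradicts $x > k - 1$. I do not anticipate a serious obstacle: once Theorem~\ref{thrm:perfect_p} is available, the remainder is a short counting step. The only subtlety worth double-checking is the weak inequality $p^* \le q$, which is immediate from the definition of $q$ together with the observation that $a_{p^*}$ is one of the terms summed to form $Q$.
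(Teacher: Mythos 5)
Your proof is correct and follows essentially the same pigeonhole argument as the paper: both rely on the fact that $\max(S_1)=p^*\leq q$ excludes $S_1$ from holding any element $j>q$, and that property~2 of Theorem~\ref{thrm:perfect_p} forces the $x$ large elements into distinct singletons among the remaining $k-1$ sets. Your explicit justification of $p^*\leq q$ via $a_{p^*}\leq Q$ is a nice touch that the paper leaves implicit.
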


\begin{proof}
    Since $S_1$ cannot contain elements $j>q$, there are $k-1$ sets that can contain such elements. The number of these elements is $x$. If $x>k-1$, by the pigeonhole principle, every feasible solution contains a set with two or more of these elements. This contradicts property~\ref{part_prop2} of Theorem~\ref{thrm:perfect_p}, therefore $p \neq p^*$.
\end{proof}

\begin{lemma}
\label{lem:pigeonhole_special_case}
    If for a $\kPRRshort$ instance $(A,p)$ we have $x=k-1$ and $p < q$, then $p \neq p^*$.
\end{lemma}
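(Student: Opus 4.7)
The plan is to argue by contradiction. Suppose $p = p^*$; then, by the definition of $p^*$ and Theorem~\ref{thrm:perfect_p}, there is an optimal solution $S^* = (S_1, \ldots, S_k)$ for the $\kPRRshort$ instance $(A, p)$ that satisfies properties~\ref{part_prop1} and~\ref{part_prop2}. Since $x = n - q = k - 1$, there are exactly $k - 1$ elements with index greater than $q$, and property~\ref{part_prop2} forces each of them to sit in its own singleton set.

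The next step is to pin down where these $k-1$ singletons live. The restriction $\max(S_1) = p < q$ gives $S_1 \subseteq [p]$, so $S_1$ contains none of the elements $q+1, \ldots, n$. Thus the $k-1$ singletons must be distributed among $S_2, \ldots, S_k$, and since there are exactly $k-1$ such sets, each of $S_2, \ldots, S_k$ coincides with one of these singletons.

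Finally, I would invoke the partition constraint to derive a contradiction. Under the structure above, $S_2, \ldots, S_k$ jointly cover exactly $\{q+1, \ldots, n\}$, so $S_1$ would need to contain all of $[q]$ in order to have $\bigcup_{i=1}^k S_i = [n]$; but $\max(S_1) = p < q$ prohibits $S_1$ from containing any element of $\{p+1, \ldots, q\}$, a set that is non-empty precisely because $p < q$. Hence no feasible $\kPRRshort$ solution can simultaneously satisfy the partition requirement and properties~\ref{part_prop1}, \ref{part_prop2}, and so $p \neq p^*$. I do not anticipate any real obstacle here: the argument is a short pigeonhole count, and the role of the hypothesis $p < q$ is precisely to prevent $S_1$ from absorbing the leftover elements in $\{p+1,\ldots,q\}$.
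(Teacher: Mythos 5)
Your proof is correct and is essentially the paper's argument: both boil down to the observation that the $k$ elements $q, q+1, \ldots, n$ cannot lie in $S_1$ (since $\max(S_1)=p<q$) yet must all be covered by the $k-1$ sets $S_2,\ldots,S_k$, which property~\ref{part_prop2} forbids. The only cosmetic difference is that you first use property~\ref{part_prop2} to force $S_2,\ldots,S_k$ to be the $k-1$ large singletons and then find that $q$ has nowhere to go, whereas the paper applies the pigeonhole count directly to exhibit a non-singleton set containing an element greater than $q$.
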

\begin{proof}
    Because $p<q$, $S_1$ cannot contain $q$. Thus, there are at least $x+1=k$ elements that cannot be contained in $S_1$, with $k-1$ of them being greater than $q$. By the pigeonhole principle, every feasible solution contains a set with two or more elements, one of which is greater than $q$. This contradicts property~\ref{part_prop2} of Theorem~\ref{thrm:perfect_p}, therefore $p \neq p^*$.
\end{proof}

\subsection{Obtaining an FPTAS for \texorpdfstring{\textit{k}-PART}{k-PART}} 
\label{subsec:DP_partition}

We now present the main algorithm for $\kPRshort$. Its design is analogous to that of Algorithms $\ref{alg:exact_k-SSRR}$ and $\ref{alg:FPTAS_k-SSRR}$, with two important differences:

\begin{enumerate}
    \item There is no iteration for different amounts of singletons with elements $j>q$. Instead, we prune some values of $p$ according to Lemmas~\ref{lem:pigeonhole} and~\ref{lem:pigeonhole_special_case} and fix $x$ singletons for the rest.

    \item In contrast to $\kSSRRshort$, we do not necessarily obtain the optimal solution to each rounded $\kPRRshort$ instance $(A_r,p)$. Interestingly, we will later prove that the obtained solution is sufficient for the algorithm to be an FPTAS for $\kPRshort$.
\end{enumerate}

\begin{algorithm}[H] 
\caption{$\texttt{FPTAS\textunderscore $\kPRshort$}(A,\varepsilon)$}
\label{alg:FPTAS_k-PART}
\begin{algorithmic}

\Require{A sorted multiset $A=\{a_1,\ldots,a_n\}, a_i\in\mathbb{Z}^+$ and an error parameter $\varepsilon \in \left(0, 1\right)$.}

\Ensure{Disjoint subsets $(S_1,\ldots,S_k)$ of $[n]$, such that  $\bigcup_{i=1}^k S_i = [n]$ and their ratio satisfies $\mathcal{R}(S_1,\ldots,S_k,A) \leq (1+\varepsilon)\mathcal{R}(S_1^*,\dots,S_k^*,A)$.}

\For{$p\leftarrow1,\ldots,n-k+1$}
    \State $best\_ratio[p] \leftarrow \infty$,        
    $best\_solution[p] \leftarrow (\emptyset,\ldots,\emptyset)$
\EndFor

\For{$p \leftarrow 1,\ldots,n-k+1$}
    \State $Q \leftarrow \sum_{i=1}^p a_i$,       
    $q \leftarrow \max\{i\mid a_i\leq Q\}$, 
    $x \leftarrow n-q$

    \If{$x>k-1 \; \lor \; (x=k-1 \; \land \; p<q)$} \Comment{Lemmas~\ref{lem:pigeonhole} and~\ref{lem:pigeonhole_special_case}}
        \State \textbf{continue} to next $p$ 
    \EndIf 


    
    \State \textbf{for} $y\leftarrow1, \ldots, x$ \textbf{do}
        $S_{k-x+y} \leftarrow \{q+y\}$\Comment{$x$ singletons}

    \State $\delta \leftarrow \frac{\varepsilon \cdot a_p}{3\cdot n}$, $A_r\leftarrow \emptyset$, $k' \leftarrow k-x$

    \State \textbf{for} $i\leftarrow1,\ldots,n$ \textbf{do}
     $a^r_i \leftarrow \lfloor \frac{a_i}{\delta} \rfloor$,   
    $A_r\leftarrow A_r\cup \{ a^r_i \}$


    \State $A_r' \leftarrow \{a_1^r,\ldots,a_q^r\}$ \Comment{$q$ items for DP}
    \State $DP\_solutions \leftarrow$ \texttt{DP\textunderscore $\kPRRshort$}$(A_r',k',p,Q/\delta)$ \Comment{Call Alg.~\ref{alg:DP_k-PARR}}
    
    \For{\textnormal{\textbf{all}} $(S_1,\ldots,S_{k'})$ \textnormal{\textbf{in}} $DP\_solutions$}
            \State $current\_ratio \leftarrow \mathcal{R}(S_1,\ldots,S_k,A_r)$ \Comment{$\kPRRshort$ solution}

            \If{$current\_ratio < best\_ratio[p]$}
                \State $best\_solution[p] \leftarrow (S_1,\ldots,S_k)$ \Comment{Best solution for each $A_r$}
                \State $best\_ratio[p] \leftarrow current\_ratio$ \Comment{and its ratio}
            \EndIf
        \EndFor
\EndFor

\State $final\_ratio \leftarrow \infty, final\_solution \leftarrow 0$

\For{$p\leftarrow1,\ldots,n-k+1$} \Comment{Iterate for all $p$ to find best sol. for $A$}
    \State $current\_ratio \leftarrow \mathcal{R}(best\_solution[p],A)$
    
    \If{$current\_ratio < final\_ratio$}
        \State $final\_solution \leftarrow best\_solution[p]$
        \State $final\_ratio \leftarrow current\_ratio$
    \EndIf
\EndFor

\State \Return $final\_solution$

\end{algorithmic}
\end{algorithm}

Algorithm~\ref{alg:FPTAS_k-PART} calls a dynamic programming subroutine for each value of $p$, in order to find candidate partial solutions for elements $j \leq q$. This DP subroutine, namely Algorithm~\ref{alg:DP_k-PARR}, is a direct extension of Algorithm~\ref{alg:DP_k-SSRR}, with the following three differences.

\begin{enumerate}
    \item The case of an element not being added to any set is skipped.
    \item When a conflict occurs in a DP cell $T_i[D][V]$, we do not use $sum_1$ to resolve it. For $\kPRRshort$ every element is included in some set, thus conflicts can only occur between tuples with identical sums. Hence, it does not matter which tuple is preferred (as proven in Theorem~\ref{thrm:kSSRR_exact_correctness} for conflicts with equal sums) and there is no need to ever consider $sum_1$.
    
    \item The bound for pruning large negative differences is chosen as $-2Q/\delta$, where $Q=\sum_{i=1}^p a_i$ is calculated using the \emph{initial} values $a_i$ instead of the scaled and rounded values $a_i^r$. This is necessary to avoid pruning edge case solutions. We will expand on this in Lemma~\ref{lem:partition_almost_correctness}.
\end{enumerate}


\begin{algorithm}[H] 
\caption{$\texttt{DP\textunderscore $\kPRRshort$}(A,k,p,Q)$}
\label{alg:DP_k-PARR}
\begin{algorithmic}[1]

\Require{A sorted multiset $A=\{a_1,\ldots,a_q\}, a_i\in\mathbb{Z}^+$, an integer $k \geq 1$, an integer $p,1 \leq p \leq q-k+1$ and a real number $Q>0$.}

\Ensure{A set $solutions$ containing tuples of $k$ disjoint subsets $(S_1,\ldots,S_k)$ of $[q]$, such that $\bigcup_{i=1}^k S_i = [q]$, $\max(S_1)=p$, $\max(S_i)>p$ for $1<i \leq k$.}

\State $T_i[D][V] \leftarrow \emptyset$ \textbf{for all} $1 \leq i \leq q$, $D=[d_2,\ldots,d_k]$, $V=[v_2,\ldots,v_k]$ with $d_2 \leq \ldots \leq d_k$
\State $T_0[a_p,\ldots,a_p][\texttt{false},\ldots,\texttt{false}] = (\{p\}, \emptyset, \ldots, \emptyset)$

\For{$i \leftarrow 1,\ldots,q$}

     \If{$i=p$}
        \State $T_i[D][V] \leftarrow T_{i-1}[D][V]$ \textbf{for all} $T_{i-1}[D][V] \neq \emptyset$
        
        \State \textbf{continue} to next $i$
    \EndIf
    
    \For{\textnormal{\textbf{all}} $T_{i-1}[D][V] \neq \emptyset$}



        \If{$i<p$}
        \State $(S_1, \ldots, S_k), D', V' \leftarrow \texttt{update}(a_i,i,1,D,V,T_{i-1}[D][V])$
                
        \If{$T_{i}[D'][V'] = \emptyset$}
                \State $T_{i}[D'][V'] \leftarrow (S_1, \ldots, S_k)$ \EndIf
        \EndIf
        
        \For{$j\leftarrow2, \ldots, k$}
            \If{$d_j - a_i > -2Q$}
            
                \State $(S_1, \ldots, S_k), D', V' \leftarrow \texttt{update}(a_i,i,j,D,V,T_{i-1}[D][V])$
                
                \If{$T_{i}[D'][V'] = \emptyset$} 
                \State $T_{i}[D'][V'] \leftarrow (S_1, \ldots, S_k)$   
                \EndIf
            \EndIf
        \EndFor    
    \EndFor    
\EndFor

\State $solutions \leftarrow \emptyset$

\For{\textnormal{\textbf{all}} $T_q[D][\textnormal{\texttt{true},\ldots,\texttt{true}}] \neq \emptyset$}
    \State $(S_1, \ldots, S_k) \leftarrow T_q[D][\texttt{true},\ldots,\texttt{true}]$
        
    \State $solutions \leftarrow solutions \cup \{(S_1, \ldots, S_k)\}$
\EndFor

\State \Return $solutions$

\end{algorithmic}
\end{algorithm}

Algorithm~\ref{alg:DP_k-PARR} calls an \texttt{update} function, which is a simpler variation of Algorithm~\ref{alg:update}; essentially, it skips all modifications regarding $sum_1$. The pseudocode of this function is omitted due to its simplicity.

We now present the following lemma, whose proof is essentially identical to that of Lemma~\ref{lem:kSSRR_feasibility}.

\begin{lemma}[Feasibility]
\label{lem:partition_feasible}
    Every $k$-tuple of sets whose $\mathcal{R}(S_1,\ldots,S_k,A_r)$ value is considered by Algorithm $\ref{alg:FPTAS_k-PART}$ is a feasible solution for the $\kPRRshort$ instance $(A_r,p)$.
\end{lemma}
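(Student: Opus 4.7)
The plan is to mirror the proof of Lemma~\ref{lem:kSSRR_feasibility}, verifying each condition from Definition~\ref{def:kPRR} in turn, and then handle the one new requirement, namely $\bigcup_{i=1}^k S_i = [n]$, which did not appear in the $\kSSRRshort$ setting.

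First, I would unpack how Algorithm~\ref{alg:FPTAS_k-PART} assembles a candidate solution. The $x$ singletons $S_{k-x+1},\ldots,S_k$ are set to $\{q+1\},\ldots,\{q+x\}=\{q+1\},\ldots,\{n\}$, which are pairwise disjoint and use exactly the elements of $[n]\setminus[q]$. The remaining $k'=k-x$ sets are produced by \texttt{DP\textunderscore $\kPRRshort$} on $A'_r=\{a_1^r,\ldots,a_q^r\}$. Exactly as in Algorithm~\ref{alg:DP_k-SSRR}, each element $i\in[q]$ is processed once, in increasing order, and added to at most one of $S_1,\ldots,S_{k'}$, so the $k'$ DP sets are disjoint subsets of $[q]$. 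Combined with the $x$ singletons supported on $[n]\setminus[q]$, the full $k$-tuple is thus pairwise disjoint.

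Next, I would check the constraints on the largest elements. Since the DP cell initialised at $T_0$ is the unique nonempty starting cell, $p$ is forced into $S_1$ at the outset. No later element can be added to $S_1$ unless it is strictly smaller than $p$ (the rule ``element $i$ is added to $S_1$ only if $i<p$''), so $\max(S_1)=p$. For each $S_i$ with $1<i\le k'$, the DP only returns tuples whose validity vector is $[\texttt{true},\ldots,\texttt{true}]$, which by the update rule is equivalent to $S_i$ having received some element $>p$, giving $\max(S_i)>p$. For the $x$ appended singletons $\{q+j\}$ with $1\le j\le x$, we have $q+j>q\ge p$ (as $p\le q$ by construction of $q$), so their maxima likewise exceed $p$.

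The main obstacle, and the only genuinely new point compared to Lemma~\ref{lem:kSSRR_feasibility}, is verifying $\bigcup_{i=1}^k S_i=[n]$. Here the plan is to use the first of the three modifications listed right above the lemma: the ``element $i$ is not added to any set'' branch is removed in \texttt{DP\textunderscore $\kPRRshort$}, so every element of $[q]$ is placed into exactly one of $S_1,\ldots,S_{k'}$. Hence the DP sets cover $[q]$, and together with the singletons $\{q+1\},\ldots,\{n\}$ the union is $[q]\cup\{q+1,\ldots,n\}=[n]$. Combining this with disjointness and the maxima conditions established above shows that every $k$-tuple of sets whose ratio Algorithm~\ref{alg:FPTAS_k-PART} evaluates is a feasible solution of the $\kPRRshort$ instance $(A_r,p)$.
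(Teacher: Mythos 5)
Your proof is correct and follows the same approach as the paper, which simply remarks that the argument is ``essentially identical'' to that of Lemma~\ref{lem:kSSRR_feasibility} (disjointness from processing each element once, $\max(S_1)=p$ from the initialization and the $i<p$ rule, and $\max(S_i)>p$ from the validity vector). You additionally spell out the one genuinely new point --- that removing the ``skip element'' branch together with the singletons $\{q+1\},\ldots,\{n\}$ yields $\bigcup_i S_i=[n]$ --- which the paper leaves implicit; this is a faithful and slightly more complete rendering of the intended argument.
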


Recall that we defined $p^*$ and $S^*$ as a \emph{perfect} $p$ and a respective optimal solution for $(A,p^*)$ whose existence is guaranteed by Theorem~\ref{thrm:perfect_p}.

\begin{lemma}[Near-optimality]
\label{lem:partition_almost_correctness}
    When Algorithm $\ref{alg:FPTAS_k-PART}$ iterates to compare the ratio of solutions for the $\kPRRshort$ instance $(A_r,p^*)$, it will consider either $S^*$ or another solution $S=(S_1,\ldots,S_k)$ with $\Sigma(S_i,A_r)=\Sigma(S_i^*,A_r)$, $\forall i \in [k]$.
\end{lemma}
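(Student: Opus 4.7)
The plan is to verify two things for $p = p^*$: that Algorithm~\ref{alg:FPTAS_k-PART} does not skip this value during pruning, and that the dynamic programming subroutine retains a tuple whose per-set sums on $A_r$ match those of $S^*$. Writing $S^*_{\text{large}}$ for the singletons in $S^*$ whose element exceeds $q$ and $S^*_{\text{small}}$ for the remaining $k' = k - x$ sets, I would first handle the pruning check. By property~\ref{part_prop2} of Theorem~\ref{thrm:perfect_p}, each of the $x = n - q$ elements above $q$ sits in its own singleton, and since $\max(S_1^*) = p^* \leq q$ these singletons must lie among $S_2^*, \ldots, S_k^*$, forcing $x \leq k - 1$. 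In the boundary case $x = k - 1$ the sets $S_2^*, \ldots, S_k^*$ are fully consumed by those singletons, so $S_1^*$ contains all of $[q]$, giving $p^* = q$ and bypassing Lemma~\ref{lem:pigeonhole_special_case}. A short check also confirms that the $x$ singletons hard-coded by the algorithm, namely $\{q+1\}, \ldots, \{q+x\}$, coincide with $S^*_{\text{large}}$, since property~\ref{part_prop2} combined with the partition requirement forces $S^*_{\text{large}}$ to cover precisely $\{q+1, \ldots, n\}$.

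The substance of the argument is to show that the subroutine \texttt{DP\textunderscore $\kPRRshort$}$(A_r', k', p^*, Q/\delta)$ outputs a tuple whose sums on $A_r$ agree with those of $S^*_{\text{small}}$. Since $S^*_{\text{small}}$ partitions $[q]$ and satisfies $\max(S_1^*) = p^*$ and $\max(S_i^*) > p^*$ for the other sets, it is in principle one of the tuples the DP explores. I then have to rule out two ways its trajectory could be lost. For threshold pruning, property~\ref{part_prop1} of Theorem~\ref{thrm:perfect_p} yields $\Sigma(S_j^*, A) < 2Q$ for each $j \in [k']$, hence $\Sigma(S_j^*, A_r) \leq \Sigma(S_j^*, A)/\delta < 2Q/\delta$ after rounding; combined with $\Sigma(S_1^*, A_r) \geq 0$ this makes every difference $d_j = \Sigma(S_1^*, A_r) - \Sigma(S_j^*, A_r)$ strictly exceed $-2Q/\delta$, so the threshold bound never fires. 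This is precisely why the DP's cutoff is computed through the unrounded $Q$ rather than through $\sum_{i \leq p} a_i^r$.

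For conflict pruning, I would exploit the partitioning invariant. At step $i$ of the DP every live tuple satisfies $S_1 \cup \ldots \cup S_k = [i] \cup \{p\}$, so the total $\sum_j \Sigma(S_j, A_r)$ is completely determined by $i$; combined with the fact that the sorted difference vector $D$ pins down all pairwise differences $\Sigma(S_1) - \Sigma(S_j)$, every individual per-set sum is also determined. Consequently, any two tuples colliding at $T_i[D][V]$ share identical per-set sums, and whichever one the algorithm retains is sum-equivalent to $S^*_{\text{small}}$ at that step; a standard induction on $i$ then shows that the retained trajectory still reaches the final sums of $S^*_{\text{small}}$, since the same assignment of the next element is always available to it. Appending the prescribed singletons $\{q+1\}, \ldots, \{q+x\}$ to the tuple retained at step $q$ yields a full solution $S$ with $\Sigma(S_i, A_r) = \Sigma(S_i^*, A_r)$ for every $i$, which is exactly what Algorithm~\ref{alg:FPTAS_k-PART} considers.

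The main obstacle I anticipate is this conflict analysis. Algorithm~\ref{alg:FPTAS_k-PART} omits the $sum_1$ tiebreaker used by Algorithm~\ref{alg:exact_k-SSRR} and leans entirely on the partitioning invariant to make arbitrary tiebreaking harmless. Spelling out why ``total fixed plus sorted differences fixed'' forces all per-set sums to coincide, while tracking how the sorting of $D$ is compatible with the labeling needed to match $S^*_{\text{small}}$ set by set throughout the induction, will be the most delicate piece of bookkeeping in the proof.
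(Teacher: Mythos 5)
Your proposal is correct and follows essentially the same route as the paper's proof: rule out pruning of $p^*$ via the pigeonhole lemmas, match the hard-coded singletons to $S^*_{\text{large}}$ using property~\ref{part_prop2}, use property~\ref{part_prop1} together with rounding down to show the $-2Q/\delta$ threshold never fires, and observe that the partition invariant plus the difference vector forces conflicting tuples to have identical per-set sums. Your explicit justification of that last point (total sum fixed by $i$, differences fixed by $D$) is a slightly more detailed spelling-out of what the paper asserts, but it is the same argument.
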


\begin{proof}
    According to Lemmas~\ref{lem:pigeonhole} and~\ref{lem:pigeonhole_special_case}, we cannot have $x>k-1$ or $x=k-1$ and $p<q$ for $p=p^*$. The singleton sets enforced by Algorithm $\ref{alg:FPTAS_k-PART}$ are exactly the singleton sets contained in $S^*$, according to property~\ref{part_prop2} of Theorem~\ref{thrm:perfect_p}.

    Recall that a solution in the DP subroutine is pruned if it has some difference $d_j \leq -2Q/\delta$. By property~\ref{part_prop1} of Theorem~\ref{thrm:perfect_p}, it holds that $\Sigma(S_i^*,A) < 2Q$, $\forall i \in [k-x]$. Since $A_r$ contains elements scaled by $\delta$ and \emph{rounded down}, we infer that $\forall i \in [k-x]$
    $$\Sigma(S_i^*, A_r) \leq \Sigma(S_i^*, A)/\delta <2Q/\delta.$$
    This implies that for $S^*$, there will be no $d_j \leq -2Q/\delta$ (at any point in its dynamic programming construction).

    As already explained, DP conflicts for $\kPRRshort$ occur only between tuples with identical sums. Recall that two conflicting tuples have the same vectors $D,V$ and only use elements up to some element $i$, so any combination of elements that can be added to sets of one tuple can also be added to the respective sets of the other tuple. This is explained in more detail in the proof of Theorem~\ref{thrm:kSSRR_exact_correctness}. We infer that it is possible for $S^*$ to be overwritten by another solution $S=(S_1,\ldots,S_k)$ with $\Sigma(S_i,A_r)=\Sigma(S_i^*,A_r)$, $\forall i \in [k]$.

    The DP subroutine constructs every possible combination of disjoint sets $S_1,\ldots,S_{k-x}$ with $\max (S_1) = p$ and $\bigcup_{i=1}^k S_i = [q]$, apart from the ones pruned by the cases mentioned in the previous paragraph. Thus, the lemma follows.
\end{proof}

\begin{theorem}
\label{thrm:partition_FPTAS}
    Algorithm $\ref{alg:FPTAS_k-PART}$ is an FPTAS for $\kPRshort$ that runs in time $O({n^{2k}}/{\varepsilon^{k-1}})$. 
\end{theorem}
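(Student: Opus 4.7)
The plan is to establish three things about Algorithm~\ref{alg:FPTAS_k-PART}: feasibility of its output, the $(1+\varepsilon)$-approximation guarantee, and the claimed $O(n^{2k}/\varepsilon^{k-1})$ running time. Feasibility follows directly from Lemma~\ref{lem:partition_feasible} (every candidate considered is a feasible $\kPRRshort$ solution, hence also a feasible $\kPRshort$ solution since it is a partition of $[n]$), so the work lies in the approximation bound and the complexity accounting.

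For the approximation bound, I would invoke Theorem~\ref{thrm:perfect_p} to fix a perfect $p^*$ together with an optimal $\kPRRshort$ solution $S^*=(S_1^*,\ldots,S_k^*)$ for $(A,p^*)$ that satisfies properties~\ref{part_prop1} and~\ref{part_prop2}; by the definition of perfect, $\mathcal{R}(S^*,A)$ equals the optimal $\kPRshort$ ratio on $A$. By Lemma~\ref{lem:partition_almost_correctness}, when the algorithm iterates with $p=p^*$ on the rounded instance $A_r$, it considers a feasible solution $S=(S_1,\ldots,S_k)$ with $\Sigma(S_i,A_r)=\Sigma(S_i^*,A_r)$ for all $i$. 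It then remains to show that $\mathcal{R}(S,A)\le(1+\varepsilon)\,\mathcal{R}(S^*,A)$, after which the final loop, which selects the best $best\_solution[p]$ by ratio on $A$, can only do better. To bound this ratio, I would follow the argument template used in Theorem~\ref{thrm:approximation_ratio} (deferred to Appendix~\ref{appendix_approx_ratio_proof}): for every set $S_i$, the identity $\delta a_i^r\le a_i<\delta(a_i^r+1)$ gives $\delta\Sigma(S_i,A_r)\le \Sigma(S_i,A)<\delta\Sigma(S_i,A_r)+n\delta$, and because each $S_i$ contains either the element $p^*$ (so its sum is at least $a_{p^*}$) or an element larger than $p^*$ (so its sum is at least $a_{p^*}$ again), we have $m(S,A)\ge a_{p^*}$ and $m(S^*,A)\ge a_{p^*}$; plugging in $n\delta=\varepsilon a_{p^*}/3$ and using $\Sigma(S_i,A_r)=\Sigma(S_i^*,A_r)$ yields $M(S,A)\le M(S^*,A)+\varepsilon a_{p^*}/3$ and $m(S,A)\ge m(S^*,A)-\varepsilon a_{p^*}/3$, from which a short manipulation gives $\mathcal{R}(S,A)\le(1+\varepsilon)\mathcal{R}(S^*,A)$.

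For the complexity, I would reuse the DP-cell counting argument of Lemma~\ref{lem:kSSRR_exact_complexity}: for each fixed $p$, the pruning bound $-2Q/\delta<d_j\le Q/\delta$ combined with the nondecreasing ordering of $D$ yields $O(n\,(Q/\delta)^{k'-1}/(k'-1)!)$ cells in the DP of \texttt{DP\_$\kPRRshort$}, each manipulated in $O(k')$ time; substituting $\delta=\varepsilon a_p/(3n)$ gives $Q/\delta\le n a_p/\delta=3n^2/\varepsilon$, so the DP runs in $O(n^{2k'-1}/\varepsilon^{k'-1})=O(n^{2k-1}/\varepsilon^{k-1})$ time (since $k'\le k$ and the $x=k-k'$ singletons are produced in $O(k)$ time). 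The outer loop over $p\in\{1,\ldots,n-k+1\}$ and the auxiliary loops in Algorithm~\ref{alg:FPTAS_k-PART} add at most an $O(n)$ factor, yielding the stated $O(n^{2k}/\varepsilon^{k-1})$ bound.

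The main obstacle is the approximation step, because unlike Theorem~\ref{thrm:approximation_ratio} we cannot appeal to an exact algorithm on $A_r$: the solution $S$ returned for $p=p^*$ is only guaranteed to match $S^*$ sum-for-sum on $A_r$, not to be optimal on $A_r$. The argument has to be careful to exploit precisely this matching, namely that the rounded sums agree so the rounding error accumulates identically on both sides, and to use the uniform lower bound $a_{p^*}$ on every set's sum (which relies on the $\kPRRshort$ constraints $\max(S_1)=p^*$ and $\max(S_i)>p^*$ for $i>1$) to absorb the additive error $n\delta$ into a multiplicative $(1+\varepsilon)$ factor.
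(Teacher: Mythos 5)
Your overall strategy matches the paper's: fix a perfect $p^*$ and a well-behaved optimal solution $S^*$ via Theorem~\ref{thrm:perfect_p}, use Lemma~\ref{lem:partition_almost_correctness} to guarantee that a sum-equivalent solution is considered in the iteration $p=p^*$, obtain feasibility from Lemma~\ref{lem:partition_feasible}, and count DP cells as in Lemma~\ref{lem:kSSRR_exact_complexity}. The complexity accounting is correct.

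There is, however, a gap in how you compose the approximation bound with the algorithm's selection rule. Your sum-matching argument bounds $\mathcal{R}(S,A)$ only for the \emph{specific} solution $S$ whose sums on $A_r$ agree with those of $S^*$. But the inner loop of Algorithm~\ref{alg:FPTAS_k-PART} does not keep $S$: it stores in $best\_solution[p^*]$ the candidate minimizing $\mathcal{R}(\cdot,A_r)$, which may be a different solution $S'$ with $\mathcal{R}(S',A_r)<\mathcal{R}(S,A_r)$ whose sums bear no relation to those of $S^*$. The final loop then compares only the stored $best\_solution[p]$ values by $\mathcal{R}(\cdot,A)$, so it never sees $S$, and ``the final loop can only do better'' does not follow from $\mathcal{R}(S,A)\le(1+\varepsilon)\mathcal{R}(S^*,A)$ alone. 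The repair is to route the comparison through the rounded ratio, as the paper does: from $\mathcal{R}(best\_solution[p^*],A_r)\le\mathcal{R}(S,A_r)=\mathcal{R}(S^*,A_r)$, apply the analogue of Lemma~\ref{lem:Salg_inequality} to $best\_solution[p^*]$ (every set in a feasible $\kPRRshort$ solution has sum at least $a_{p^*}$, so the additive rounding error $n\delta$ is at most an $\varepsilon/3$ fraction of each sum) and the analogue of Lemma~\ref{lem:Sopt_inequality} to $S^*$, giving
\[
\mathcal{R}(best\_solution[p^*],A)\;\le\;\mathcal{R}(S^*,A_r)+\frac{\varepsilon}{3}\;\le\;\left(1+\frac{\varepsilon}{2}\right)\mathcal{R}(S^*,A)+\frac{\varepsilon}{3}\;\le\;(1+\varepsilon)\,\mathcal{R}(S^*,A).
\]
All the ingredients you list suffice for this; only the order in which they are applied needs to change, replacing the direct comparison of sums by a comparison of ratios on $A_r$.
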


\begin{proof}    
    We rely upon the observation that the proof of Theorem~\ref{thrm:approximation_ratio} does not necessarily require an optimal solution for the rounded $\kSSRRshort$ instance $A_r$; it suffices for the algorithm to consider a $\kSSRRshort$ solution $S$ with $\mathcal{R}(S,A_r) \leq \mathcal{R}(S_{opt},A_r)$, where $S_{opt}$ is an optimal solution for the $\kSSRshort$ instance $A$ (prior to rounding).

    First, consider the case $p\not=p^*$. If the conditions of Lemma~\ref{lem:pigeonhole} or~\ref{lem:pigeonhole_special_case} are met, this $p$ will be skipped. Otherwise, a $k$-tuple of sets will be found, which is a feasible solution for the $\kPRRshort$ instance $(A_r,p)$, according to Lemma~\ref{lem:partition_feasible}.
    
    Second, consider the case $p=p^*$. By Lemma~\ref{lem:partition_almost_correctness}, Algorithm~\ref{alg:FPTAS_k-PART} will consider $S^*$ or another solution $S$ with $\mathcal{R}(S,A_r)=\mathcal{R}(S^*,A_r)$ as a solution for the instance $(A_r,p^*)$. Therefore, for the solution $S_{alg}$ found by the algorithm in this iteration, it holds that $\mathcal{R}(S_{alg},A_r) \leq \mathcal{R}(S^*,A_r)$. With the aforementioned observation, the proof of the following inequality is identical to the proof of Theorem~\ref{thrm:approximation_ratio}.
    $$\mathcal{R}(S_{alg},A) \leq (1+\varepsilon)\mathcal{R}(S^*,A)$$

    Taking into account both cases, the final solution returned by the algorithm is a feasible $\kPRshort$ solution for $A$ and has ratio smaller than or equal to that of the solution found in the $p^*$-th iteration. It follows that Algorithm~\ref{alg:FPTAS_k-PART} is a $(1+\varepsilon)$-approximation for $\kPRshort$.

    We now analyze the complexity of the algorithm. The DP subroutine runs in time $O(n(Q/\delta)^{k-1})$, by the same reasoning as in the proof of Lemma~\ref{lem:kSSRR_exact_complexity} for $x=0$ (which is the worst case). Taking into account the iteration for all possible values of $p$ and using a bound for $Q/\delta$ just like we did for $\kSSRshort$, we infer that Algorithm~\ref{alg:FPTAS_k-PART} runs in $O({n^{2k}}/{\varepsilon^{k-1}})$.
\end{proof}

\section{An FPTAS for \texorpdfstring{\textit{k}-SSR}{k-SSR} with linear dependence on \texorpdfstring{\textit{n}}{n}}
\label{sec:FPTAS_kSSR_alternative}

In this section, we provide an alternative FPTAS for $\kSSRshort$, using techniques inspired from \cite{SSR_Bri} to reduce the problem to smaller instances. We then use the $\kSSRshort$ FPTAS of Section~\ref{sec:FPTAS_kSSR} to solve them. 

We define an alternative restricted version of $\kSSRshort$, called $\kSSRLshort$, in which the solution is forced to contain the largest element of the input. This definition is based on the $\mathrm{SSR_L}$ problem defined in \cite{SSR_Bri}.

\begin{definition}[$\kSSRLshort$]
\label{def:kSSRL}
Given a sorted multiset $A=\{a_1,\ldots,a_n\}$ of positive integers, find $k$ disjoint subsets $S_1,\ldots,S_k$ of $[n]$ with $n \in \bigcup_{i=1}^k S_i$, such that $\mathcal{R}(S_1,\ldots,S_k,A)$ is minimized.
\end{definition}

Let $A=\{a_1,\ldots,a_n\}$ be a sorted multiset of $n$ positive integers. We denote by $A[l,r]$ the subset of $A$ consisting of all items $a_i$, such that $l\leq i \leq r$. We also use the notation $\textup{sum} (A)=\sum_{a \in A} a$.

For a multiset $A=\{a_1,\ldots,a_n\}$, we denote the optimal ratio of the $\kSSRshort$ instance $A$ as $\mathrm{OPT}(A)$ and the optimal ratio of the $\kSSRLshort$ instance $A$ as $\mathrm{OPT_L}(A)$. 
By definition, it holds that
$1\leq \mathrm{OPT}(A) \leq \mathrm{OPT_L}(A)$.

\begin{lemma}
\label{lem:optl_min}
    Let $A=\{a_1,\ldots,a_n\}$ be a sorted multiset of positive integers. It holds that
    $$\mathrm{OPT}(A)=\min_{k\leq j \leq n}\mathrm{OPT_L}\left(A[1,j]\right).$$
\end{lemma}

\begin{proof}
    Let $t\in\{k,\ldots,n\}$ be the maximum element\footnote{If $t<k$, all feasible solutions for both problems have infinite ratio.} contained in some optimal solution $S$ of the $\kSSRshort$ instance $A$. By Definition~\ref{def:kSSRL}, $S$ is a feasible solution for the $\kSSRLshort$ instance $A[1,t]$. This implies $\mathrm{OPT}(A)\geq\min_{k\leq j \leq n}\mathrm{OPT_L}\left(A[1,j]\right)$.

    Note that any feasible solution for a $\kSSRLshort$ instance $A[1,j]$ (for any $j\in\{k,\ldots,n\}$) is a feasible solution for the $\kSSRshort$ instance $A$. Thus, it also holds that $\mathrm{OPT}(A)\leq\min_{k\leq j \leq n}\mathrm{OPT_L}\left(A[1,j]\right)$.
\end{proof}

Lemma~\ref{lem:linear_inequality} provides the key structural insight of our algorithm. Intuitively, the fact that a $\kSSRLshort$ solution contains the largest element allows us to consider only the largest $O\bigl((1/\varepsilon)\ln(1/\varepsilon)\bigr)$ elements for each $\kSSRLshort$ instance $A[1,j]$, in order to obtain a $(1+\varepsilon)$-approximation of the optimal $\kSSRshort$ solution. 

\begin{lemma}[Largest elements]
    \label{lem:linear_inequality}
    For any sorted multiset $A=\{a_1,\ldots,a_n\}$ of positive integers and any $\varepsilon \in (0,1)$, the following holds:
    \begin{equation}\label{largest_elements_ineq}
        \mathrm{OPT}(A) \leq \min_{k\leq j \leq n}\mathrm{OPT_L}\left(A\bigl[j-C+1,j\bigr]\right) \leq (1+\varepsilon)\mathrm{OPT}(A),
    \end{equation}
    where $C=(c+1)(k-1)$ and $c= 1+\left\lceil\left(1+\frac1\varepsilon \right)\ln\frac{2(k-1)}{\varepsilon^2}\right\rceil$.
\end{lemma}

\begin{proof}
    Since $A\bigl[j-C+1,j\bigr]$ is a subset of $A$ and $\kSSRLshort$ is (by definition) a restricted version of $\kSSRshort$, it follows directly that
    \begin{equation*}
    \mathrm{OPT}(A) \leq \mathrm{OPT}\left(A\bigl[j-C+1,j\bigr]\right)
     \leq \mathrm{OPT_L}\left(A\bigl[j-C+1,j\bigr]\right).
    \end{equation*}
    Note that the above is true for any $j \in [n]$, therefore it is also true for the minimum. Thus, the first inequality of \eqref{largest_elements_ineq} holds. As for the second inequality of \eqref{largest_elements_ineq}, we distinguish between two cases.


    \emph{Case 1 (Dense case):} There exists some $v \in \{k,\dots,n\}$ s.t. $a_{v} \leq
    (1+\varepsilon)a_{v-k+1}$.  Since $C\ge k$, by setting $S_i = \{v-k+i\}$ for each $i\in[k]$, we obtain
    \begin{equation*}
    \min_{k\leq j \leq n} \mathrm{OPT_L}(A[j-C+1,j]) \leq \mathrm{OPT_L}(A[v-C+1,v]) \leq (1+\varepsilon) \leq (1+\varepsilon)\mathrm{OPT}(A)
    .\end{equation*}

    \emph{Case 2 (Sparse case):} For all $v \in \{k,\dots,n\}$ it holds that $a_v > (1+\varepsilon)a_{v-k+1}$. By repeatedly applying this inequality, it follows that for all $v\in \{k,\dots,n\}$ and all $i\geq 1$ such that $v-i(k-1)\geq 1$,
    \begin{equation}
    \label{sparse_ineq}
        a_{v-i(k-1)}<(1+\varepsilon)^{-i}a_v.
    \end{equation}
    Thus, for all such $v,i$ it holds that
\begin{align}
\begin{split}
    \textup{sum} \Bigl( A\bigl[v-(i+1)(k-1)+1,\  v-i(k-1)\bigr]\Bigr) &\leq (k-1)a_{v-i(k-1)}\\
    &<(k-1)(1+\varepsilon)^{-i}a_v. 
    \quad \mbox{[by Ineq.~}\eqref{sparse_ineq}\mbox{]}
    \label{interval_ineq}
\end{split}
\end{align}

    Consider the set $A\bigl[1,v-c(k-1)\bigr]$. Splitting this set into subsets of size (at most) $k-1$ and using \eqref{interval_ineq} for each of them yields
    \begin{equation}
    \label{geometric_series}
    \textup{sum} \Bigl( A\bigl[1,v-c(k-1)\bigr]\Bigr) \leq (k-1)\sum_{i=0}^{\infty}(1+\varepsilon)^{-(c+i)}a_v = \frac{(k-1)a_v}{\varepsilon(1+\varepsilon)^{c-1}},
    \end{equation}
    where the last step is calculated as the sum of a geometric series. Note that \eqref{geometric_series} holds even if $v-c(k-1)<1$, since then it would be $\textup{sum} \Bigl( A\bigl[1,v-c(k-1)\bigr]\Bigr)=0$.
    Substituting $c$ into \eqref{geometric_series} and using $(1+\varepsilon)^{1+1/\varepsilon}>e$ yields
    \begin{equation*}
        \textup{sum} \Bigl( A\bigl[1,v-c(k-1)\bigr]\Bigr)<\frac\varepsilon2\cdot a_v.
    \end{equation*}
    For any $k\leq j \leq n$, we choose $v=j-k+1$ to obtain

    \begin{equation}
    \label{final_inequality}
        \textup{sum} \Bigl( A\bigl[1,j-C\bigr]\Bigr)<\frac\varepsilon2\cdot a_{j-k+1}.
    \end{equation}

    Intuitively, inequality \eqref{final_inequality} is a bound for the sum of the $j-C$ \emph{smallest} elements of $A$. This inequality will be used to obtain a $(1+\varepsilon)$-approximation by removing these elements from an optimal $\kSSRLshort$ solution. 

    Let $S=(S_1,\dots,S_k)$ be an optimal solution for the $\kSSRLshort$ instance $A_j=A[1,j]$. We define
$$
S_M = \argmax_{S_i \in S} \Sigma(S_i,A_j)
\quad\mbox{and}\quad
S_m =  \argmin_{S_i \in S}\Sigma(S_i,A_j).
$$

Now consider the $\kSSRLshort$ instance $A_j' = A\bigl[j-C+1,j\bigr]$ (i.e.\ $A_j'$ contains the $C$ \emph{largest} elements of $A_j$). For each $S_i \in S$, we define $S'_i = S_i \cap [j-C+1,j]$, i.e. the subset of $S_i$ that contains elements $i$ such that $a_i \in A_j'$. We also define the respective $k$-tuple of sets, $S' = (S_1',\dots,S_k')$, and the following sets,
$$
S_{\mathcal{M}} = \argmax_{S_i \in S} \Sigma(S'_i,A_j')
\quad \textnormal{and} \quad
S_\mu =  \argmin_{S_i \in S}\Sigma(S'_i,A_j').
$$
Observe that, by the above definitions, $S'_{\mathcal{M}}$ is a set with maximum sum among all sets in $S'$ and $S'_\mu$ is a set with minimum sum among all sets in $S'$. By \eqref{final_inequality}, for all $i \in [k]$ it holds that $$\Sigma(S_i,A_j) - \Sigma(S'_i,A_j') \leq \textup{sum} \Bigl( A\bigl[1,j-C\bigr]\Bigr)<\frac{\varepsilon}{2}\cdot a_{j-k+1}.$$
Therefore, we obtain
\begin{equation} \label{results_of_trimming}
    \Sigma(S'_i,A_j') \leq \Sigma(S_i,A_j) \leq \Sigma(S'_i,A_j') + (\varepsilon/2)\cdot a_{j-k+1}, \; \forall i\in [k].
\end{equation}
We now obtain a bound for the ratio $\mathcal{R}(S',A_j')$ as follows.
\begin{align}
\begin{split}
\label{ratio_inequality_pro_max}
    \mathcal{R}(S',A_j') &= \frac{\Sigma\left(S'_{\mathcal{M}}, A_j'\right)}{\Sigma\left(S'_\mu,A_j' \right)} \leq \frac{\Sigma\left(S_{\mathcal{M}}, A_j\right)}{\Sigma\left(S_\mu, A_j\right) - \frac{\varepsilon}{2} \cdot a_{j-k+1}} \qquad \mbox{[by Ineq.~}\eqref{results_of_trimming}\mbox{]}\\
    &\leq \frac{\Sigma\left(S_M, A_j\right)}{\Sigma\left(S_m, A_j\right) - \frac{\varepsilon}{2} \cdot a_{j-k+1}}.
\end{split}
\end{align}

Recall that $j \in \bigcup_{i=1}^k S_i$ (by Definition~\ref{def:kSSRL}), which implies that $\Sigma(S_M,A_j) \geq a_j$. Let $S^s=\{S^s_1,\ldots,S^s_k\}$ be a solution consisting of $k$ singleton sets that contain the $k$ largest elements of $A_j$, i.e. $S^s_i = \{j-i+1\}$, $\forall i\in[k]$. Since $S^s$ is a feasible solution for $\kSSRLshort$ with ratio $a_{j}/a_{j-k+1}$, it holds that $\mathcal{R}(S,A_j)=\Sigma(S_M, A_j)/\Sigma(S_m, A_j) \leq a_{j}/a_{j-k+1}$. Thus, we have
$$\Sigma(S_m, A_j) \geq \Sigma(S_M, A_j) \cdot \frac{a_{j-k+1}}{a_{j}} \geq a_{j-k+1}.$$
Combining this with \eqref{ratio_inequality_pro_max}, we obtain 
$$\mathcal{R}(S',A_j') \leq \frac{\Sigma(S_M, A_j)}{\Sigma(S_m, A_j)\left(1-\frac{\varepsilon}{2}\right)} =\frac{2}{2-\varepsilon} \cdot \mathcal{R}(S,A_j).$$
Assuming $\varepsilon \in (0,1)$, this becomes $\mathcal{R}(S',A_j') \leq (1+\varepsilon) \mathcal{R}(S,A_j)$. Note that some set $S'_i$ contains $j$, so $S'$ is a feasible solution for the $\kSSRLshort$ instance $A_j'$, therefore
$$\mathrm{OPT_L}(A_j') \leq \mathcal{R}(S',A_j') \leq (1+\varepsilon) \mathcal{R}(S,A_j) = (1+\varepsilon)\mathrm{OPT_L}(A_j).$$
By the definitions of $A_j,A_j'$, we have the following for all $j \in \{k,\ldots,n\}$:
$$\mathrm{OPT_L}\left(A\bigl[j-C+1,j\bigr]\right) \leq  (1+\varepsilon)\mathrm{OPT_L}(A[1,j]).$$
Taking the minimum over all $j\in\{k,\ldots,n\}$ and using Lemma~\ref{lem:optl_min}, we have

$$\min_{k\leq j \leq n}\mathrm{OPT_L}\left(A\bigl[j-C+1,j\bigr]\right) \leq  (1+\varepsilon)\mathrm{OPT}(A).$$

\end{proof}

\begin{lemma}[Reduction]
\label{lem:kssrL_reduction}
    If there is a $(1+\varepsilon)$-approximation algorithm for $\kSSRLshort$ running in time $T_L(n,\varepsilon)$, then there is a $(1+\varepsilon)$-approximation algorithm for $\kSSRshort$ running in time
    $$
        O\left(nT_L\left(\frac{9k}{\varepsilon}\ln\frac{k}{\varepsilon},\ \frac\varepsilon3\right)\right).
    $$
\end{lemma}

\begin{proof}
    Let $A$ be a multiset of $n$ positive integers $a_1 \leq \ldots \leq a_n$. For each $j \in \{k,\dots,n\}$, consider $A_j'=A\bigl[j-C+1,j\bigr]$, where
    $$C=(c+1)(k-1) \qquad \textnormal{and} \qquad c= 1+\left\lceil\left(1+\frac1\varepsilon \right)\ln\frac{2(k-1)}{\varepsilon^2}\right\rceil.$$
    
    Using the given $(1+\varepsilon)$-approximation algorithm for each of the $\kSSRLshort$ instances $A_j'$, we receive a $k$-tuple of sets $S^j = (S^j_1,\dots, S^j_k)$ for which $\mathcal{R}\left(S^j,A_j'\right) \leq (1+\varepsilon)\mathrm{OPT_L}\left(A_j'\right)$. Thus,

    \begin{align*}
        \min_{k\leq j \leq n} \mathcal{R}\left(S^j,A_j'\right) \leq (1+\varepsilon)\min_{k\leq j \leq n}\mathrm{OPT_L}\left(A_j'\right)\leq (1+\varepsilon)^2\mathrm{OPT}(A). &&\mbox{[by Lemma~}\ref{lem:linear_inequality}\mbox{]}
    \end{align*}
    
    For all $j \in \{k,\ldots,n\}$, the set $A_j'$ contains at most $C$ elements, with
    \begin{align*}
    C=(k-1)\left(2+\left\lceil\left(1+\frac1\varepsilon \right)\ln\frac{2(k-1)}{\varepsilon^2}\right\rceil\right) <
    3k + k\left(1+\frac1\varepsilon \right)\ln\frac{2k}{\varepsilon^2}
    < \frac{9k}{\varepsilon}\ln\frac{k}{\varepsilon}
    ,\end{align*}
    assuming $k\geq 2$ and $\varepsilon \in (0,1)$. As such, we have to run the given algorithm on $O(n)$ $\kSSRLshort$ instances of size bounded by $(9k/\varepsilon)\ln(k/\varepsilon)$ in order to obtain a $(1+\varepsilon)^2$-approximation for $\mathrm{OPT}(A)$. For all $\varepsilon \in (0,1)$ it holds that $(1+\varepsilon)^2 \leq 1+3\varepsilon$. Therefore, by setting the error parameter to $\varepsilon/3$ instead of $\varepsilon$ in the approximation algorithm for $\kSSRLshort$, we obtain a $(1+\varepsilon)$-approximation for $\mathrm{OPT}(A)$ running in time
    \begin{equation*}
         O\left(nT_L\left(\frac{9k}{\varepsilon}\ln\frac{k}{\varepsilon},\ \frac\varepsilon3\right)\right).
    \end{equation*}
\end{proof}

We now present the main theorem of this section, which follows by using the FPTAS of Theorem~\ref{thrm:main} as the $(1+\varepsilon)$-approximation algorithm for $\kSSRLshort$ in Lemma~\ref{lem:kssrL_reduction}.

\begin{theorem}
\label{thrm:linear_fptas}
    There is an FPTAS for $\kSSRshort$ that runs in $\widetilde{O}(n/{\varepsilon^{3k-1}})$ time, where $\widetilde{O}$ hides $\mathrm{polylog}(1/\varepsilon)$ factors.
\end{theorem}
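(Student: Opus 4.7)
The plan is to combine Theorem~\ref{thrm:main} with the reduction of Lemma~\ref{lem:bri_reduction}. The latter states that a $(1+\varepsilon)$-approximation algorithm for $\kSSRLshort$ running in time $T_L(n,\varepsilon)$ yields a $(1+\varepsilon)$-approximation for $\kSSRshort$ running in time $O(n \cdot T_L((9k/\varepsilon)\ln(k/\varepsilon), \varepsilon/3))$, and the former provides an FPTAS for $\kSSRshort$ in time $O(n^{2k}/\varepsilon^{k-1})$. Thus, my plan is to first justify that Theorem~\ref{thrm:main} can serve as the inner $\kSSRLshort$ approximation required by Lemma~\ref{lem:bri_reduction}, and then carry out the resulting time-bound calculation.

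For the first step, the key observation is that the proof of Lemma~\ref{lem:bri_reduction} only uses the inequality $\mathcal{R}(S^j, A_j') \leq (1+\varepsilon)\mathrm{OPT_L}(A_j')$ for each $k$-tuple $S^j$ produced on the subinstance $A_j' = A[j-C+1, j]$; it does not require $S^j$ itself to be $\kSSRLshort$-feasible. Since any $\kSSRLshort$-feasible solution on $A_j'$ is also a $\kSSRshort$ solution on $A_j'$, we have $\mathrm{OPT}(A_j') \leq \mathrm{OPT_L}(A_j')$. Therefore, running the FPTAS of Theorem~\ref{thrm:main} on each $A_j'$ returns a valid disjoint $k$-tuple $S^j$ satisfying $\mathcal{R}(S^j, A_j') \leq (1+\varepsilon)\mathrm{OPT}(A_j') \leq (1+\varepsilon)\mathrm{OPT_L}(A_j')$, which is exactly the precondition required by Lemma~\ref{lem:bri_reduction}. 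Hence we may set $T_L(n,\varepsilon) = O(n^{2k}/\varepsilon^{k-1})$.

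Plugging this bound into Lemma~\ref{lem:bri_reduction} yields total running time
\[
O\left(n \cdot \frac{\bigl((9k/\varepsilon)\ln(k/\varepsilon)\bigr)^{2k}}{(\varepsilon/3)^{k-1}}\right) = O\left(\frac{n \cdot k^{2k} \ln^{2k}(k/\varepsilon)}{\varepsilon^{3k-1}}\right).
\]
Since $k$ is fixed, $k^{2k}$ is an absolute constant and $\ln^{2k}(k/\varepsilon) = \mathrm{polylog}(1/\varepsilon)$ is absorbed into $\widetilde{O}$, giving the claimed $\widetilde{O}(n/\varepsilon^{3k-1})$ bound, while the approximation guarantee is inherited verbatim from Lemma~\ref{lem:bri_reduction}. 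The only conceptually subtle point is the compatibility observation in the second paragraph, namely that plugging a $\kSSRshort$ FPTAS in place of a $\kSSRLshort$ FPTAS inside Lemma~\ref{lem:bri_reduction} is legitimate; the remaining work is a direct calculation.
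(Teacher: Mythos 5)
Your proposal is correct and follows essentially the same route as the paper: use the $\kSSRshort$ FPTAS of Theorem~\ref{thrm:main} as the subroutine in Lemma~\ref{lem:bri_reduction}, justify this via $\mathcal{R}(S^j,A_j')\leq(1+\varepsilon)\mathrm{OPT}(A_j')\leq(1+\varepsilon)\mathrm{OPT_L}(A_j')$, and then plug in the parameters. You correctly identify the one subtle point (the subroutine need not return an $\kSSRLshort$-feasible solution, only one meeting the ratio bound, and every returned tuple is still a feasible $\kSSRshort$ solution for $A$), which is exactly the observation the paper makes.
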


\begin{proof}
    Let $A$ be a $\kSSRshort$ instance. Suppose we use our $\kSSRshort$ FPTAS running in time $T_L(n,\varepsilon)=O({n^{2k}}/{\varepsilon^{k-1}})$ (see Theorem~\ref{thrm:main}) as a subroutine to solve the $\kSSRLshort$ instances $A_j'=A\bigl[j-C+1,j\bigr]$, $k\leq j\leq n$. This subroutine serves as a $(1+\varepsilon)$-approximation for $\kSSRLshort$, since for the solution $S^j = (S^j_1,\dots, S^j_k)$ returned it holds that
    $$\mathcal{R}\left(S^j,A_j'\right)\leq (1+\varepsilon)\mathrm{OPT}(A_j') \leq (1+\varepsilon)\mathrm{OPT_L}(A_j').$$

    Note that this subroutine does not take into account the $\kSSRLshort$ restriction $j \in \bigcup_{i=1}^k S_i^j$, therefore it might return an invalid $\kSSRLshort$ solution for $A_j'$. However, all solutions returned are feasible for the (unrestricted) $\kSSRshort$ instance $A$, therefore the solution which yields $\min_{k\leq j \leq n} \mathcal{R}\left(S^j,A_j'\right)$ is a feasible one. As such, by Lemma~\ref{lem:kssrL_reduction} we obtain another FPTAS for $\kSSRshort$ running in time

    $$O\left(nT_L\left(\frac{9k}{\varepsilon}\ln\frac{k}{\varepsilon},\ \frac\varepsilon3\right)\right)=\widetilde{O}\left(\frac n {\varepsilon^{3k-1}}\right).$$
\end{proof}

\section{Conclusion}
\label{sec:conclusion}

In this work we presented the first FPTASs for \kSSRshort\ with constant $k>2$. A natural open question that arises regards the existence of a PTAS for \kSSRshort\ when $k$ is a variable; this would require different techniques, since both of our \kSSRshort\ FPTASs contain factors exponential in $k$ in their time complexity. Note that such a PTAS already exists for \kPRshort\ by Lipton et al.~\cite{FD_LM}. 

We remark that our FPTAS in Section~\ref{sec:FPTAS_kSSR_alternative} is only faster than the one in Section~\ref{sec:FPTAS_kSSR} when $n \gg 1/\varepsilon$. This contrasts the current status of the FPTASs for $k=2$, where the state-of-the-art FPTAS by Bringmann~\cite{SSR_Bri} is the fastest without any tradeoff. This naturally leads to an open question regarding the existence of a stronger density argument that works for $k>2$ and is always faster than the DP-based approach.

Another interesting direction for future work would be to extend the methods of Alonistiotis et al.~\cite{SSR_Alo} for $k>2$, possibly leading to an improvement over the approximation schemes we presented in this work.
Even though the FPTASs presented by Alonistiotis et al.~\cite{SSR_Alo} are slower than Bringmann's FPTASs~\cite{SSR_Bri}, their techniques might prove useful if generalized to $k>2$, since Bringmann's density arguments do not seem to generalize well for $k>2$ (cf. Subsection~\ref{subsec:previous_work_SSR}).

Regarding $\kPRshort$, a meaningful direction involves extending our techniques to the more general \textsc{Minimum Envy-Ratio} problem, involving general additive valuations or at least some sub-class such as \emph{ordered} additive valuations. This could potentially achieve a better time complexity than that of Nguyen and Rothe's FPTAS~\cite{FD_NR} for \textsc{Minimum Envy-Ratio}, which works for general additive valuations.

Lastly, as stated in~\cite{SS_Bri}, it is unlikely for the current state-of-the-art FPTAS for SSR to be optimal. We restate the open problem of determining the optimal constant $c \geq 0$ such that SSR can be solved in time $O(n/\varepsilon^c)$ and extend this challenge to $\kSSRshort$. More generally, the development of lower bounds for \kSSRshort\ and \kPRshort\ would be particularly beneficial.

\subsubsection*{Acknowledgements}

This work has been partially supported by project MIS 5154714 of the National Recovery and Resilience Plan Greece 2.0 funded by the European Union under the NextGenerationEU Program.





\bibliography{./bibliography}

@inproceedings{SSR_MP,
  author       = {Nikolaos Melissinos and
                  Aris Pagourtzis},
  editor       = {Lusheng Wang and
                  Daming Zhu},
  title        = {A Faster {FPTAS} for the Subset-Sums Ratio Problem},
  booktitle    = {Computing and Combinatorics - 24th International Conference, {COCOON}
                  2018, Qing Dao, China, July 2-4, 2018, Proceedings},
  series       = {Lecture Notes in Computer Science},
  volume       = {10976},
  pages        = {602--614},
  publisher    = {Springer},
  address      = {Heidelberg},
  year         = {2018},
  doi          = {10.1007/978-3-319-94776-1\_50},
  timestamp    = {Mon, 26 Jun 2023 20:45:57 +0200},
  biburl       = {https://dblp.org/rec/conf/cocoon/MelissinosP18.bib},
  bibsource    = {dblp computer science bibliography, https://dblp.org}
}

@book{BOOK_Gar,
  author       = {M. R. Garey and
                  David S. Johnson},
  title        = {Computers and Intractability: {A} Guide to the Theory of {NP}-Completeness},
  publisher    = {W. H. Freeman},
  address      = {New York, NY, USA},
  year         = {1979},
  isbn         = {0-7167-1044-7},
  timestamp    = {Thu, 29 Mar 2007 16:16:40 +0200},
  biburl       = {https://dblp.org/rec/books/fm/GareyJ79.bib},
  bibsource    = {dblp computer science bibliography, https://dblp.org}
}

@article{BIN_Leu,
title = {Bin packing with restricted piece sizes},
journal = {Information Processing Letters},
volume = {31},
number = {3},
pages = {145-149},
year = {1989},
issn = {0020-0190},
doi = {https://doi.org/10.1016/0020-0190(89)90223-8},

author = {Joseph Y-T. Leung}
}

@article{SCH_Br,
author = {Deuermeyer, Bryan and Friesen, Donald and Langston, Michael},
year = {1982},
month = {06},
pages = {},
title = {Scheduling to Maximize the Minimum Processor Finish Time in a Multiprocessor System},
volume = {3},
journal = {SIAM Journal on Algebraic and Discrete Methods},
doi = {10.1137/0603019}
}

@article{SSR_Alo,
  author       = {Giannis Alonistiotis and
                  Antonis Antonopoulos and
                  Nikolaos Melissinos and
                  Aris Pagourtzis and
                  Stavros Petsalakis and
                  Manolis Vasilakis},
  title        = {Approximating subset sum ratio via partition computations},
  journal      = {Acta Informatica},
  volume       = {61},
  number       = {2},
  pages        = {101--113},
  year         = {2024},
  
  doi          = {10.1007/S00236-023-00451-7},
  timestamp    = {Sat, 08 Jun 2024 13:16:06 +0200},
  biburl       = {https://dblp.org/rec/journals/acta/AlonistiotisAMPPV24.bib},
  bibsource    = {dblp computer science bibliography, https://dblp.org}
}

@inproceedings{PAR_Lin,
author = {Chen, Lin and Lian, Jiayi and Mao, Yuchen and Zhang, Guochuan},
title = {Approximating Partition in Near-Linear Time},
year = {2024},
isbn = {9798400703836},
publisher = {Association for Computing Machinery},
address = {New York, NY, USA},

doi = {10.1145/3618260.3649727},
abstract = {We propose an O(n + 1/)-time {FPTAS} (Fully Polynomial-Time Approximation Scheme) for the classical Partition problem. This is the best possible (up to a polylogarithmic factor) assuming SETH (Strong Exponential Time Hypothesis) [Abboud, Bringmann, Hermelin, and Shabtay’22]. Prior to our work, the best known {FPTAS} for Partition runs in O(n + 1/5/4) time [Deng, Jin and Mao’23, Wu and Chen’22]. Our result is obtained by solving a more general problem of weakly approximating Subset Sum.},
booktitle = {Proceedings of the 56th Annual ACM Symposium on Theory of Computing},
pages = {307–318},
numpages = {12},
keywords = {Approximation scheme, Partition, Subset Sum},
location = {Vancouver, BC, Canada},
series = {STOC 2024}
}

@inproceedings{SSR_Bri,
  author       = {Karl Bringmann},
  title        = {Approximating Subset Sum Ratio faster than Subset Sum},
  booktitle    = {Proceedings of the 2024 {ACM-SIAM} Symposium on Discrete Algorithms,
                  {SODA} 2024, Alexandria, VA, USA, January 7-10, 2024},
  pages        = {1260--1277},
  publisher    = {{SIAM}},
  year         = {2024},
  address      = {Philadelphia, PA, USA},
  doi          = {10.1137/1.9781611977912.50},
  timestamp    = {Thu, 04 Apr 2024 19:38:10 +0200},
  biburl       = {https://dblp.org/rec/conf/soda/Bringmann24.bib},
  bibsource    = {dblp computer science bibliography, https://dblp.org}
}

@article{SSR_INTRODUCED,
  author       = {Gerhard J. Woeginger and
                  Zhongliang Yu},
  title        = {On the Equal-Subset-Sum Problem},
  journal      = {Inf. Process. Lett.},
  volume       = {42},
  number       = {6},
  pages        = {299--302},
  year         = {1992},
 
  doi          = {10.1016/0020-0190(92)90226-L},
  timestamp    = {Wed, 10 Apr 2024 07:17:20 +0200},
  biburl       = {https://dblp.org/rec/journals/ipl/WoegingerY92.bib},
  bibsource    = {dblp computer science bibliography, https://dblp.org}
}

@inproceedings{PAR_Muc,
  author       = {Marcin Mucha and
                  Karol Wegrzycki and
                  Michal Wlodarczyk},
  editor       = {Timothy M. Chan},
  title        = {A Subquadratic Approximation Scheme for Partition},
  booktitle    = {Proceedings of the Thirtieth Annual {ACM-SIAM} Symposium on Discrete
                  Algorithms, {SODA} 2019, San Diego, California, USA, January 6-9,
                  2019},
  pages        = {70--88},
  publisher    = {{SIAM}},
address      = {Philadelphia, PA, USA},
  year         = {2019},
  
  doi          = {10.1137/1.9781611975482.5},
  timestamp    = {Tue, 16 Aug 2022 23:04:28 +0200},
  biburl       = {https://dblp.org/rec/conf/soda/MuchaW019.bib},
  bibsource    = {dblp computer science bibliography, https://dblp.org}
}

@article{MUL_Sah,
  author       = {Sartaj Sahni},
  title        = {Algorithms for Scheduling Independent Tasks},
  journal      = {J. {ACM}},
  volume       = {23},
  number       = {1},
  pages        = {116--127},
  year         = {1976},
 
  doi          = {10.1145/321921.321934},
  timestamp    = {Tue, 06 Nov 2018 12:51:45 +0100},
  biburl       = {https://dblp.org/rec/journals/jacm/Sahni76.bib},
  bibsource    = {dblp computer science bibliography, https://dblp.org}
}

@article{MUL_Rec,
  author       = {Diego Recalde and
                  Daniel Sever{\'{\i}}n and
                  Ramiro Torres and
                  Polo Vaca},
  title        = {An exact approach for the balanced k-way partitioning problem with
                  weight constraints and its application to sports team realignment},
  journal      = {J. Comb. Optim.},
  volume       = {36},
  number       = {3},
  pages        = {916--936},
  year         = {2018},
 
  doi          = {10.1007/S10878-018-0254-1},
  timestamp    = {Thu, 24 Jun 2021 09:04:08 +0200},
  biburl       = {https://dblp.org/rec/journals/jco/RecaldeSTV18.bib},
  bibsource    = {dblp computer science bibliography, https://dblp.org}
}

@inproceedings{MUL_Kor2,
  author       = {Richard E. Korf},
  editor       = {Ariel Felner and
                  Nathan R. Sturtevant},
  title        = {Objective Functions for Multi-Way Number Partitioning},
  booktitle    = {Proceedings of the Third Annual Symposium on Combinatorial Search,
                  {SOCS} 2010, Stone Mountain, Atlanta, Georgia, USA, July 8-10, 2010},
  pages        = {71--72},
  publisher    = {{AAAI} Press},
  address      = {Washington DC, USA},
  year         = {2010},

  doi          = {10.1609/SOCS.V1I1.18172},
  timestamp    = {Mon, 18 Dec 2023 16:58:34 +0100},
  biburl       = {https://dblp.org/rec/conf/socs/Korf10.bib},
  bibsource    = {dblp computer science bibliography, https://dblp.org}
}

@article{MUL_SKM,
  author       = {Ethan L. Schreiber and
                  Richard E. Korf and
                  Michael D. Moffitt},
  title        = {Optimal Multi-Way Number Partitioning},
  journal      = {J. {ACM}},
  volume       = {65},
  number       = {4},
  pages        = {24:1--24:61},
  year         = {2018},

  doi          = {10.1145/3184400},
  timestamp    = {Sun, 19 Jan 2025 14:36:56 +0100},
  biburl       = {https://dblp.org/rec/journals/jacm/SchreiberKM18.bib},
  bibsource    = {dblp computer science bibliography, https://dblp.org}
}

@inproceedings{MUL_Bis,
  author       = {Samuel Bismuth and
                  Vladislav Makarov and
                  Erel Segal{-}Halevi and
                  Dana Shapira},
  editor       = {Juli{\'{a}}n Mestre and
                  Anthony Wirth},
  title        = {Partitioning Problems with Splittings and Interval Targets},
  booktitle    = {35th International Symposium on Algorithms and Computation, {ISAAC}
                  2024, December 8-11, 2024, Sydney, Australia},
  series       = {LIPIcs},
  volume       = {322},
  pages        = {12:1--12:15},
  publisher    = {Schloss Dagstuhl - Leibniz-Zentrum f{\"{u}}r Informatik},
  address      = {Wadern, Germany},
  year         = {2024},

  doi          = {10.4230/LIPICS.ISAAC.2024.12},
  timestamp    = {Wed, 04 Dec 2024 17:11:34 +0100},
  biburl       = {https://dblp.org/rec/conf/isaac/BismuthMSS24.bib},
  bibsource    = {dblp computer science bibliography, https://dblp.org}
}

@article{FPTAS_Woe,
  author       = {Gerhard J. Woeginger},
  title        = {When Does a Dynamic Programming Formulation Guarantee the Existence
                  of a Fully Polynomial Time Approximation Scheme ({FPTAS})?},
  journal      = {{INFORMS} J. Comput.},
  volume       = {12},
  number       = {1},
  pages        = {57--74},
  year         = {2000},

  doi          = {10.1287/IJOC.12.1.57.11901},
  timestamp    = {Sun, 15 Mar 2020 19:45:14 +0100},
  biburl       = {https://dblp.org/rec/journals/informs/Woeginger00.bib},
  bibsource    = {dblp computer science bibliography, https://dblp.org}
}

@article{SSR_Nan,
  author       = {Danupon Nanongkai},
  title        = {Simple {FPTAS} for the subset-sums ratio problem},
  journal      = {Inf. Process. Lett.},
  volume       = {113},
  number       = {19-21},
  pages        = {750--753},
  year         = {2013},

  doi          = {10.1016/J.IPL.2013.07.009},
  timestamp    = {Fri, 26 May 2017 22:54:45 +0200},
  biburl       = {https://dblp.org/rec/journals/ipl/Nanongkai13.bib},
  bibsource    = {dblp computer science bibliography, https://dblp.org}
}

@inproceedings{PAR_BN,
  author       = {Karl Bringmann and
                  Vasileios Nakos},
  editor       = {D{\'{a}}niel Marx},
  title        = {A Fine-Grained Perspective on Approximating Subset Sum and Partition},
  booktitle    = {Proceedings of the 2021 {ACM-SIAM} Symposium on Discrete Algorithms,
                  {SODA} 2021, Virtual Conference, January 10 - 13, 2021},
  pages        = {1797--1815},
  publisher    = {{SIAM}},
  address      = {Philadelphia, PA, USA},

  year         = {2021},

  doi          = {10.1137/1.9781611976465.108},
  timestamp    = {Thu, 15 Jul 2021 13:48:58 +0200},
  biburl       = {https://dblp.org/rec/conf/soda/BringmannN21.bib},
  bibsource    = {dblp computer science bibliography, https://dblp.org}
}

@article{KSS,
  author       = {Antonis Antonopoulos and
                  Aris Pagourtzis and
                  Stavros Petsalakis and
                  Manolis Vasilakis},
  title        = {Faster algorithms for k-subset sum and variations},
  journal      = {J. Comb. Optim.},
  volume       = {45},
  number       = {1},
  pages        = {24},
  year         = {2023},

  doi          = {10.1007/S10878-022-00928-0},
  timestamp    = {Thu, 05 Jan 2023 17:09:13 +0100},
  biburl       = {https://dblp.org/rec/journals/jco/AntonopoulosPPV23.bib},
  bibsource    = {dblp computer science bibliography, https://dblp.org}
}

@article{HOCH_APPROX,
author = {Hochbaum, Dorit S. and Shmoys, David B.},
title = {Using dual approximation algorithms for scheduling problems theoretical and practical results},
year = {1987},
issue_date = {Jan. 1987},
publisher = {Association for Computing Machinery},
address = {New York, NY, USA},
volume = {34},
number = {1},
issn = {0004-5411},

doi = {10.1145/7531.7535},
abstract = {The problem of scheduling a set of n jobs on m identical machines so as to minimize the makespan time is perhaps the most well-studied problem in the theory of approximation algorithms for {NP}-hard optimization problems. In this paper the strongest possible type of result for this problem, a polynomial approximation scheme, is presented. More precisely, for each ε, an algorithm that runs in time O((n/ε)1/ε2) and has relative error at most ε is given. In addition, more practical algorithms for ε = 1/5 + 2-k and ε = 1/6 + 2-k, which have running times O(n(k + log n)) and O(n(km4 + log n)) are presented. The techniques of analysis used in proving these results are extremely simple, especially in comparison with the baroque weighting techniques used previously.The scheme is based on a new approach to constructing approximation algorithms, which is called dual approximation algorithms, where the aim is to find superoptimal, but infeasible, solutions, and the performance is measured by the degree of infeasibility allowed. This notion should find wide applicability in its own right and should be considered for any optimization problem where traditional approximation algorithms have been particularly elusive.},
journal = {J. ACM},
month = jan,
pages = {144–162},
numpages = {19}
}

@article{SSR_Baz,
  author       = {Cristina Bazgan and
                  Miklos Santha and
                  Zsolt Tuza},
  title        = {Efficient Approximation Algorithms for the {SUBSET-SUMS} {EQUALITY}
                  Problem},
  journal      = {J. Comput. Syst. Sci.},
  volume       = {64},
  number       = {2},
  pages        = {160--170},
  year         = {2002},

  doi          = {10.1006/JCSS.2001.1784},
  timestamp    = {Tue, 16 Feb 2021 14:03:51 +0100},
  biburl       = {https://dblp.org/rec/journals/jcss/BazganST02.bib},
  bibsource    = {dblp computer science bibliography, https://dblp.org}
}

@inproceedings{ESS_Muc,
  author       = {Marcin Mucha and
                  Jesper Nederlof and
                  Jakub Pawlewicz and
                  Karol Wegrzycki},
  editor       = {Michael A. Bender and
                  Ola Svensson and
                  Grzegorz Herman},
  title        = {Equal-Subset-Sum Faster Than the Meet-in-the-Middle},
  booktitle    = {27th Annual European Symposium on Algorithms, {ESA} 2019, September
                  9-11, 2019, Munich/Garching, Germany},
  series       = {LIPIcs},
  volume       = {144},
  pages        = {73:1--73:16},
  publisher    = {Schloss Dagstuhl - Leibniz-Zentrum f{\"{u}}r Informatik},
  address      = {Wadern, Germany},
  year         = {2019},

  doi          = {10.4230/LIPICS.ESA.2019.73},
  timestamp    = {Wed, 21 Aug 2024 22:46:00 +0200},
  biburl       = {https://dblp.org/rec/conf/esa/MuchaNPW19.bib},
  bibsource    = {dblp computer science bibliography, https://dblp.org}
}

@article{FD_NR,
  author       = {Trung Thanh Nguyen and
                  J{\"{o}}rg Rothe},
  title        = {Minimizing envy and maximizing average Nash social welfare in the
                  allocation of indivisible goods},
  journal      = {Discret. Appl. Math.},
  volume       = {179},
  pages        = {54--68},
  year         = {2014},

  doi          = {10.1016/J.DAM.2014.09.010},
  timestamp    = {Thu, 11 Feb 2021 23:26:00 +0100},
  biburl       = {https://dblp.org/rec/journals/dam/0004R14.bib},
  bibsource    = {dblp computer science bibliography, https://dblp.org}
}

@inproceedings{FD_LM,
  author       = {Richard J. Lipton and
                  Evangelos Markakis and
                  Elchanan Mossel and
                  Amin Saberi},
  editor       = {Jack S. Breese and
                  Joan Feigenbaum and
                  Margo I. Seltzer},
  title        = {On approximately fair allocations of indivisible goods},
  booktitle    = {Proceedings 5th {ACM} Conference on Electronic Commerce (EC-2004),
                  New York, NY, USA, May 17-20, 2004},
  pages        = {125--131},
  publisher    = {{ACM}},
  address      = {New York, NY, USA},
  year         = {2004},

  doi          = {10.1145/988772.988792},
  timestamp    = {Tue, 27 Nov 2018 11:56:48 +0100},
  biburl       = {https://dblp.org/rec/conf/sigecom/LiptonMMS04.bib},
  bibsource    = {dblp computer science bibliography, https://dblp.org}
}

@article{SS_Kel,
  author       = {Hans Kellerer and
                  Renata Mansini and
                  Ulrich Pferschy and
                  Maria Grazia Speranza},
  title        = {An efficient fully polynomial approximation scheme for the Subset-Sum
                  Problem},
  journal      = {J. Comput. Syst. Sci.},
  volume       = {66},
  number       = {2},
  pages        = {349--370},
  year         = {2003},

  doi          = {10.1016/S0022-0000(03)00006-0},
  timestamp    = {Tue, 16 Feb 2021 14:04:03 +0100},
  biburl       = {https://dblp.org/rec/journals/jcss/KellererMPS03.bib},
  bibsource    = {dblp computer science bibliography, https://dblp.org}
}

@misc{PAR_NEW,
      title={A Note on Deterministic {FPTAS} for Partition}, 
      author={Lin Chen and Jiayi Lian and Yuchen Mao and Guochuan Zhang},
      year={2025},
      eprint={2501.12848},
      archivePrefix={arXiv},
      primaryClass={cs.DS},
      url={https://arxiv.org/abs/2501.12848}, 
}

@article{SS_Gens,
author = {George Gens and Eugene Levner},
title = {A Fast Approximation Algorithm For The Subset-Sum Problem},
journal = {INFOR: Information Systems and Operational Research},
volume = {32},
number = {3},
pages = {143--148},
year = {1994},
publisher = {Taylor \& Francis},
doi = {10.1080/03155986.1994.11732245},
eprint = { 
        https://doi.org/10.1080/03155986.1994.11732245
}
}

@incollection{MUL_SM,
  author       = {Stephan Mertens},
  editor       = {Allon G. Percus and
                  Gabriel Istrate and
                  Cristopher Moore},
  title        = {The Easiest Hard Problem: Number Partitioning},
  booktitle    = {Computational Complexity and Statistical Physics},
  series       = {Santa Fe Institute Studies in the Sciences of Complexity},
  pages        = {125--140},
  publisher    = {Oxford University Press},
  address      = {Oxford, England},
  year         = {2006},
  timestamp    = {Fri, 08 Jul 2022 12:53:00 +0200},
  biburl       ={https://dblp.org/rec/books/ox/06/Mertens06.bib},
  bibsource    = {dblp computer science bibliography, https://dblp.org}
}

@article{FD_TH,
author = {Theodore P. Hill},
title = {{Partitioning General Probability Measures}},
volume = {15},
journal = {The Annals of Probability},
number = {2},
publisher = {Institute of Mathematical Statistics},
pages = {804 -- 813},
keywords = {atomic probability measures, cake-cutting, fair division problems, minimax decision rules, Optimal-partitioning inequalities},
year = {1987},
doi = {10.1214/aop/1176992173},

}

@inproceedings{FD_EB,
  author       = {Eric Budish},
  editor       = {Moshe Dror and
                  Greys Sosic},
  title        = {The combinatorial assignment problem: approximate competitive equilibrium
                  from equal incomes},
  booktitle    = {Proceedings of the Behavioral and Quantitative Game Theory - Conference
                  on Future Directions, {BQGT} '10, Newport Beach, California, USA,
                  May 14-16, 2010},
  pages        = {74:1},
  publisher    = {{ACM}},
  address       = {New York, NY, USA},
  year         = {2010},

  doi          = {10.1145/1807406.1807480},
  timestamp    = {Tue, 06 Nov 2018 16:57:10 +0100},
  biburl       = {https://dblp.org/rec/conf/bqgt/Budish10.bib},
  bibsource    = {dblp computer science bibliography, https://dblp.org}
}

@article{FD_KPW,
  author       = {David Kurokawa and
                  Ariel D. Procaccia and
                  Junxing Wang},
  title        = {Fair Enough: Guaranteeing Approximate Maximin Shares},
  journal      = {J. {ACM}},
  volume       = {65},
  number       = {2},
  pages        = {8:1--8:27},
  year         = {2018},

  doi          = {10.1145/3140756},
  timestamp    = {Tue, 06 Nov 2018 12:51:45 +0100},
  biburl       = {https://dblp.org/rec/journals/jacm/KurokawaPW18.bib},
  bibsource    = {dblp computer science bibliography, https://dblp.org}
}

@inproceedings{PD_Ciel2003,
  author       = {Mark Cieliebak and
                  Stephan J. Eidenbenz and
                  Paolo Penna},
  editor       = {Gary Benson and
                  Roderic D. M. Page},
  title        = {Noisy Data Make the Partial Digest Problem {NP}-hard},
  booktitle    = {Algorithms in Bioinformatics, Third International Workshop, {WABI}
                  2003, Budapest, Hungary, September 15-20, 2003, Proceedings},
  series       = {Lecture Notes in Computer Science},
  volume       = {2812},
  pages        = {111--123},
  publisher    = {Springer},
  address      = {Heidelberg},
  year         = {2003},

  doi          = {10.1007/978-3-540-39763-2\_9},
  timestamp    = {Wed, 11 Sep 2019 13:15:53 +0200},
  biburl       = {https://dblp.org/rec/conf/wabi/CieliebakEP03.bib},
  bibsource    = {dblp computer science bibliography, https://dblp.org}
}

@inproceedings{PD_Ciel2004,
  author       = {Mark Cieliebak and
                  Stephan J. Eidenbenz},
  editor       = {Martin Farach{-}Colton},
  title        = {Measurement Errors Make the Partial Digest Problem {NP}-Hard},
  booktitle    = {{LATIN} 2004: Theoretical Informatics, 6th Latin American Symposium,
                  Buenos Aires, Argentina, April 5-8, 2004, Proceedings},
  series       = {Lecture Notes in Computer Science},
  volume       = {2976},
  pages        = {379--390},
  publisher    = {Springer},
  address      = {Heidelberg},
  year         = {2004},

  doi          = {10.1007/978-3-540-24698-5\_42},
  timestamp    = {Fri, 09 Apr 2021 18:42:35 +0200},
  biburl       = {https://dblp.org/rec/conf/latin/CieliebakE04.bib},
  bibsource    = {dblp computer science bibliography, https://dblp.org},
}

@article{PPP_Pap,
  author       = {Christos H. Papadimitriou},
  title        = {On the Complexity of the Parity Argument and Other Inefficient Proofs
                  of Existence},
  journal      = {J. Comput. Syst. Sci.},
  volume       = {48},
  number       = {3},
  pages        = {498--532},
  year         = {1994},

  doi          = {10.1016/S0022-0000(05)80063-7},
  timestamp    = {Tue, 16 Feb 2021 14:04:40 +0100},
  biburl       = {https://dblp.org/rec/journals/jcss/Papadimitriou94.bib},
  bibsource    = {dblp computer science bibliography, https://dblp.org}
}

@article{SS_KX,
  author       = {Konstantinos Koiliaris and
                  Chao Xu},
  title        = {Faster Pseudopolynomial Time Algorithms for Subset Sum},
  journal      = {{ACM} Trans. Algorithms},
  volume       = {15},
  number       = {3},
  pages        = {40:1--40:20},
  year         = {2019},

  doi          = {10.1145/3329863},
  timestamp    = {Sat, 08 Jan 2022 02:22:07 +0100},
  biburl       = {https://dblp.org/rec/journals/talg/KoiliarisX19.bib},
  bibsource    = {dblp computer science bibliography, https://dblp.org}
}

@inproceedings{SS_Bri,
  author       = {Karl Bringmann},
  editor       = {Philip N. Klein},
  title        = {A Near-Linear Pseudopolynomial Time Algorithm for Subset Sum},
  booktitle    = {Proceedings of the Twenty-Eighth Annual {ACM-SIAM} Symposium on Discrete
                  Algorithms, {SODA} 2017, Barcelona, Spain, Hotel Porta Fira, January
                  16-19},
  pages        = {1073--1084},
  publisher    = {{SIAM}},
  address      = {Philadelphia, PA, USA},
  year         = {2017},

  doi          = {10.1137/1.9781611974782.69},
  timestamp    = {Tue, 02 Feb 2021 17:07:33 +0100},
  biburl       = {https://dblp.org/rec/conf/soda/Bringmann17.bib},
  bibsource    = {dblp computer science bibliography, https://dblp.org}
}

@inproceedings{SS_JW,
  author       = {Ce Jin and
                  Hongxun Wu},
  editor       = {Jeremy T. Fineman and
                  Michael Mitzenmacher},
  title        = {A Simple Near-Linear Pseudopolynomial Time Randomized Algorithm for
                  Subset Sum},
  booktitle    = {2nd Symposium on Simplicity in Algorithms, {SOSA} 2019, January 8-9,
                  2019, San Diego, CA, {USA}},
  series       = {OASIcs},
  volume       = {69},
  pages        = {17:1--17:6},
  publisher    = {Schloss Dagstuhl - Leibniz-Zentrum f{\"{u}}r Informatik},
  address      = {Wadern, Germany},
  year         = {2019},

  doi          = {10.4230/OASICS.SOSA.2019.17},
  timestamp    = {Wed, 21 Aug 2024 22:46:00 +0200},
  biburl       = {https://dblp.org/rec/conf/soda/JinW19.bib},
  bibsource    = {dblp computer science bibliography, https://dblp.org}
}

@book{Bel,
    author  = {Richard E. Bellman},
    title   = {Dynamic programming},
    publisher = {Princeton University Press},
    year    = {1957},
    address = {Princeton, NJ},
}

@inproceedings{EFX_Amanat,
author = {Amanatidis, Georgios and Filos-Ratsikas, Aris and Sgouritsa, Alkmini},
title = {Pushing the Frontier on Approximate {EFX} Allocations},
year = {2024},
isbn = {9798400707049},
publisher = {Association for Computing Machinery},
address = {New York, NY, USA},

doi = {10.1145/3670865.3673582},
booktitle = {Proceedings of the 25th ACM Conference on Economics and Computation},
pages = {1268–1286},
numpages = {19},
keywords = {fair division, envy-freeness up to any good},
location = {New Haven, CT, USA},
series = {EC '24}
}

@inproceedings{kVisits,
  author       = {Sotiris Kanellopoulos and
                  Christos Pergaminelis and
                  Maria Kokkou and
                  Euripides Markou and
                  Aris Pagourtzis},
  editor       = {Kasper Green Larsen and
                  Barna Saha},
  title        = {Finite Pinwheel Scheduling: the k-Visits Problem},
  booktitle    = {Proceedings of the 2026 Annual {ACM-SIAM} Symposium on Discrete Algorithms,
                  {SODA} 2026, Vancouver, BC, Canada, January 11-14, 2026},
  pages        = {355--371},
  publisher    = {{SIAM}},
  address      = {Philadelphia, PA, USA},
  year         = {2026},
  url          = {https://doi.org/10.1137/1.9781611978971.16},
  doi          = {10.1137/1.9781611978971.16},
  timestamp    = {Thu, 19 Feb 2026 16:57:53 +0100},
  biburl       = {https://dblp.org/rec/conf/soda/KanellopoulosPK26.bib},
  bibsource    = {dblp computer science bibliography, https://dblp.org}
}

@article{SCH_Coff,
author = {Coffman, Ed and Langston, Michael},
year = {1984},
month = {11},
pages = {409-415},
title = {A performance guarantee for the greedy set-partitioning algorithm},
volume = {21},
journal = {Acta Informatica},
doi = {10.1007/BF00264618}
}

@inproceedings{EFX_Haj,
author = {Springer, Max and Hajiaghayi, Mohammad Taghi and Yami, Hadi},
title = {Almost envy-free allocations of indivisible goods or chores with entitlements},
year = {2024},
isbn = {978-1-57735-887-9},
publisher = {AAAI Press},
address = {Washington DC, USA},
doi = {10.1609/aaai.v38i9.28851},
booktitle = {Proceedings of the Thirty-Eighth AAAI Conference on Artificial Intelligence and Thirty-Sixth Conference on Innovative Applications of Artificial Intelligence and Fourteenth Symposium on Educational Advances in Artificial Intelligence},
articleno = {1103},
numpages = {8},
series = {AAAI'24/IAAI'24/EAAI'24}
}

@inproceedings{EFX_Zhou,
author = {Zhou, Houyu and Wei, Tianze and Tao, Biaoshuai and Li, Minming},
title = {Fair allocation of items in multiple regions},
year = {2024},
isbn = {978-1-57735-887-9},
publisher = {AAAI Press},
address = {Washington DC, USA},
doi = {10.1609/aaai.v38i9.28861},
booktitle = {Proceedings of the Thirty-Eighth AAAI Conference on Artificial Intelligence and Thirty-Sixth Conference on Innovative Applications of Artificial Intelligence and Fourteenth Symposium on Educational Advances in Artificial Intelligence},
articleno = {1113},
numpages = {8},
series = {AAAI'24/IAAI'24/EAAI'24}
}

@inproceedings{SS_CLMZ,
  author       = {Lin Chen and
                  Jiayi Lian and
                  Yuchen Mao and
                  Guochuan Zhang},
  title        = {An Improved Pseudopolynomial Time Algorithm for Subset Sum},
  booktitle    = {65th {IEEE} Annual Symposium on Foundations of Computer Science, {FOCS}
                  2024, Chicago, IL, USA, October 27-30, 2024},
  pages        = {2202--2216},
  publisher    = {{IEEE}},
  address      = {New York, NY, USA},
  year         = {2024},

  doi          = {10.1109/FOCS61266.2024.00129},
  timestamp    = {Tue, 10 Dec 2024 07:54:49 +0100},
  biburl       = {https://dblp.org/rec/conf/focs/ChenL0Z24.bib},
  bibsource    = {dblp computer science bibliography, https://dblp.org}
}

@article{ZVVH24,
  author       = {Yair Zadok and
                  Nadav Voloch and
                  Noa Voloch{-}Bloch and
                  Maor Meir Hajaj},
  title        = {Multiple Subset Problem as an encryption scheme for communication},
  journal      = {CoRR},
  volume       = {abs/2401.09221},
  year         = {2024},
  doi          = {10.48550/ARXIV.2401.09221},
  eprinttype    = {arXiv},
  eprint       = {2401.09221},
  timestamp    = {Thu, 01 Feb 2024 15:35:36 +0100},
  biburl       = {https://dblp.org/rec/journals/corr/abs-2401-09221.bib},
  bibsource    = {dblp computer science bibliography, https://dblp.org}
}

@misc{DFR2025,
      title={Succinct Ambiguous Contracts}, 
      author={Paul Duetting and Michal Feldman and Yarden Rashti},
      year={2025},
      eprint={2503.02592},
      archivePrefix={arXiv},
      primaryClass={cs.GT},
      url={https://arxiv.org/abs/2503.02592}, 
}

@inproceedings{ISAAC_version,
  author       = {Sotiris Kanellopoulos and
                  Giorgos Mitropoulos and
                  Antonis Antonopoulos and
                  Nikos Leonardos and
                  Aris Pagourtzis and
                  Christos Pergaminelis and
                  Stavros Petsalakis and
                  Kanellos Tsitouras},
  editor       = {Ho{-}Lin Chen and
                  Wing{-}Kai Hon and
                  Meng{-}Tsung Tsai},
  title        = {Approximation Schemes for k-Subset Sum Ratio and k-Way Number Partitioning
                  Ratio},
  booktitle    = {36th International Symposium on Algorithms and Computation, {ISAAC}
                  2025, Tainan, Taiwan, December 7-10, 2025},
  series       = {LIPIcs},
  volume       = {359},
  pages        = {44:1--44:22},
  publisher    = {Schloss Dagstuhl - Leibniz-Zentrum f{\"{u}}r Informatik},
  address      = {Wadern, Germany},
  year         = {2025},
  doi          = {10.4230/LIPICS.ISAAC.2025.44},
  timestamp    = {Sun, 07 Dec 2025 22:11:38 +0100},
  biburl       = {https://dblp.org/rec/conf/isaac/KanellopoulosMA25.bib},
  bibsource    = {dblp computer science bibliography, https://dblp.org}
}

@article{EFX_param,
  author       = {Sotiris Kanellopoulos and
                  Edouard Nemery and
                  Christos Pergaminelis and
                  Minas Marios Sotiriou and
                  Manolis Vasilakis},
  title        = {{EF(X)} Orientations: {A} Parameterized Complexity Perspective},
  journal      = {CoRR},
  volume       = {abs/2512.25033},
  year         = {2025},
  url          = {https://doi.org/10.48550/arXiv.2512.25033},
  doi          = {10.48550/ARXIV.2512.25033},
  eprinttype   = {arXiv},
  eprint       = {2512.25033},
  timestamp    = {Mon, 26 Jan 2026 16:33:04 +0100},
  biburl       = {https://dblp.org/rec/journals/corr/abs-2512-25033.bib},
  bibsource    = {dblp computer science bibliography, https://dblp.org}
}

\end{document}